\newtheorem{theorem}{Theorem}
\newtheorem{corollary}[theorem]{Corollary}
\newtheorem{lemma}[theorem]{Lemma}
\newtheorem{definition}[theorem]{Definition}
\newtheorem{proposition}[theorem]{Proposition}
\newtheorem{assumption}[theorem]{Assumption}
\numberwithin{equation}{section}
\numberwithin{theorem}{section}
\newcommand{\rr}{{\mathbb{R}}}
\newcommand{\nn}{{\mathbb{N}}}
\newcommand{\ee}{{\mathbb{E}\,}}
\newcommand{\pp}{{\mathbb{P}}}
\newcommand{\tr}{{\operatorname{Tr}\,}}
\newcommand{\beq}[1]{\begin{equation} \label{#1}}
\newcommand{\eeq}{\end{equation}}
\newcommand{\arcosh}{\operatorname{arcosh}}
\begin{document}
	\addtokomafont{author}{\raggedright}

	\title{ \raggedright Generalized Random Energy Models in a Transversal Magnetic Field: \\ Free Energy and Phase Diagrams}
	
     \author{\hspace{-.075in} Chokri Manai and Simone Warzel}
     \date{\vspace{-.3in}}
	
	\maketitle

	\minisec{Abstract}
	We determine explicit variational expressions for the free energy of mean-field spin glasses in a transversal magnetic field, whose glass interaction is given by a hierarchical Gaussian potential as in Derrida's Generalized Random Energy Model (GREM), its continuous version (CREM) or the non-hierarchical GREM. The corresponding phase diagrams, which generally include glass transitions as well as magnetic transitions, are discussed. In the glass phase, the free energy is generally determined by both the parameters of the classical model and the transversal field. 

\section{Introduction}\label{sec:intro}
	Studying the fate of spin glass physics with respect to quantum effects induced by a transversal field has been a topic of continuing interest in the physics community. 
	In the past 10 years this subject received an additional boost due to its relevance as a testing ground for quantum adiabatic algorithms and for many-body localised systems (see e.g.~\cite{BFKSZ13,LPS14,AW16,Bur17}). 
	
	Ever since exact solutions of the free energy of mean-field spin glasses became available \cite{Par80},  Parisi's famous replica calculations \cite{MPV86} have been extended to approximations of the quantum free energy. 	
	Notwithstanding numerous works (see e.g.~\cite{BM80,FS86,Us86,Gold90,ONS07} and references in~\cite{SIC12}), an ultimate consensus on various aspects of quantum spin glasses  such as the quantum Sherrington-Kirkpatrick (SK) model seems to be lacking even from the physics point of view. 
	
	Although the theory of classical mean-field spin glasses became an established branch of probability \cite{Tal11,Bov06}, efforts of mathematicians in this area are so far rather limited. Crawford lay the foundations with generalising Guerra and  Toninelli's proof \cite{GT02} of the existence of the free-energy to the quantum SK model \cite{Craw07}. 
	For this model a generalisation of the high-temperature analysis of \cite{ALR87} was achieved recently in \cite{LRRS19}.   Adhikari and Brennecke \cite{AB19}  used a path-integral approach  and Parisi's formula for vector-spin models to rewrite the free energy of the quantum SK model as a rather involved variational problem in terms of infinite-dimensional path overlaps. 
	
	The main aim of this work is to derive reasonably explicit variational expressions, which allow us to determine the structure of the phase diagram, for the quantum versions of three classic hierarchical mean-field spin glasses: i)~Derrida's Generalized Random Energy Model (GREM) \cite{Der85}, ii)~its continuous version (CREM), and 
	iii)~the non-hierarchical GREM by Bolthausen and Kistler~\cite{BK06}. 
	 These models were invented so to incorporate the effects of correlations of energy levels into the oversimplified Random Energy Model (REM) \cite{Der80,Der81}. 
	 Remarkably, the GREM's and CREM's built-in ultra-metric structure is shared by Parisi's solution of the SK mean-field spin glass which received mathematical blessing through Talagrand's proof \cite{Tal06}.

	 Although the built-in ultra-metric structure and the prearrangement of species in the GREM or CREM is somewhat artificial, it is nevertheless surprising that no physics prediction exists for the quantum version of these classic hierarchical mean-field spin glasses. All the more so since Goldschmidt \cite{Gold90} already in 1990 presented his formula for the free energy of the quantum REM, which was recently confirmed through a mathematical analysis~\cite{MW20}.  This gap is closed with the present paper. What we find are formulae, which express the principle that the species decide in within the groups whether to collectively follow the transversal field or stay in their classical order. The free energy is then computed as a minimum over all group decompositions. 
	 As a mathematical technique, we dubbed this principle hierarchical peeling. It is based purely on a combination of a probabilistic-geometric decomposition of the spin-configuration space and operator-theoretic techniques, which are further developments of  ideas in~\cite{MW20,MW19}. In passing, we also generalise  basic interpolation techniques to the quantum set-up. These main new technical tools are presented in Section~\ref{sec:peeling}. 

	We start the paper with a short introduction to classical hierarchical models, for which the quantum free energy is then presented in a subsequent subsection. The introduction closes with a discussion of the non-hierarchical GREM and its quantum Parisi-type formula. The proofs of the novel quantum formulae is then postponed to Section~\ref{sec:proofmain}.

	\subsection{Classical GREM and CREM}\label{subsec:GREM}
	The GREM and CREM potential $U$ is a centered Gaussian process on the Hamming cube 
	$\mathcal{Q}_N \coloneqq \{-1,1\}^N$, whose covariance matrix is given by
	\begin{equation}\label{eq:ucrem}
	\ee[U(\pmb{\sigma}) U(\pmb{\sigma}^\prime)] = N A(q(\pmb{\sigma},\pmb{\sigma}^\prime)),
	\end{equation}
	where $A \coloneqq [0,1] \to [0,1]$ is a non-decreasing, right-continuous, and normalized function, i.e., $A(1) =1$,
	which does not depend on $N$. Moreover, $q$ denotes the  normalized lexicographic overlap of spin configurations 
	$ \pmb{\sigma}, \pmb{\sigma}' \in \mathcal{Q}_N $, i.e.
	\begin{equation}\label{eq:overlap}
	q(\pmb{\sigma},\pmb{\sigma}^\prime) \coloneqq \begin{cases}
	1 & \text{ if } \pmb{\sigma} = \pmb{\sigma}^\prime, \\
	\frac1N (\min \{1 \leq i \leq N; \sigma_i \neq \sigma^\prime_i \} -1)& \text{ else }. \end{cases} 
	\end{equation}
	The induced metric $ d(\pmb{\sigma},\pmb{\sigma}^\prime)= \ee[|U(\pmb{\sigma}) - U(\pmb{\sigma}^\prime)|^2]^{1/2}   $ on the Hamming cube is an ultrametric. \\

   In the GREM one further assumes that the distribution function $A$ is a step function with $ n \in \mathbb{N} $ jumps of height $a_k$ at the values $0 = x_0 < x_1 < x_2 < \cdots < x_n = 1$. The Gaussian potential $U$ can then be expressed in terms of independent standard Gaussian variables. To this end, one decomposes $\pmb{\sigma} \in \mathcal{Q}_N$ into $ n $ blocks ('species'), $\pmb{\sigma} = (\pmb{\sigma}_1,\ldots,\pmb{\sigma}_n)$, each if which is represented by a  spin vector on a reduced Hamming cube, 
   \begin{equation}
   \pmb{\sigma}_k \in \mathcal{Q}^{(k)}_N :=  \{-1,1\}^{\lceil x_k N \rceil - \lceil x_{k-1} N \rceil}, \quad k \in \{ 1, \dots , n \} .
   \end{equation}
Introducing independent standard Gaussian variables $X_{\pmb{\sigma}_1}, X_{\pmb{\sigma}_1\pmb{\sigma}_2}, \ldots, X_{\pmb{\sigma}_1\pmb{\sigma}_2\cdots \pmb{\sigma}_n}$ one then rewrites 
      \begin{equation}\label{eq:ugrem}
   U(\pmb{\sigma}) = \sqrt{N}(\sqrt{a_1} X_{\pmb{\sigma}_1} + \sqrt{a_2}X_{\pmb{\sigma}_1\pmb{\sigma}_2} + \cdots + \sqrt{a_n}X_{\pmb{\sigma}_1\pmb{\sigma}_2\cdots \pmb{\sigma}_n}) 
   \end{equation}
   in the sense of distributional equality.
   The pressure or negative free energy
   \[ \Phi_N(\beta) \coloneqq \frac1N \ln Z_N(\beta)  \] 
   is given in terms of the partition function $  Z_N(\beta) \coloneqq \sum_{\pmb{\sigma} \in \mathcal{Q}_N} e^{-\beta U(\pmb{\sigma})} $, and 
   converges for any distribution function $A$ almost surely \cite{CCP87}.
   The limit depends on the concave hull $\bar{A}$ of $A$, i.e.\ the smallest concave function which is greater or equal than $A$. In the GREM, the concave hull  $\bar{A}$ is a piecewise linear function determined by the values $
   \{y_1,\ldots,y_m  \} \subset \{x_1,\ldots,x_n  \}$ where $A$ and $\bar{A}$ agree. Let us further introduce the following quantities: the increments of the concave hull $\bar{a}_l \coloneqq A(y_l) - A(y_{l-1})$, the interval lengths $L_l \coloneqq y_l - y_{l-1}$ and the slopes $\gamma_l \coloneqq \bar{a}_l/L_l$. The limit of $\Phi_N$ is then given by \cite{DG86,CCP87}
   \begin{equation}
   \lim_{N \to \infty} \Phi_N(\beta) = \Phi(\beta) = \sum_{l=1}^{m} \varphi^{(l)}(\beta)
   \end{equation}
   with the partial pressures
   \begin{equation}\label{eq:partfren}
   \varphi^{(l)}(\beta) \coloneqq \begin{cases}
   \frac12 \beta^2 \bar{a}_l + L_l \ln2 & \text{ if } \beta \leq \sqrt{(2  \ln 2) \gamma_{l}^{-1}} =:\beta_l, \\
   \beta \sqrt{(2 \ln2 )\bar{a}_l L_l} & \text{ if } \beta >  \sqrt{(2  \ln 2) \gamma_{l}^{-1}}.
   \end{cases}
   \end{equation}
    (For future reference, we note that this formula still holds if the weights $ (a_k )$ do not add up to one.) 
    The glass transition in the GREM occurs in steps with the  components of the systems' spins corresponding to $l$ freezing at $ \beta_l $. Since $ \beta_m > \dots > \beta_2 > \beta_1 $, the the highest freezing temperature is found at $ \beta_c = \beta _1 $. \\
     
   The CREM includes distribution functions $A$ which are not step functions. Since they can be represented as a (uniform) limit of step functions, it is not surprising that corresponding limit of the pressure $\Phi(\beta)$ turns into an integral. The increments $\bar{a}_l$ are replaced by the right derivative $\bar{a}(x)$ of $\bar{A}(x)$ which exists everywhere as a consequence of the convexity of $\bar{A}$. This allows one to give an explicit expression for the limit $\Phi(\beta)$ (cf.~\cite{BK04b,Bov06})
   \begin{equation}\label{eq:crem}
   \Phi(\beta) = \sqrt{2 \ln 2} \ \beta \int_{0}^{x(\beta)}
   \sqrt{\bar{a}(x)}\, dx  + \frac{\beta^2}{2}(1-\bar{A}(x(\beta))) + (1- x(\beta)) \ln 2
   \end{equation}
   with the function 
   \begin{equation}\label{eq:xbeta}
   x(\beta) \coloneqq \sup \left\{x \ | \ \bar{a}(x) > (2 \ln2)/ \beta^2 \right\}
   \end{equation}
   The glass transition in the CREM occurs at $ \beta_c = \sqrt{(2\ln2) / \lim_{x\downarrow 0} \bar{a}(x)}  $. 
   
 \subsection{Quantum GREM and CREM  and a Parisi formula}
 If a transversal magnetic field in $x$-direction is turned on, the total Hamiltonian acting on the Hilbert space $\ell^2(\mathcal{Q}_N)$ is
 \begin{equation}
 H_N = U - B,
 \end{equation}
 where $B$ is the sum of the $x$-Pauli matrices $  \pmb{s}_j^{(1)} $ with (possibly random) weights $b_j\in \mathbb{R} $, i.e.
 \begin{equation}\label{def:B} 
  (B \psi)(\pmb{\sigma}) :=    \sum_{j=1}^{N}  b_j \;  \big( \pmb{s}_j^{(1)}  \psi\big)(\pmb{\sigma}) , \qquad  \big( \pmb{s}_j^{(1)}  \psi\big)(\pmb{\sigma}) :=  \psi(\sigma_1,\ldots,-\sigma_j,\ldots,\sigma_N) ,  
 \end{equation}
 and $U$ is some random potential. Before further specifying $U$ and $B$, we want to record a simple observation: the partition function $\tr e^{-\beta(U-B)}$ as well as the diagonal matrix-elements of $ e^{-\beta(U-B)} $ in terms of the standard $ z $-basis $ | \pmb{\sigma} \rangle $ only depends on the absolute values $(|b_j|)$. (Here and in the following we use the bra(c)ket notation for matrix elements.) 
 \begin{lemma}\label{lem:possemgr}
 	Let $U$ be an arbitrary potential on $\mathcal{Q}_N$ and $B,B'$ two transversal field with weights $b_j$ and $b_j'$ which only differ by a sign, i.e. $|b_j| = |b_j^\prime|$ for all $ j$. Then, for all $ \pmb{\sigma} \in  \mathcal{Q}_N$:
 	\begin{equation}\label{eq:possemgr}
 	\langle \pmb{\sigma} | e^{-\beta(U -B)} | \pmb{\sigma} \rangle = 	\langle \pmb{\sigma} |  e^{-\beta(U-B')} | \pmb{\sigma} \rangle . 
 	\end{equation}
 \end{lemma}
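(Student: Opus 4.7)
The plan is to establish the identity via a sign-flipping unitary conjugation that is diagonal in the $z$-basis. Let $S := \{ j : b_j \neq b_j'\}$, which by hypothesis coincides with $\{j : b_j = -b_j'\}$. Introduce the diagonal unitary
\[
V := \prod_{j \in S} \pmb{s}_j^{(3)},
\]
where $\pmb{s}_j^{(3)}$ is the $z$-Pauli matrix at site $j$. Since $V$ is diagonal in the standard basis, one has $V | \pmb{\sigma} \rangle = \epsilon_{\pmb{\sigma}} | \pmb{\sigma} \rangle$ with a sign $\epsilon_{\pmb{\sigma}} \in \{-1,+1\}$, and consequently $V^2 = \mathbbm{1}$ and $[V, U] = 0$ for any potential $U$.

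Next I would exploit the anticommutation $\pmb{s}_j^{(3)} \pmb{s}_j^{(1)} \pmb{s}_j^{(3)} = -\pmb{s}_j^{(1)}$ together with $[\pmb{s}_j^{(3)}, \pmb{s}_k^{(1)}] = 0$ for $j\neq k$ to conclude that $V \pmb{s}_j^{(1)} V^{-1} = -\pmb{s}_j^{(1)}$ for $j \in S$ and $V \pmb{s}_j^{(1)} V^{-1} = \pmb{s}_j^{(1)}$ for $j \notin S$. Summing with the weights $b_j$ yields
\[
V B V^{-1} = \sum_{j=1}^N b_j'' \, \pmb{s}_j^{(1)}, \quad \text{where } b_j'' = b_j' \text{ for } j \in S \text{ and } b_j'' = b_j \text{ for } j \notin S,
\]
so that $V B V^{-1} = B'$ by the choice of $S$.

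Combining the two commutation relations gives $V(U-B) V^{-1} = U - B'$, hence also
\[
e^{-\beta(U-B')} = V \, e^{-\beta(U-B)} \, V^{-1}.
\]
Taking the diagonal matrix element and using $V | \pmb{\sigma} \rangle = \epsilon_{\pmb{\sigma}} | \pmb{\sigma} \rangle$ produces a factor $\epsilon_{\pmb{\sigma}}^2 = 1$, which yields \eqref{eq:possemgr}. There is no real obstacle; the content of the argument is just the observation that flipping the sign of $b_j$ is implemented by conjugation with a $z$-diagonal unitary whose action on basis vectors is a mere global phase and therefore drops out of diagonal expectation values. (An alternative route would be the Dyson expansion of $e^{-\beta(U-B)}$, in which only terms with each $\pmb{s}_j^{(1)}$ appearing an even number of times contribute to the $\langle \pmb{\sigma}|\cdot|\pmb{\sigma}\rangle$ element, so the $b_j$'s enter pairwise and only through their squares; but the conjugation argument is cleaner and avoids questions of convergence.)
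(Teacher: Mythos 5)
Your argument is correct, and it takes a genuinely different route from the paper. The paper proves the lemma by expanding $e^{-\beta(U-B)}$ as a series and observing that a diagonal matrix element $\langle \pmb{\sigma}|A_1\cdots A_k|\pmb{\sigma}\rangle$ vanishes unless each spin-flip operator $\pmb{s}_j^{(1)}$ occurs an even number of times, so that only the squares $b_j^2$ enter --- exactly the alternative you mention in your closing parenthesis. Your route instead implements the sign change as a gauge transformation: conjugation by the $z$-diagonal unitary $V=\prod_{j\in S}\pmb{s}_j^{(3)}$ commutes with the multiplication operator $U$, flips $\pmb{s}_j^{(1)}\mapsto -\pmb{s}_j^{(1)}$ precisely for $j\in S$, and hence maps $U-B$ to $U-B'$, after which the diagonal matrix elements are unchanged because $V|\pmb{\sigma}\rangle=\epsilon_{\pmb{\sigma}}|\pmb{\sigma}\rangle$ with $\epsilon_{\pmb{\sigma}}^2=1$. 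This is if anything stronger than what the paper extracts: it exhibits $e^{-\beta(U-B)}$ and $e^{-\beta(U-B')}$ as unitarily equivalent via a basis-diagonal unitary, so all diagonal entries, the trace, and the spectrum coincide at once, with no series bookkeeping (though in finite dimension the expansion poses no convergence issue either). One cosmetic point: your claim that $S=\{j:b_j\neq b_j'\}$ \emph{coincides} with $\{j:b_j=-b_j'\}$ fails at sites with $b_j=0$, where both signs agree; what you actually use is only that $b_j'=-b_j$ for $j\in S$ and $b_j'=b_j$ for $j\notin S$, which does follow from $|b_j|=|b_j'|$, so the argument is unaffected.
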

 \begin{proof}
 	Expanding the exponential, we write  $\langle \pmb{\sigma} | e^{-\beta(U- B)} | \pmb{\sigma} \rangle$ as a convergent series of terms of the form
 	\begin{equation}\label{eq:expterms} \langle \pmb{\sigma} | A_1 \cdots A_k | \pmb{\sigma} \rangle    \end{equation}
 	where each $A_j$ is either $-U$ or some $b_j \pmb{s}_j^{(1)}$. As $\pmb{s}_j^{(1)}$ flips the sign of the $j$th coordinate $\sigma_j$, the term~\eqref{eq:expterms} vanishes unless each operator $\pmb{s}_j^{(1)}$ occurs $n_j$ times, where $n_j$ is an even number. We conclude that $\langle \pmb{\sigma} | e^{-\beta(U-B)} | \pmb{\sigma} \rangle$ only depends on the squares $b_j^2$ which proves \eqref{eq:possemgr}. 
 \end{proof}
 
If all the weights $b_i \geq 0$ are non-negative, the Trotter product formula shows that $H_N$ generates a positive semigroup, i.e.
 \[ \langle \pmb{\sigma}| e^{-t H_N} | \pmb{\sigma}^\prime \rangle \geq 0    \]
 for any $t \geq 0$ and $\pmb{\sigma},\pmb{\sigma}^\prime \in \mathcal{Q}_N$. This is in general not true for an arbitrary transversal magnetic field $B$, but due to Lemma~\ref{lem:possemgr} we can assume without loss of generality that the weights $ (b_j) $ are indeed non-negative if we are only interested in properties which can be derived from diagonal matrix elements such as the quantum partition function.

 In the remaining part of this section, we restrict ourselves to the case where the weights $(b_j)$ are independent copies of an absolutely integrable random variable $ \mathfrak{b}$ and they shall be independent of the Gaussian potentials $U$. 
 We are mainly interested in the thermodynamic properties of the hierarchical quantum spin glasses which are encoded in the quantum partition function 
 \[ Z_N(\beta, \mathfrak{b}) = \tr \left[ e^{-\beta H_N}\right] \]
 or, equivalently, in the pressure (or negative free energy)  
 \[ \Phi_N(\beta, \mathfrak{b}) = \frac1N \ln  Z_N(\beta, \mathfrak{b}).  \]
In the special case that the weights $ \mathfrak{b} = \Gamma$ are (almost surely) constant, we will sometimes write $B = \Gamma T$ and denote the pressure by $\Phi_N(\beta,\Gamma)$.
  
Our first main result  concerns the free energy of the QGREM, cf.~\eqref{eq:ugrem}. We show that the free energy converges almost surely to a non-random limit, for which we derive 
an explicit expression in terms of the classical partial free energies 
 \eqref{eq:partfren} and the paramagnetic free energy. With the notation of Section~\ref{subsec:GREM}, we have the following:
 \begin{theorem}\label{thm:qgrem}
For the GREM  specified  by $ U $ as in \eqref{eq:ugrem} in terms of its distribution function $ A $, any $\beta \geq 0$ and an absolutely integrable random variable $\mathfrak{b}$ the quantum free energy converges almost surely,
 \begin{equation}\label{eq:qgrem}
 \lim_{N \to \infty} \Phi_N(\beta, \mathfrak{b}) = \Phi(\beta, \mathfrak{b}) \coloneqq \max_{1 \leq K \leq m} \left[  \sum_{l =1}^{K} \varphi^{(l)}(\beta) + (1-y_K) \ee[ \ln\left(  2 \cosh(\beta  \mathfrak{b}) \right)] \right] .
 \end{equation}
 The maximum is taken over all points $   \{y_1,\ldots,y_m  \} $ supporting the convex hull $\bar{A}$ of $ A $. 
 \end{theorem}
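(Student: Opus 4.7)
By Lemma~\ref{lem:possemgr} it suffices to work with $b_j \geq 0$. The strategy is to establish matching lower and upper bounds on $\Phi_N(\beta,\mathfrak b)$ that converge almost surely to $\Phi(\beta,\mathfrak b)$. For a fixed candidate ``cut'' $K \in \{1,\ldots,m\}$ I work with the tensor decomposition $\ell^2(\mathcal Q_N) \cong \mathcal H_{\leq K} \otimes \mathcal H_{>K}$ associated to the first $\lceil y_K N\rceil$ and the remaining spins; this induces a splitting $B = B_{\leq K} + B_{>K}$ with each summand acting on only one tensor factor, and a hierarchical splitting $U = U_{\leq K} + U_{>K}$, where $U_{\leq K}$ collects the terms of~\eqref{eq:ugrem} up to the $K$-th concave-hull level (and so depends only on $\mathcal H_{\leq K}$) while $U_{>K}$ still couples both tensor factors.

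For the lower bound at the cut $K$, I apply the Gibbs variational principle with trial state $\rho = \rho_1 \otimes \rho_2$, where $\rho_1$ is the classical Gibbs state of $U_{\leq K}$ on $\mathcal H_{\leq K}$ and $\rho_2$ is the quantum Gibbs state of $-B_{>K}$ on $\mathcal H_{>K}$. The diagonal-product structure of $\rho$ means that the ``unwanted'' operators $B_{\leq K}$ and $U_{>K}$ drop out in expectation: the first because $\rho_1$ is diagonal in the $z$-basis in which $B_{\leq K}$ is strictly off-diagonal, the second because $\ee[U_{>K}] = 0$ and $\rho_1$ is measurable with respect to $U_{\leq K}$, which is independent of $U_{>K}$. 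Gaussian concentration of the $U_{>K}$-dependent error then upgrades this to an almost-sure bound. The remaining classical GREM pressure for $U_{\leq K}$ on $y_K N$ spins converges to $\sum_{l=1}^K \varphi^{(l)}(\beta)$ by~\cite{DG86,CCP87}, while the factorized paramagnetic partition function equals $\prod_{j > \lceil y_K N\rceil} 2\cosh(\beta b_j)$ and hence yields $(1-y_K)\,\ee[\ln(2\cosh(\beta\mathfrak b))]$ almost surely by the strong law of large numbers. Maximizing over $K$ gives the desired lower bound.

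For the matching upper bound I invoke the hierarchical peeling technique of Section~\ref{sec:peeling}. In broad strokes, a single-step peeling lemma bounds $\tfrac{1}{N}\ln Z_N$ from above at each hierarchical level $l$ by the maximum (up to subexponential corrections) of two alternatives: either ``freeze'' level $l$ classically, which contributes precisely $\varphi^{(l)}(\beta)$ and leaves a quantum GREM with one fewer hierarchical level to which one iterates, or ``quantize from level $l$ onwards'', which replaces all deeper classical sums by the paramagnetic pressure $(1-x_{l-1})\,\ee[\ln(2\cosh(\beta\mathfrak b))]$ and terminates the recursion. Iterating over all $n$ original levels yields an upper bound as a supremum over stopping levels~$K$; a standard convex-hull/rearrangement argument (in the spirit of the classical Derrida--Gardner analysis) then collapses this supremum onto the concave-hull support $\{y_1,\ldots,y_m\}$, matching the lower bound.

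The main obstacle will be the upper-bound peeling step: the transversal field $B_{\leq K}$ does not commute with projections onto classical configurations, so naive restriction arguments create cross terms that spoil the variational structure. Overcoming this is precisely the purpose of the operator-theoretic machinery of Section~\ref{sec:peeling}, which generalizes the non-hierarchical estimates of~\cite{MW20,MW19} for the quantum REM to a recursive hierarchical setting. Almost-sure convergence is finally upgraded from convergence in expectation via Gaussian concentration of $\Phi_N$ (viewed as a Lipschitz function of the i.i.d.\ Gaussian disorder) combined with a Borel--Cantelli argument.
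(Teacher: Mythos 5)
Your overall architecture does track the paper's: the upper bound is to come from the peeling machinery of Section~\ref{sec:peeling} (Theorem~\ref{prop:key}, iterated as in Corollary~\ref{cor:key}), the lower bound from the Gibbs variational principle, the classical GREM results of \cite{DG86,CCP87} evaluate the frozen part, the factorized transverse-field trace gives $(1-y_K)\ee[\ln 2\cosh(\beta\mathfrak{b})]$ by the strong law, and Gaussian concentration plus Borel--Cantelli upgrades to almost-sure convergence. Your lower bound with the product trial state (classical Gibbs state of $U_{\leq K}$ tensored with the transverse-field Gibbs state on the remaining spins) is a clean variant of what the paper does inside the proof of Theorem~\ref{prop:key}; the treatment of the error term $\tr(U_{>K}\rho)$ by centering, independence of the diagonal weights from $U_{>K}$, and concentration is sound.

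Two points, however, are either wrong as stated or genuinely missing. First, the single-step dichotomy you describe -- ``freeze level $l$ classically, which contributes precisely $\varphi^{(l)}(\beta)$ and leaves a quantum GREM with one fewer hierarchical level'' -- is not a valid peeling step: if the deeper Gaussian levels are kept, they still depend on the block-$l$ coordinates, so nothing decouples and no exact $\varphi^{(l)}(\beta)$ splits off at that stage. The peeling of Theorem~\ref{prop:key} and Corollary~\ref{cor:key} runs in the opposite direction: at each step one either removes the deepest remaining Gaussian level (keeping the full transversal field) or removes the field on the first $x_kN$ spins and stops, and only in the latter, terminating branch does the trace factor into a truncated classical GREM times a paramagnet; the partial pressures $\varphi^{(l)}(\beta)$ only appear after evaluating that classical piece. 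Second, and more substantively, the collapse of the resulting maximum over all jump points $x_k$ of $A$ onto the concave-hull support $\{y_1,\ldots,y_m\}$ is not a routine convex-hull rearrangement, because the paramagnetic term $(1-x_k)\,\ee[\ln 2\cosh(\beta\mathfrak{b})]$ competes inside the maximum. This is exactly the content of Lemma~\ref{lem:x=y} in the paper: for $y_l<x_k<y_{l+1}$ one compares $X^{(x_k)}$ via Slepian's lemma (and monotonicity of the pressure in the jump heights) with an auxiliary GREM whose distribution function is lifted to the chord of $\bar{A}$ between $y_l$ and $y_{l+1}$, computes its pressure as $\sum_{j\leq l}\varphi^{(j)}(\beta)+\frac{x_k-y_l}{y_{l+1}-y_l}\varphi^{(l+1)}(\beta)$, and then uses linearity in $x_k$ to dominate the value at $x_k$ by the larger of the values at $y_l$ and $y_{l+1}$. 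Your proposal names this step but supplies no argument for it, so as written the proof has a gap there, though a repairable one.
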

 
 The proof of this theorem is found in Section~\ref{sec:proofGREM} below. 
We stress that as in the classical case the concave hull $\bar{A}$ keeps being the determining function for the limit and not $A$ itself. 
The second term in~\eqref{eq:qgrem} is the pressure of the random quantum paramagnet on a Hamming cube given by 
\begin{equation}\label{eq:para}
 p(\beta,\ \mathfrak{b}) := \frac1N \ee\left[ \ln \tr \left[ e^{-\beta  \mathfrak{b}}\right] \right] = \ee\left[ \ln\left(  2 \cosh(\beta  \mathfrak{b}) \right) \right].
\end{equation}
If $ \mathfrak{b} = \Gamma> 0$ is constant, the structure of  the limit in \eqref{eq:qgrem} becomes more transparent if we introduce the critical field strengths 
\[ \Gamma_c^{(l)} \coloneqq \frac1\beta \arcosh\left( \frac12 
\exp\left( \frac{\varphi^{(l)}(\beta)}{L_l}  \right) \right), \quad l \in \{1, \dots, K\}.\]
In this situation, we may rephrase \eqref{eq:qgrem}  as follows:
\begin{corollary}\label{cor:qgrem}
	In the situation of Theorem \ref{thm:qgrem} with $\mathfrak{b} = \Gamma > 0 $:
	\begin{equation}\label{eq:grem2} \Phi(\beta,\Gamma) = \sum_{l=1}^{m} \left( \varphi^{(l)}(\beta) \ \mathbbm{1}_{\Gamma < \Gamma_c^{(l)}} + L_l \ln\left(  2 \cosh(\beta \Gamma)\right) \   \mathbbm{1}_{\Gamma \geq \Gamma_c^{(l)}}\right).  \end{equation}
\end{corollary}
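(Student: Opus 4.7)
The plan is to rewrite the RHS of \eqref{eq:grem2} term-by-term and match it against the variational formula of Theorem \ref{thm:qgrem} by identifying the optimal $K$. By the very definition of $\Gamma_c^{(l)}$, we have $\varphi^{(l)}(\beta) = L_l \ln(2\cosh(\beta\Gamma_c^{(l)}))$, so since $\cosh$ is strictly increasing on $[0,\infty)$, the condition $\Gamma < \Gamma_c^{(l)}$ is equivalent to $\varphi^{(l)}(\beta) > L_l \ln(2\cosh(\beta\Gamma))$ (and likewise for the reverse). Hence the $l$-th summand on the RHS of \eqref{eq:grem2} equals $\max\{\varphi^{(l)}(\beta),\, L_l \ln(2\cosh(\beta\Gamma))\}$, so the corollary reduces to showing
\begin{equation*}
\sum_{l=1}^m \max\bigl\{\varphi^{(l)}(\beta),\, L_l \ln(2\cosh(\beta\Gamma))\bigr\} \;=\; \Phi(\beta, \Gamma).
\end{equation*}

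The central technical step is to establish that $l \mapsto \Gamma_c^{(l)}$ is non-increasing. Writing $\bar a_l = \gamma_l L_l$, the ratio $\varphi^{(l)}(\beta)/L_l$ depends on $l$ only through the slope $\gamma_l$: from \eqref{eq:partfren}, it equals $\tfrac12 \beta^2 \gamma_l + \ln 2$ for $\gamma_l \leq (2\ln 2)/\beta^2$ and $\beta\sqrt{(2\ln 2)\gamma_l}$ otherwise. Both branches are continuous, strictly increasing in $\gamma_l$, and meet at the common transition value $2\ln 2$, so $\gamma \mapsto \varphi^{(l)}(\beta)/L_l$ is a globally monotone increasing function of $\gamma$. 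Concavity of $\bar A$ forces $\gamma_1 \geq \gamma_2 \geq \cdots \geq \gamma_m$, so $\varphi^{(l)}(\beta)/L_l$ is non-increasing in $l$. Composing with the increasing map $\frac{1}{\beta} \arcosh(\tfrac12 e^{\cdot})$ yields the claimed monotonicity of $\Gamma_c^{(l)}$. This is the main (albeit mild) obstacle; everything that follows is a monotone-rearrangement argument.

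With monotonicity in hand, fix $\beta$ and $\Gamma$ and let $K^*$ be the largest index with $\Gamma < \Gamma_c^{(K^*)}$, with the convention $K^* = 0$ in the paramagnetic regime. Then $\Gamma < \Gamma_c^{(l)}$ for $l \leq K^*$ and $\Gamma \geq \Gamma_c^{(l)}$ for $l > K^*$, so the sum of maxima collapses to
\begin{equation*}
\sum_{l=1}^{K^*} \varphi^{(l)}(\beta) + \sum_{l=K^*+1}^m L_l \ln(2\cosh(\beta\Gamma)) = \sum_{l=1}^{K^*} \varphi^{(l)}(\beta) + (1-y_{K^*}) \ln(2\cosh(\beta\Gamma)),
\end{equation*}
which is the value $F(K^*)$ of the variational functional appearing in Theorem \ref{thm:qgrem}. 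To verify that $K^*$ achieves the maximum, observe that the forward difference $F(K) - F(K-1) = \varphi^{(K)}(\beta) - L_K \ln(2\cosh(\beta\Gamma))$ is non-negative for $K \leq K^*$ and non-positive for $K > K^*$ by the monotonicity just established, so $F$ peaks at $K = K^*$. In the paramagnetic case $K^* = 0$, the Theorem's maximum is read with the natural extension $y_0 = 0$ and empty first sum, yielding the expected free paramagnetic pressure $\ln(2\cosh(\beta\Gamma))$ and completing the proof.
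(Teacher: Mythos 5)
Your proof is correct and follows essentially the same route as the paper: the monotonicity of $l\mapsto\Gamma_c^{(l)}$ that you establish is exactly the discrete concavity estimate of Lemma~\ref{lem:concav} (which the paper proves by a frozen/unfrozen case analysis, while you use a unified monotone-in-$\gamma$ argument), and your identification of the maximizing index $K^*$ via forward differences mirrors the paper's argument with the energy differences $\Delta^{(l)}(\beta,\Gamma)=L_l\ln 2\cosh(\beta\Gamma)-\varphi^{(l)}(\beta)$. Your reading of the variational formula with the natural extension $K=0$, $y_0=0$ in the paramagnetic regime is also what the paper itself uses implicitly (cf.\ the range $l=0,\dots,m$ in Lemma~\ref{lem:x=y}), so there is no gap.
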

 The proof  is again found in Section~\ref{sec:proofGREM}. 
The free energy coincides with the sum of $m$ weighted and rescaled QREM terms, cf.~\cite{Gold90,MW19}. In particular, there are as many magnetic first-order transitions as second-order glass transitions. The glass transitions continue to occur at the (classical) critical inverse temperatures 
$  \beta_{l}= \sqrt{(2 \ln2)\gamma_l^{-1}} $
as long as $\Gamma < \Gamma_c^{(l)}(\beta_l)$ and disappear for field strengths $\Gamma > \Gamma_c^{(l)}(\beta_l)$. The specific magnetization in $ z $-direction 
\[
m_z(\beta,\Gamma) :=  \frac{1}{\beta} \frac{ \partial }{\partial \Gamma} \Phi(\beta,\Gamma) 
\]
 changes discontinuously at $ \Gamma = \Gamma_c^{(l)} $,  cf.~Figure~\ref{fig:phase1}. The physics described by~\eqref{eq:grem2}  is that of the block or species of spins corresponding to $ l $ flipping into transversal order at $ \Gamma = \Gamma_c^{(l)} $. At temperatures below $ \beta_l^{-1} $, the transition is from spin-glass order to a quantum paramagnet in that block.  It is an interesting question to determine the fate of Parisi's order parameter in this regime. 
 The rigorous classical analysis of this quantity, which partially captures the geometric structure of the Gibbs measure, can be found in~\cite{BK04a}. An extension of this analysis will be the subject of a future work.  \\
 
 \begin{figure}[ht]
 	\begin{center}
 		\includegraphics[width=.55\textwidth] {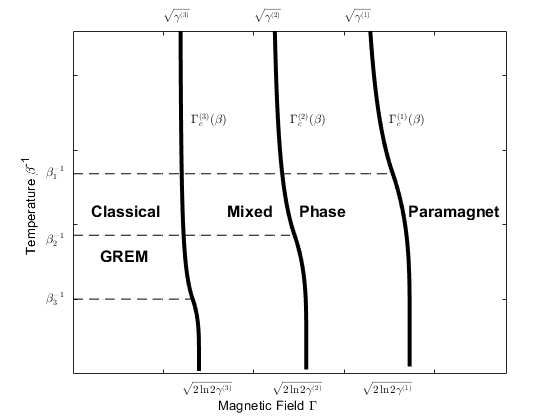}
 		\caption{Phase diagram of the Quantum GREM as a function of the transversal constant magnetic field $ \Gamma  $ and the temperature $ \beta^{-1}$.  The figure shows an example with three second-order glass transitions (dotted lines) and three first-order magnetic transitions
 			(bold lines). If $\Gamma < \Gamma_c^{(3)}(\beta_l)$ the free energy coincides with the classical one ($\Gamma = 0$),
 			whereas for $\Gamma > \Gamma_c^{(l)}(\beta_l)$ the system becomes a pure paramagnet. }\label{fig:phase1}
 	\end{center}
 \end{figure}

Moving on to the more general CREM potentials, it is convenient to introduce truncated versions of the free energy in \eqref{eq:crem}. For any $z \in [0,1]$ we define
\begin{equation}\label{eq:fz}
\Phi(\beta,z) \coloneqq \sqrt{2 \ln 2} \beta \int_{0}^{\min\{x(\beta),z\}}
\mkern-12mu \sqrt{\bar{a}(x)}\, dx  + \mathbbm{1}_{z > x(\beta)}\left( \frac{\beta^2}{2}(\bar{A}(z)-\bar{A}(x(\beta))) + \ln 2(z- x(\beta)\right). \end{equation} As in the quantum GREM, the free energy of the quantum CREM converges almost surely and the limit may be expressed as variational formula involving  $\Phi(\beta,z) $:

\begin{theorem}\label{thm:qcrem}
	For the CREM  specified  by $ U $ as in \eqref{eq:ucrem}  in terms of its distribution function $ A $,  let $\bar{A}$ be the concave hull of $A$, $\bar{a}$ the right-derivative of $\bar{A}$, $\Phi(\beta,z)$ as in \eqref{eq:fz} and $ \mathfrak{b}$ an absolutely integrable random variable. Then, the quantum pressure $\Phi_N(\beta,\ \mathfrak{b})$ converges almost surely,
	\begin{equation}\label{eq:qcrem}
	\lim_{N \to \infty} \Phi_N(\beta,\Gamma) = \Phi(\beta,\Gamma)
	\coloneqq \sup_{0 \leq z \leq 1} \left[ \Phi(\beta,z) + (1-z) \ee \left[ \ln 2 \cosh(\beta  \mathfrak{b}) \right] \right] .
	\end{equation}
\end{theorem}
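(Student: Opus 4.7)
My plan is to derive Theorem \ref{thm:qcrem} from Theorem \ref{thm:qgrem} by approximating the distribution function $A$ from above and below by step functions, invoking a Gaussian comparison principle adapted to the quantum setting, and passing to the limit.

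First, I would choose sequences of step functions $A_n^{-}\leq A \leq A_n^{+}$ with $A_n^{\pm}\to A$ uniformly on $[0,1]$, and let $U_n^{\pm}$ denote the associated GREM potentials with covariance kernels $N A_n^{\pm}(q(\pmb{\sigma},\pmb{\sigma}'))$. Since $A_n^{\pm}$ and $A$ all agree on the diagonal $q=1$, only the off-diagonal part of the covariance differs. The key tool is a quantum version of Kahane's/Slepian's inequality: if two centered Gaussian potentials $U_1,U_2$ on $\mathcal{Q}_N$ satisfy $\ee[U_1(\pmb{\sigma})U_1(\pmb{\sigma}')]\geq \ee[U_2(\pmb{\sigma})U_2(\pmb{\sigma}')]$ for $\pmb{\sigma}\neq \pmb{\sigma}'$ with equality on the diagonal, then
\[
\ee\left[\Phi_N(\beta,\mathfrak{b};U_1)\right]\leq \ee\left[\Phi_N(\beta,\mathfrak{b};U_2)\right].
\]
To prove this I would use the Lie--Trotter representation
\[
Z_N(\beta,\mathfrak{b};U)=\lim_{n\to\infty}\sum_{\pmb{\sigma}_0,\dots,\pmb{\sigma}_{n-1}}\prod_{k=0}^{n-1}\langle \pmb{\sigma}_k| e^{\beta B/n}|\pmb{\sigma}_{k+1}\rangle\, \exp\left(-\tfrac{\beta}{n}U(\pmb{\sigma}_k)\right),
\]
so that for each discrete path the weight is the exponential of a linear functional of the Gaussian field $U$. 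Differentiating $\ee[\ln Z_N(\beta,\mathfrak{b};\sqrt{t}U_1+\sqrt{1-t}U_2)]$ in $t$ and applying Gaussian integration by parts then yields the desired sign, provided the quantum Gibbs weight of the path ensemble is non-negative, which follows from Lemma \ref{lem:possemgr} together with the Trotter representation.

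Second, I would combine this comparison with Gaussian concentration of measure (the Borell--TIS inequality applied to the $O(N^{-1/2})$-Lipschitz map from the underlying i.i.d.\ standard Gaussians to $\Phi_N$) to upgrade the expectation-level sandwich to an almost sure one,
\[
\Phi_N(\beta,\mathfrak{b};U_n^{-}) - o_N(1)\;\leq\; \Phi_N(\beta,\mathfrak{b};U)\;\leq\; \Phi_N(\beta,\mathfrak{b};U_n^{+}) + o_N(1),
\]
uniformly in $n$. Applying Theorem \ref{thm:qgrem} to each $U_n^{\pm}$ then gives explicit limits in the form of a discrete maximum over the breakpoints of the concave hulls $\bar{A}_n^{\pm}$.

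Third, I would verify that as $n\to\infty$ these discrete maxima converge to the supremum in \eqref{eq:qcrem}. Telescoping $\sum_{l=1}^{K}\varphi^{(l)}(\beta)$ along the breakpoints of $\bar{A}_n^{\pm}$ splits into a frozen piece (for $y_l<x(\beta)$) contributing a Riemann sum for $\sqrt{2\ln 2}\,\beta\int_0^{\min\{x(\beta),z\}}\sqrt{\bar{a}(x)}\,dx$ and an unfrozen piece contributing $\frac{\beta^2}{2}(\bar{A}(z)-\bar{A}(x(\beta)))+(z-x(\beta))\ln 2$, which together match $\Phi(\beta,z)$ in \eqref{eq:fz}. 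The coefficient $(1-y_K)$ tends to $(1-z)$, and since the functional $z\mapsto \Phi(\beta,z)+(1-z)\ee[\ln(2\cosh(\beta\mathfrak{b}))]$ is continuous on $[0,1]$, $\Gamma$-convergence of the discrete variational problems to the continuous one yields \eqref{eq:qcrem}.

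The main obstacle I expect is the quantum Kahane comparison in the first step: after Gaussian integration by parts one must check that the resulting covariance-weighted object is a bona fide non-negative Gibbs-type correlation in the path representation, despite the non-commutativity of $U$ and $B$. The positivity provided by Lemma \ref{lem:possemgr} and the Trotter-product structure should make this go through, but it is exactly where the ``interpolation generalised to the quantum set-up'' advertised in the introduction must be carried out carefully.
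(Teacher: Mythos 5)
Your overall architecture (uniform approximation of $A$ by step functions, a Gaussian comparison between the CREM and the approximating GREMs, Gaussian concentration to upgrade to almost-sure statements, Theorem~\ref{thm:qgrem} for the step models, and convergence of the discrete variational problems to the continuous supremum) is the same route the paper takes. However, your key comparison tool is where the argument breaks: the claimed quantum Slepian/Kahane inequality, ``$C_1\geq C_2$ off the diagonal with equality on the diagonal implies $\ee[\Phi_N(U_1)]\leq\ee[\Phi_N(U_2)]$,'' is not delivered by the Trotter-path argument you sketch. In the path representation the effective covariance between two paths $\omega=(\pmb{\sigma}_0,\dots,\pmb{\sigma}_{n-1})$ and $\omega'$ is $K_i(\omega,\omega')=\frac{1}{n^2}\sum_{k,l}C_i(\pmb{\sigma}_k,\pmb{\sigma}'_l)$, and the \emph{path-diagonal} term $K_i(\omega,\omega)$ involves off-diagonal spin covariances $C_i(\pmb{\sigma}_k,\pmb{\sigma}_l)$ along the same path, because a Trotter path visits many distinct spin configurations. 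Hence equality of the spin-diagonal covariances does not make the diagonal contribution cancel after Gaussian integration by parts: the $t$-derivative of the interpolated expected pressure is a difference of two nonnegative terms (a one-replica ``self-overlap'' term and a two-replica term), with no definite sign. Kahane's general comparison would even require the path-diagonal inequality in the direction opposite to the one your hypotheses give, so the monotone sandwich $\Phi_N(U_n^-)\lesssim\Phi_N(U)\lesssim\Phi_N(U_n^+)$ is unjustified as stated. This is precisely the point you flagged as the main obstacle, and it does not ``go through'' from Lemma~\ref{lem:possemgr} alone.

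The good news is that your own computation can be salvaged by asking for less: since both normalized Gibbs-type sums appearing after differentiation equal one, the same Duhamel/interpolation calculation gives the two-sided bound $|\ee_{X,Y}[\Psi(1)-\Psi(0)]|\leq\beta^2\|A_X-A_Y\|_\infty$, which is exactly the paper's estimate \eqref{eq:interpol}. With this absolute (non-monotone) bound, a single uniformly approximating sequence $A_n\to A$ suffices (no upper/lower sandwich needed), and combined with the concentration estimate (Proposition~\ref{prop:conc2}), Theorem~\ref{thm:qgrem} for step functions, and the continuity of $z\mapsto\Phi(\beta,A,z)$ in $A$ under uniform convergence (the paper's Lemma~\ref{lem:formcont}, which also handles the convergence of $x(\beta)$ and of the right-derivatives that your Riemann-sum sketch glosses over), one obtains \eqref{eq:qcrem}. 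You should also justify that for a step function the supremum over $z\in[0,1]$ is attained at breakpoints of the concave hull, which is the content of Lemma~\ref{lem:x=y} and is needed to match Theorem~\ref{thm:qgrem}'s discrete maximum with the continuous variational formula.
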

The proof is found in Section~\ref{sec:proofGREM}.\\

The free energy $\Phi_N(\beta, \mathfrak{b})$ does not only converge almost surely, but also in mean. This is a  consequence of the following Proposition, which is a special case of Proposition \ref{prop:conc2} in Section 3.

\begin{proposition}\label{prop:conc}
For any Gaussian potential $U$ as in \eqref{eq:ucrem}, the Gaussian concentration estimate
\begin{equation}
 \mathbb{P}_U\left(\left| \Phi_N(\beta,\mathfrak{b})  - \mathbb{E}_U\left[\Phi_N(\beta,\ \mathfrak{b}) \right]  \right|  > \frac{t \, \beta }{\sqrt{N} } \right) \leq  2 \,  \exp\left(- \frac{t^2}{4}\right)
 \end{equation}	
 holds for any $t > 0$ and $N \in \nn$. Here, $  \mathbb{P}_U $, $\mathbb{E}_U$ denote the probability and expectation with respect to $U$.
\end{proposition}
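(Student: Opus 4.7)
The plan is to apply the Borell--Tsirelson--Ibragimov--Sudakov Gaussian concentration inequality to $\Phi_N(\beta,\mathfrak{b})$ regarded as a function of the independent standard Gaussians underlying $U$. Since $\mathfrak{b}$ is independent of $U$, I would condition on $\mathfrak{b}$ and treat all $ \mathfrak{b} $-dependent quantities as deterministic; the bound then holds uniformly in the realisation of $\mathfrak{b}$ and carries over to the unconditional statement.

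First, I would write $U(\pmb{\sigma}) = \sum_\alpha c_\alpha(\pmb{\sigma}) g_\alpha$ as a linear combination of i.i.d.\ standard Gaussians $(g_\alpha)$ with deterministic coefficients satisfying $\sum_\alpha c_\alpha(\pmb{\sigma}) c_\alpha(\pmb{\sigma}') = N A(q(\pmb{\sigma},\pmb{\sigma}'))$; such a representation exists because the covariance in \eqref{eq:ucrem} is positive-semidefinite. The classical Gaussian concentration inequality then yields
\[
\pp_U\!\left( |\Phi_N(\beta,\mathfrak{b}) - \eesub{U}\Phi_N(\beta,\mathfrak{b})| > s \right) \leq 2 \exp\!\left( - \frac{s^2}{2 L^2} \right),
\]
where $L^2 = \sup_g \|\nabla_g \Phi_N\|_2^2$ is the squared Lipschitz constant of $\Phi_N$ in the Gaussian variables.

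Second, I would compute the gradient. Using Duhamel's formula together with cyclicity of the trace,
\[
\frac{\partial}{\partial U(\pmb{\sigma})} \tr\, e^{-\beta(U-B)} = -\beta \, \langle \pmb{\sigma}| e^{-\beta(U-B)} |\pmb{\sigma}\rangle ,
\]
so $\partial \Phi_N / \partial U(\pmb{\sigma}) = -\frac{\beta}{N} \mu(\pmb{\sigma})$, where $\mu(\pmb{\sigma}) := \langle \pmb{\sigma}| e^{-\beta(U-B)}|\pmb{\sigma}\rangle / Z_N(\beta,\mathfrak{b})$ is a probability measure on $\mathcal{Q}_N$. The chain rule $\partial U(\pmb{\sigma})/\partial g_\alpha = c_\alpha(\pmb{\sigma})$ combined with the covariance identity gives
\[
\|\nabla_g \Phi_N\|_2^2 \;=\; \frac{\beta^2}{N^2}\sum_{\pmb{\sigma},\pmb{\sigma}'} \mu(\pmb{\sigma})\mu(\pmb{\sigma}') \cdot N A(q(\pmb{\sigma},\pmb{\sigma}')) \;\leq\; \frac{\beta^2}{N},
\]
since $A \leq A(1) = 1$ and $\mu$ is a probability measure. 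Substituting $L^2 \leq \beta^2/N$ into the concentration bound with $s = t\beta/\sqrt{N}$ yields the claim (with the slightly stronger constant $t^2/2$, so the stated $t^2/4$ follows a fortiori).

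The two points requiring care are: (i) the non-commutativity of $U$ and $B$, which is handled by the cyclicity of the trace appearing in the Duhamel expansion and is what collapses the would-be off-diagonal contributions into the clean diagonal Gibbs weight $\mu(\pmb{\sigma})$; and (ii) the uniformity of the Lipschitz bound in $\mathfrak{b}$, which is automatic because the final estimate only uses that $\mu$ has total mass one together with the pointwise bound $A \leq 1$. Neither step involves the detailed hierarchical structure of $A$, which is why the same proof will extend to the more general situation stated later as Proposition~\ref{prop:conc2}.
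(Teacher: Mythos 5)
Your proof is correct and follows essentially the same route as the paper: both arguments bound the Lipschitz constant of $\Phi_N$ in the underlying independent Gaussians by noting that the derivative is a diagonal Gibbs weight, sum the squares to get $\beta^2/N$, and then invoke Gaussian concentration for Lipschitz functions. The only difference is that the paper first re-represents the CREM covariance as a finite-$N$ GREM and uses $\sum_k a_k = 1$, whereas you use a generic square-root factorization of the covariance together with $A \leq 1$ --- a cosmetic variation that even yields the slightly sharper exponent $t^2/2$.
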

If $\mathfrak{b}$ is even an $L^r$-random variable for some $r >1$, we further see that the pressure convergences in $r$-th mean.\\

\begin{figure}
	\begin{center}
		\includegraphics[width=.5\textwidth] {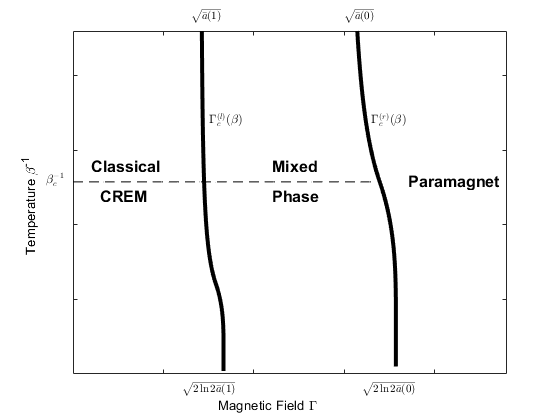}\hfill\includegraphics[width = .5\textwidth]{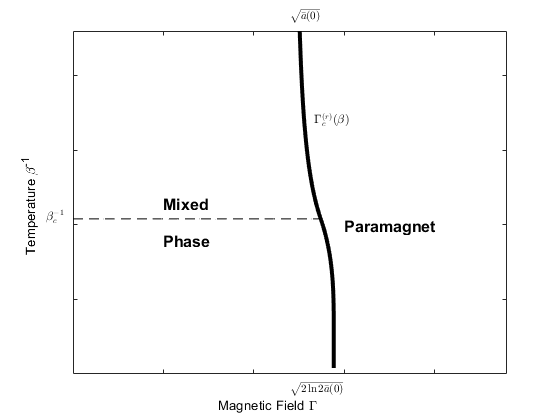}
		\caption{Both figures illustrate examples for the phase diagram of the Quantum CREM as a function of the transversal magnetic field $ \Gamma  $ and the temperature $ \beta^{-1}$. The first plot contains two magnetic phase transitions (bold lines) into transversal magnetic order. The second plot shows the case of one magnetic phase transition. The dotted line corresponds to the glass transition at $ \beta_c = \sqrt{(2\ln2) /\bar{a}(0)} $. If $\bar{A}$ is continuously differentiable, the magnetic transitions are second order.}\label{fig:phase2}
	\end{center}
\end{figure}

In order to determine the order of occurring magnetic phase transitions, we will replace the variational formula \eqref{eq:qcrem} in the case $\mathfrak{b} = \Gamma$ by a more explicit expression. To this end, we assume from now that concave hull $\bar{A}$ is a continuously differentiable function being different from the identity (in order to exclude the QREM situation). Then, $ \Phi(\beta,z)$ is differentiable with respect  to $z$,
\[ \frac{\partial \Phi(\beta,z) }{\partial z} = \sqrt{(2\ln2) \bar{a}(z)} \ \beta \ \mathbbm{1}_{z < x(\beta)} + 
\left( \ln 2 + \frac{\beta^2}{2}  \bar{a}(z) \right) \mathbbm{1}_{z \geq x(\beta)}.  \]

We note that the derivative $\frac{\partial \Phi(\beta,\cdot) }{\partial z} : [0,1] \to [s(\beta), t(\beta)]$ is a nondecreasing continuous function, where we have introduced the boundary values
\[ s(\beta) \coloneqq \frac{\partial \Phi(\beta,z) }{\partial z}\big|_{z = 1} \text{ and } t(\beta) \coloneqq \frac{\partial \Phi(\beta,z) }{\partial z}\big|_{z = 0}.  \]
\begin{corollary}\label{cor:qcrem}
Let $g(\beta,\cdot): [s(\beta),t(\beta)] \to  [0,1]  $ be a (generalized) inverse of the derivative $\frac{\partial \Phi(\beta,z) }{\partial z}$ as a function of $z$. Then,
\begin{equation}\label{eq:qcrem2}
\Phi(\beta,\Gamma) = \begin{cases}
\Phi(\beta,1) & p(\beta \Gamma) \leq s(\beta), \\
\Phi(\beta, g_{\beta}(p(\beta \Gamma))) + (1-g_{\beta}(p(\beta \Gamma))) p(\beta \Gamma) & s(\beta) < p(\beta \Gamma) < t(\beta), \\
p(\beta \Gamma) & t(\beta) \leq p(\beta \Gamma)
\end{cases}
\end{equation}
with the paramagnetic pressure $p(\beta \Gamma)= \ln 2 \cosh(\beta \Gamma)$.
\end{corollary}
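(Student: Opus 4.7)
The plan is to view Corollary \ref{cor:qcrem} as a straightforward concave optimization problem applied to the variational formula \eqref{eq:qcrem} of Theorem \ref{thm:qcrem} with $\mathfrak{b} = \Gamma$. Setting
\[ f(z) \coloneqq \Phi(\beta,z) + (1-z)\, p(\beta\Gamma), \qquad z \in [0,1], \]
the task reduces to identifying $\sup_{z \in [0,1]} f(z)$ explicitly in the three regimes of $p(\beta\Gamma) = \ln(2\cosh(\beta\Gamma))$.

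The first step is to verify that $f$ is concave on $[0,1]$. The piecewise formula for $\partial \Phi(\beta,z)/\partial z$ displayed just before the corollary is continuous across the interface $z = x(\beta)$: since $\bar A$ is assumed continuously differentiable, $\bar a(x(\beta)) = (2\ln 2)/\beta^2$, and inserting this value into both branches yields the common value $2\ln 2$. On $[0,x(\beta)]$ the derivative $\sqrt{(2\ln 2)\bar a(z)}\,\beta$ is non-increasing because $\bar a$ is non-increasing (concavity of $\bar A$), and on $[x(\beta),1]$ the derivative $\ln 2 + \tfrac{\beta^2}{2}\bar a(z)$ inherits the same monotonicity. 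Consequently $\Phi(\beta,\cdot)$ is concave on $[0,1]$, and $f$, being the sum of a concave and a linear function, is concave, with $\partial \Phi(\beta,\cdot)/\partial z$ taking values precisely in $[s(\beta),t(\beta)]$.

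The three cases in \eqref{eq:qcrem2} can then be read off at once by considering where $p(\beta\Gamma)$ sits relative to this range. \textbf{(i)} If $p(\beta\Gamma) \leq s(\beta)$, then $f'(z) = \partial\Phi(\beta,z)/\partial z - p(\beta\Gamma) \geq 0$ for every $z$, so $f$ is non-decreasing and the supremum is attained at $z=1$, giving $f(1) = \Phi(\beta,1)$. \textbf{(ii)} If $p(\beta\Gamma) \geq t(\beta)$, then symmetrically $f'(z) \leq 0$, the supremum is attained at $z=0$, and since $\Phi(\beta,0) = 0$ by direct inspection of \eqref{eq:fz}, we obtain $f(0) = p(\beta\Gamma)$. \textbf{(iii)} For $s(\beta) < p(\beta\Gamma) < t(\beta)$, continuity of $\partial\Phi(\beta,\cdot)/\partial z$ and the intermediate value theorem furnish some $z^* \in [0,1]$ with $\partial\Phi(\beta,z^*)/\partial z = p(\beta\Gamma)$, i.e.\ $z^* = g_\beta(p(\beta\Gamma))$; concavity of $f$ guarantees that such an interior critical point is a global maximizer, producing the middle line of \eqref{eq:qcrem2}.

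There is no genuine obstacle; the only minor technical subtlety is that $\partial\Phi(\beta,\cdot)/\partial z$ may be constant on subintervals of $[0,1]$ (wherever $\bar a$ is locally constant), so the generalized inverse $g_\beta$ need not be unique. On any such plateau, however, $f$ is itself constant, so the value $f(g_\beta(p(\beta\Gamma)))$ — and hence the right-hand side of \eqref{eq:qcrem2} — does not depend on the chosen selection, and the formula is well-defined.
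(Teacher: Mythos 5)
Your proof is correct and follows essentially the same route as the paper: both maximize the concave function $z \mapsto \Phi(\beta,z) + (1-z)p(\beta\Gamma)$ over $[0,1]$, locating the maximizer either at an interior critical point where $\partial\Phi(\beta,z)/\partial z = p(\beta\Gamma)$ or at an endpoint, which is exactly the ``straightforward calculation'' the paper leaves implicit. Your additional remarks (continuity of the derivative across $z=x(\beta)$, $\Phi(\beta,0)=0$, and well-definedness on plateaus of the generalized inverse) just fill in details the paper omits.
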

Corollary \ref{cor:qcrem} implies that there are either one or two magnetic phase transitions, depending on $s(\beta)$. If $s(\beta) = \ln 2$ or, equivalently, $\bar{a}(1) = 0$, we find a single magnetic phase transitions at the critical magnetization 
\[ \Gamma_c^{(r)}(\beta) = \frac{1}{\beta} 
\arcosh\left(\frac12 e^{t(\beta)}  \right). \]
Otherwise, there is a second phase transition at
\[ \Gamma_c^{(l)}(\beta) = \frac{1}{\beta} 
\arcosh\left(\frac12 e^{s(\beta)}  \right). \]

An explicit computation using \eqref{eq:qcrem2} shows that the specific magnetisation in the transversal direction 
\[
m_z(\beta, \Gamma) = \frac{1}{\beta} \frac{ \partial }{\partial \Gamma} \Phi(\beta,\Gamma)  = \begin{cases}
 0  & p(\beta \Gamma) \leq s(\beta), \\
 (1-g_{\beta}(p(\beta \Gamma))) \tanh(\beta \Gamma) & s(\beta) < p(\beta \Gamma) < t(\beta), \\
\tanh(\beta \Gamma) & t(\beta) \leq p(\beta \Gamma)
\end{cases}
\]
is continuous. 
This transversal magnetic order does not vanish over the line $ \Gamma_c^{(r)}(\beta)  $ but rather only at $  \Gamma_c^{(l)}(\beta)  $ (which is absent in case $\bar{a}(1) = 0$). 
If the derivative of $\bar{a}(x)$ exists at $x=0$ or $x=1$, the second derivative of $\Phi(\beta,\Gamma)$ has a jump at the respective critical magnetic fields and we have a second-order magnetic transition and not first order as in the Quantum GREM, cf. Figure~\ref{fig:phase2}. 
In the classical model, the low-temperature glass phase is described by a random probability measure which captures the distribution of the spin overlaps \cite{BK04b,Bov06}. As with the GREM, it is an interesting question, which will be postponed to a future work, to study the influence of the transversal field on these quantities

\subsection{Quantum Parisi formula for the non-hierarchical GREM }

The non-hierarchical GREM was introduced in \cite{BK06} to illustrate Parisi's ultrametricity conjecture in an explicitly solvable model. We, similarly, want to study the non-hierarchical GREM with a transverse field, as we can explicitly determine the free energy. This is a basic test whether our results in Section~\ref{sec:peeling} are only strictly valid for hierarchical models or if one might hope that they still hold true to a certain extent for more complicated models.

As in the GREM we write $\pmb{\sigma} = \pmb{\sigma}_1 \ldots \pmb{\sigma}_n$ 
with $\pmb{\sigma}_k \in \mathcal{Q}^{(k)}_N$ and $L_{k} = x_{k} - x_{k-1}$ are interval lengths. We denote by $\mathcal{P}$ the power set of $\{1,\ldots,n\}$. To each subset $J = \{j_1,\ldots,j_m\} \in \mathcal{P}$ we assign the spin vector 
\[ \pmb{\sigma}_{J} = \pmb{\sigma}_{j_1}\ldots\pmb{\sigma}_{j_m}  \] 
and a nonnegative number $a_J \geq 0$ with $ a_\emptyset = 0 $. We further assume that the numbers $a_J$ add up to one, $ \sum_{J\in \mathcal{P}} a_J = 1 $. 
For each $J \in \mathcal{P}$ we denote by $X_{\pmb{\sigma}_J}^J$ independent standard Gaussian variables, which are also independent from each other
for different indices $J_1 \neq J_2$. The total Gaussian process $V$ on $\mathcal{Q}_N$ is given by
\[ V(\pmb{\sigma}) = \sqrt{N} \sum_{J \in \mathcal{P}} \sqrt{a}_{J} X_{\pmb{\sigma}_J}^J.  \]
The GREM from the previous sections is the special case where $a_{J} \neq 0$ only if $J = \{1,\ldots,k \}$ for $k \in \{ 1, \dots,  n\} $.
The reason for introducing this class of spin-glass models is that in contrast to the GREM the induced metric $\mathbb{E}\left[ |V(\pmb{\sigma}) - V(\pmb{\sigma^\prime}) |^2\right]^{1/2} $ 
does in general not satisfy ultrametricity, cf.~\cite{BK06}.

 In order to recall from~\cite{BK06} the  explicit expression for the limit of the classical free energy, we introduce the notion of a chain $\mathcal{C}$. A chain 
$S = \{A_0, A_1, A_2,\ldots, A_n \} \subset \mathcal{P}$
consists of nested sets $A_i$, i.e.
\[ \emptyset = A_0 \subset A_1 \subset A_2 \subset \ldots \subset A_n, \]
with cardinality $|A_i| = i$. To each chain 
$S = \{A_1,A_2,\ldots,A_n \} $ we assign a hierarchical GREM with weights
\[ a_k^{S} \coloneqq \sum_{A_{k-1} \subset D \subset A_k} a_D \]
and endpoints
\[ y_k^S \coloneqq \sum_{j \in A_k} L_j. \]
Note that $ \sum_{k = 1}^n  a_k^{S} = 1 $ for any chain $ S $ by construction. 
The corresponding hierarchical GREM's pressure converges and we denote the limit by $\Phi(\beta,S)$. Bolthausen and Kistler showed in \cite{BK06} that the limit of  the pressure in the non-hierarchical GREM converges as well. More precisely, we have
\begin{equation}
\Phi(\beta) \coloneqq \lim_{N \to \infty} \frac1N \ln \tr e^{-\beta V} = \min_{S \in \mathcal{C}} \Phi(\beta,S),
\end{equation}
where the minimum is taken over all chains. 

After these preparations, we are able to consider the non-hierarchical GREM with a transverse magnetic field whose weights are independent copies of some random variable $\mathfrak{b}$. We define the pressure as before,
\[ \Phi_N(\beta,\mathfrak{b}) \coloneqq \frac1N \ln \tr e^{-\beta(V-B)}. \]
The general theory developed  in Section~\ref{sec:peeling} also applies to the non-hierarchical GREM and yields

\begin{theorem}\label{thm:nonhgrem}
	Let $\beta \geq 0$ and $\mathfrak{b}$ an absolutely integrable random variable. Then, the pressure $ \Phi_N(\beta,\mathfrak{b})$ converges almost surely and the limit is given by 
	\begin{equation}\label{eq:nonhgrem}
	\Phi(\beta,\mathfrak{b}) \coloneqq \lim_{N \to \infty} \Phi_N(\beta,\mathfrak{b}) = \max_{D \in \mathcal{P}} \min_{S \in \mathcal{C}^{D}} \left[ \Phi_D(\beta,S)   + \sum_{k \in D^c} L_k \ \ee[\ln 2\cosh(\beta \mathfrak{b})] \right].
	\end{equation}
	Here, $\mathcal{C}^{D}$ denotes the set of chains which end at $D$, i.e.\ $S = \{A_0, A_1,\ldots A_m\} \in \mathcal{C}^{D}$ if and only if 
	\[ \emptyset = A_0 \subset A_1 \subset \cdots \subset A_m = D \]
	and $|A_i| = i$ for any $1 \leq i \leq m$. Moreover, $\Phi_D(\beta,S)$ is the pressure of the corresponding GREM on the reduced hypercube associated to $ D $. 
\end{theorem}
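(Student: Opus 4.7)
The plan is to layer the quantum hierarchical peeling of Section~\ref{sec:peeling} on top of the Bolthausen-Kistler reduction of the non-hierarchical GREM. Heuristically, the outer maximum over $D\in\mathcal{P}$ encodes the peeling decision of which species stay in spin-glass order (those indexed by $D$) and which collectively follow the transverse field, contributing $\sum_{k\in D^c}L_k\,\ee[\ln 2\cosh(\beta\mathfrak{b})]$; the inner minimum over chains $S\in\mathcal{C}^D$ is the classical Bolthausen-Kistler selection of the optimal hierarchical upper cover of the non-hierarchical potential $V$ restricted to the Hamming cube indexed by $D$.

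For the \emph{lower bound}, I would fix $D\in\mathcal{P}$ and restrict the trace defining $Z_N(\beta,\mathfrak{b})$ to the tensor-product subspace on which the spins in the blocks indexed by $D$ are in $z$-eigenstates and those in blocks indexed by $D^c$ are in eigenstates of the restricted transverse-field operator $\sum_{k\in D^c}\sum_{j\in \text{block}(k)} b_j\,\pmb{s}_j^{(1)}$. On this subspace the Hamiltonian decouples: the $D^c$-factor is a free paramagnet whose normalized log-trace converges almost surely to $\sum_{k\in D^c}L_k\,\ee[\ln 2\cosh(\beta\mathfrak{b})]$ by the strong law of large numbers for the i.i.d.\ weights $b_j$, while the $D$-factor is a classical non-hierarchical GREM whose pressure converges almost surely to $\min_{S\in\mathcal{C}^D}\Phi_D(\beta,S)$ by~\cite{BK06}. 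Maximizing over $D$ gives the $\liminf$ half of~\eqref{eq:nonhgrem}.

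For the \emph{upper bound}, I would apply the peeling machinery of Section~\ref{sec:peeling} directly to $H_N=V-B$. Peeling off the species indexed by $D^c$ as paramagnets reduces the problem to controlling the quantum pressure of $V$ restricted to the $D$-block. The remaining step is a quantum Gaussian comparison adapted to the peeling: for each chain $S\in\mathcal{C}^D$ one bounds the restricted covariance of $V$ by that of the hierarchical GREM $U_S$ with weights $\{a_k^S\}$ and applies the operator interpolation of Section~\ref{sec:peeling} to transfer the classical Slepian-Kahane bound to an inequality of mean pressures. Invoking Theorem~\ref{thm:qgrem} on the resulting hierarchical quantum GREM and minimizing over $S\in\mathcal{C}^D$ then produces the upper bound, with almost-sure convergence supplied by Proposition~\ref{prop:conc}.

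The main obstacle is the Gaussian comparison step inside the upper bound: the classical Slepian inequality does not lift verbatim to quantum partition functions, since $V$ no longer commutes with the transverse field $B$. Carrying it out requires combining the block decomposition of the Hamming cube with the operator interpolation of Section~\ref{sec:peeling}, chain by chain. A subsidiary technical point is verifying that the outer $\max_D$ generated by peeling and the inner $\min_{S\in\mathcal{C}^D}$ generated by Bolthausen-Kistler fit together into the nested expression of~\eqref{eq:nonhgrem} without a saddle-point loss; this is automatic once the Bolthausen-Kistler reduction is performed inside the peeling step for each fixed $D$.
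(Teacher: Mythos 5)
Your lower bound is essentially workable (though as stated it is loose: the Hamiltonian does not leave your product subspace invariant, so "restricting the trace" must be replaced by a Peierls--Bogoliubov/Gibbs variational argument in that product basis, and you must still argue that the cross pieces $X^F_{\pmb{\sigma}_F}$ with $F\cap D\neq\emptyset\neq F\cap D^c$ only contribute $o(N)$ after averaging over the $D^c$-spins). The genuine gap is in your upper bound, and it sits exactly where you flag the "main obstacle": there is no quantum Slepian comparison to invoke, and the interpolation machinery of Section~\ref{sec:peeling} does not supply one. The bound \eqref{eq:interpol} is a two-sided error estimate of size $\beta^2\|A_X-A_Y\|_\infty$ for two \emph{hierarchical} covariances that are uniformly close; it gives no one-sided monotonicity under pointwise covariance domination, because in the quantum interpolation derivative the nonnegative diagonal term and the nonnegative Duhamel term enter with opposite signs and only their totals are normalized, so a pointwise ordering of covariances does not fix the sign. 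Moreover the object you would need to compare, $V$ restricted to the $D$-blocks \emph{with its transverse field still present}, is non-hierarchical and differs from each $U_S$ by order-one amounts in covariance, so neither the Slepian mechanism nor the $\|A_X-A_Y\|_\infty$ estimate applies. In addition, your "peel off the species indexed by $D^c$" step does not fit Theorem~\ref{prop:key} as stated: the retained potential on the $D$-blocks, $\sum_{F\subset D}\sqrt{a_FN}X^F_{\pmb{\sigma}_F}$, is not an i.i.d.\ family indexed by $\pmb{\sigma}_D$, and the cross pieces are neither independent of it nor functions of $\pmb{\sigma}_D$ alone, so block-level peeling in one stroke is not justified.

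The paper's proof avoids both problems. It peels Gaussian piece by Gaussian piece: for each $X^I$ and each retained field $B^J$, Theorem~\ref{prop:key} (after reordering spins) forces the dichotomy "drop $X^I$" or "drop the field on $I$", yielding after iteration a maximum over pairs $(\mathcal{F},D)$ of retained Gaussian pieces and retained-field complements with $D^c\cap F=\emptyset$ for all $F\in\mathcal{F}$. A Jensen/Gibbs convexity argument (the $X^F$ are centered) then shows one may enlarge $\mathcal{F}$ to all of $\mathcal{P}(D)$ without decreasing the limit. Crucially, at that point the surviving Hamiltonian $\sum_{F\in\mathcal{P}(D)}\sqrt{a_FN}X^F_{\pmb{\sigma}_F}-B^{D^c}$ has \emph{no} transverse field on the $D$-blocks: the glass part is classical and commutes with the paramagnetic part, the trace factorizes, and the limit $\min_{S\in\mathcal{C}^D}\Phi_D(\beta,S)+\sum_{k\in D^c}L_k\,\ee[\ln 2\cosh(\beta\mathfrak{b})]$ follows from the classical result of \cite{BK06} (suitably rescaled to the reduced cube) together with the strong law of large numbers — no quantum Gaussian comparison and no appeal to Theorem~\ref{thm:qgrem} is needed at this stage. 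To repair your proposal you should replace the "quantum Slepian" step by this piece-by-piece peeling plus the convexity argument, so that all comparison arguments are only ever applied to classical (commuting) Hamiltonians.
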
 

The proof of this theorem is found in Section~\ref{sec:nonhier}.

The max-min structure of the limit in \eqref{eq:nonhgrem} seems to be quite universal as it also appears in the Parisi's formula for vector spin glasses \cite{Pan14}. This formula was used in \cite{AB19} to obtain an expression for the limit in the quantum SK-model. However, the maximum is essentially taken over the infinite-dimensional path overlap, which makes it hard to analyze. One might hope to find a less involved parametrization of the overlap distribution which is easier to access.

In fact, \eqref{eq:nonhgrem} can be further simplified: the limit $	\Phi(\beta,\mathfrak{b})$ does only depend on a single chain $S$.

\begin{corollary}\label{cor:nonhgrem}
	There exists a chain $S\in \mathcal{C}$ such that for any $\beta \geq 0$ and absolutely integrable variable $\mathfrak{b}$, 
	\begin{equation}\label{eq:cornonh}
      \Phi(\beta,\mathfrak{b}) =  \Phi(\beta,\mathfrak{b},S).
	\end{equation}
	Here $\Phi(\beta,\mathfrak{b},S)$ denotes the pressure of Quantum GREM assigned to $ S $, cf.~\eqref{eq:qgrem}.
\end{corollary}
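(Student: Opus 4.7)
The plan is to combine Theorem~\ref{thm:nonhgrem} with the classical extremal-chain construction of Bolthausen and Kistler~\cite{BK06}. Set $\Psi_D(\beta) := \min_{S \in \mathcal{C}^D} \Phi_D(\beta, S)$, which by~\cite{BK06} coincides with the classical pressure of the non-hierarchical GREM on the reduced hypercube $\mathcal{Q}_D$, and write $p(\beta, \mathfrak{b}) := \ee[\ln 2\cosh(\beta \mathfrak{b})]$ and $\ell_D := \sum_{k \in D} L_k$. Then Theorem~\ref{thm:nonhgrem} reads $\Phi(\beta, \mathfrak{b}) = \max_{D \in \mathcal{P}}[\Psi_D(\beta) + (1-\ell_D)\, p(\beta, \mathfrak{b})]$. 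On the other hand, Theorem~\ref{thm:qgrem} expresses $\Phi(\beta, \mathfrak{b}, S)$ for a chain $S = \{\emptyset, A_1, \ldots, A_n\}$ as a maximum of $\Phi_{A_K^S}(\beta, S|_{A_K^S}) + (1-\ell_{A_K^S})\, p(\beta, \mathfrak{b})$ over the concave-hull support points $K$ of $\bar{A}^S$, i.e.\ of terms having exactly the same shape as those appearing in the max defining $\Phi(\beta, \mathfrak{b})$.

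The main task is to exhibit a chain $S^* \in \mathcal{C}$ with the \emph{prefix-optimality} property: at every concave-hull support point $K$ of $\bar{A}^{S^*}$, the truncation $S^*|_{A_K^{S^*}}$ is optimal in $\mathcal{C}^{A_K^{S^*}}$, so that $\Phi_{A_K^{S^*}}(\beta, S^*|_{A_K^{S^*}}) = \Psi_{A_K^{S^*}}(\beta)$. I would construct $S^*$ by the greedy extremal-chain recipe of~\cite{BK06}: at each level $k = 1, \ldots, n$, augment $A_{k-1}$ by the element that minimizes the classical hierarchical pressure of the enlarged chain. Two features of this rule are essential: it is $\beta$-independent, depending only on the combinatorial-geometric data $\{a_J\}$ and $\{L_k\}$, so that $S^*$ will work uniformly in $(\beta, \mathfrak{b})$; and it is self-similar under restriction to sub-hypercubes $\mathcal{Q}_D$, so that every prefix of $S^*$ inherits the optimality of the full extremal chain.

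With $S^*$ in hand, prefix-optimality immediately yields $\Phi(\beta, \mathfrak{b}, S^*) \leq \Phi(\beta, \mathfrak{b})$ by matching each term of the former maximum with a term of the latter indexed by $D = A_K^{S^*}$. For the reverse inequality $\Phi(\beta, \mathfrak{b}) \leq \Phi(\beta, \mathfrak{b}, S^*)$, I would apply a Slepian/Kahane-type quantum Gaussian comparison built on the interpolation machinery of Section~\ref{sec:peeling}: linearly interpolate the Gaussian potential between the non-hierarchical $V$ and the hierarchical $U^{S^*}$ while keeping $B$ unchanged, and evaluate the derivative of $\ee \ln \tr e^{-\beta(V_t - B)}$ in $t$ by Gaussian integration by parts on the trace. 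The resulting overlap kernel is signed because the hierarchical covariance dominates the non-hierarchical one pointwise along the ordering dictated by $S^*$---this being the defining property of the concave hull $\bar{A}^{S^*}$---and $B$ enters only as a fixed additive perturbation of the Gibbs weight.

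The main obstacle is verifying that the outer maximum $\max_{D \in \mathcal{P}}$ in the formula for $\Phi(\beta, \mathfrak{b})$ is actually attained at some $D = A_K^{S^*}$ on the extremal chain, simultaneously for every $(\beta, \mathfrak{b})$---that is, that no subset lying off $S^*$ can strictly beat every prefix of $S^*$. Together with the $\beta$-independence of the extremal chain, this combinatorial universality is the delicate content of the corollary and is the place where the full strength of the BK06 extremal-chain analysis must be invoked.
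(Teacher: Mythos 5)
There is a genuine gap, and it sits exactly where you locate the difficulty. Your first inequality ($\Phi(\beta,\mathfrak{b},S^*)\le\Phi(\beta,\mathfrak{b})$ via prefix-optimality of the greedy chain) is sound and matches the paper's lower bound, which indeed rests on the extremal chain $S_0$ of Lemma~\ref{lem:nonhgrem} (greedy maximal-slope construction, $\beta$-independent, pointwise maximal concave hull). But your proposed route for the reverse inequality does not work. A signed Slepian/Kahane comparison is not available in the quantum setting: the Gaussian integration by parts carried out in Section~\ref{sec:peeling} produces, for $\pmb{\sigma}\neq\pmb{\sigma}'$, the two terms
\[
-\beta^2\,\frac{\langle \pmb{\sigma}|e^{H_t}|\pmb{\sigma}\rangle\langle \pmb{\sigma}'|e^{H_t}|\pmb{\sigma}'\rangle}{(\tr e^{H_t})^2}
+\beta^2\int_0^1\frac{\langle \pmb{\sigma}|e^{sH_t}|\pmb{\sigma}'\rangle\langle \pmb{\sigma}'|e^{(1-s)H_t}|\pmb{\sigma}\rangle}{\tr e^{H_t}}\,ds ,
\]
and the off-diagonal Duhamel term (absent classically) destroys any sign; this is precisely why the paper only extracts the absolute-value bound \eqref{eq:interpol} and applies Slepian's lemma exclusively to \emph{classical} (diagonal) pressures. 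Moreover your stated mechanism is reversed: for any chain the \emph{non-hierarchical} covariance dominates the hierarchical one pointwise (trivially, since all $D\subset A_{k}$ contribute whenever the blocks of $A_k$ agree), and this has nothing to do with the concave hull; so even the hypothesis feeding your comparison is not the one you would need to argue a sign.

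The second problem is that the step you flag as "the main obstacle" --- that no subset $D$ off the extremal chain can beat every prefix of $S^*$ in the max--min formula of Theorem~\ref{thm:nonhgrem} --- is not something you can simply delegate to the classical BK06 analysis; it is the actual content of the corollary and the paper proves it by hand. Concretely, for an arbitrary $D\in\mathcal{P}$ one passes to the ordered-restriction chain $S_0^D$ (so $\min_{S\in\mathcal{C}^D}\Phi(\beta,S)\le\Phi(\beta,S_0^D)$), then compares the partial pressures of $S_0^D$ with rescaled partial pressures of $S_0$ using the weight inequalities $a_l^{S_0^D}\le a_{j_l}^{S_0}$, the maximality of $\bar A_{S_0}$ from Lemma~\ref{lem:nonhgrem}, and a classical Slepian argument as in Lemma~\ref{lem:x=y}; finally the discrete concavity estimate of Lemma~\ref{lem:concav} together with the explicit QGREM formula \eqref{eq:qgrem} shows the resulting mixture of glass and paramagnetic terms is dominated by $\Phi(\beta,\mathfrak{b},S_0)$. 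Without either a valid quantum comparison inequality or this combinatorial domination argument, your proposal establishes only one of the two inequalities.
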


Corollary~\ref{cor:nonhgrem}, whose proof is also found in Section~\ref{sec:nonhier}, shows that the non-hierarchical GREM in a transversal field is at least on a thermodynamical level equivalent to an ordinary Quantum GREM.

 \section{Hierarchical peeling}\label{sec:peeling}
 In this section, we present the general principle, which we dubbed hierarchical peeling, from which the main results presented in the previous section will follow.
 We first describe the core of this idea in the binary setup.
 
 \subsection{Peeling principle}
 
We start by describing the general setting. Picking a parameter  $0 < x \leq 1$, we will decompose the hypercube $  \mathcal{Q}_N $  into two reduced hypercubes on arrays of length $ \lceil x N \rceil $ and $ N - \lceil x N \rceil $. 
Accordingly, we write $\pmb{\sigma} = ( \pmb{\sigma}_1 , \pmb{\sigma}_2) \in \mathcal{Q}_N $, where $\pmb{\sigma}_1 \in \mathcal{Q}_N^{(1)} := \mathcal{Q}_{\lceil xN  \rceil  }$ and $\pmb{\sigma}_2 \in  \mathcal{Q}_N^{(2)}  := \mathcal{Q}_{N - \lceil x N \rceil }$. We consider Hamiltonians $H = U -B$, where $U$ is a random potential on $\mathcal{Q}_N$ and $B$ is a random transversal field. We will need to require several assumptions on $H$. We start with $U$:

\begin{assumption}[Assumptions on $U$] \label{ass1}
	The random potential $U(\pmb{\sigma})$ on $\mathcal{Q}_N$ takes the form
	\begin{equation}
	U(\pmb{\sigma}) = V_N(\pmb{\sigma})  + X_{\pmb{\sigma}_1} 
	\end{equation}
	with some random potential $V_N$ which is independent of the random process $X_{\pmb{\sigma}_1}$. The random variables $X_{\pmb{\sigma}_1}$ with $\pmb{\sigma}_1 \in \mathcal{Q}_N^{(1)}$ are absolutely integrable, centered, and satisfy: 
	\begin{itemize}
		\item[1.]  $X_{\pmb{\sigma}_1}$ are independent and identically distributed for each fixed $N \in \nn$. 
		\item[2.] The pushforward measures $\mu_N$ of the negative parts $X^{-}_{\pmb{\sigma}_1}/N$ satisfy a large deviation principle (LDP) with a lower semi-continuous rate function $I \colon \rr \to [0,\infty]$, i.e., for any Borel set $\mathcal{B} \subset \rr$,
		\begin{equation}\label{eq:largedev1}
		- \inf_{x \in \mathrm{int}(\mathcal{B})} I(x) \leq \liminf_{N \to \infty} \frac1N \ln \mu_N(\mathcal{B}) \leq \limsup_{N \to \infty} \frac1N \ln \mu_N(\mathcal{B}) \leq - \inf_{x \in \mathrm{clos}(\mathcal{B})}
		I(x). \end{equation}
		Moreover,  we assume 
		\begin{equation}\label{eq:largedev2}
		\inf_{ x \in (-\infty,-\epsilon]} I(x) > 0
		\end{equation}
		for any $\epsilon> 0$.
		\item[3.]  For any random weights 
		$w_{\pmb{\sigma}_1} $ which are independent from $X_{\pmb{\sigma}_1}$
		and further fulfill almost surely 
		\[ w_{\pmb{\sigma}_1} \geq 0, \quad \sum_{\pmb{\sigma}_1 \in \mathcal{Q}_N^{(1)}} w_{\pmb{\sigma}_1} = 1,  \]
		a generalized strong law holds true almost surely, that is, 
		\begin{equation}\label{eq:gslln}
		\lim_{N \to \infty} \frac1N \sum_{\pmb{\sigma}_1 \in \mathcal{Q}_N^{(1)}} w_{\pmb{\sigma}_1  } X_{\pmb{\sigma}_1} = 0.
		\end{equation}
	\end{itemize}	
\end{assumption}
As will be discussed in the next subsection, we will mostly be interested in hierarchical $V_N$ as in the GREM or CREM, but our results also applies to the more general situation. An important example where $V_N$ is not of CREM type is the case of a non-vanishing longitudinal magnetic field. \\
The LDP \eqref{eq:largedev1} with \eqref{eq:largedev2} ensure that  probabilities of the type $\pp(X_{\pmb{\sigma}_1} < - \epsilon N)$ decay exponentially in $N$ for any $\epsilon > 0$. Assumption \eqref{eq:gslln} is a technical condition needed for our proof of Theorem~\ref{prop:key}. 
The following examples of random variables $X_{\pmb{\sigma}_1}$ meet Assumption~\ref{ass1} which can be seen by the sufficient criterion Lemma~\ref{lem:slln} that we present in the appendix :
\begin{itemize}
	\item[1.] $X_{\pmb{\sigma}_1} = \sqrt{Na} Y_{\pmb{\sigma}_1}$ with independent standard Gaussian $Y_{\pmb{\sigma}_1}$ and some $a > 0$. The rate function of the negative part is $I(x) = \frac{a}{2} x^2 \mathbbm{1}_{x < 0}$.
	\item[2.] Another example is  $X_{\pmb{\sigma}_1} = -N Y_{\pmb{\sigma}_1}$, where $Y_{\pmb{\sigma}_1}$ are independent and follow an exponential distribution with parameter $N$. The rate function of the negative part is $I(x) = |x| \mathbbm{1}_{x < 0}$.
	\item[3.] More generally, let $Y \leq 0$ be a random variable with a decay of the form $\exp(-C t^{\alpha})$ for some $\alpha, C > 0$, that is, 
	\[ - \lim_{t \to  \infty} t^{-\alpha} \ln \pp(Y <  t) = C. \]
	Then, we define $X_{\pmb{\sigma}_1} = N^{1-1/\alpha} Y_{\pmb{\sigma}_1}$, where $Y_{\pmb{\sigma}_1}$ are independent copies of $Y$. The corresponding rate function is given by $I(x) = C |x|^{\alpha} \mathbbm{1}_{x < 0}$. 
\end{itemize}

We consider a not necessarily constant transversal magnetic field $ B = \sum_{j=1}^{N} b_j \pmb{s}_j^{(1)} $ as in \eqref{def:B}  with $(b_j)$  random variables which do not need to be independent from each other. 
The transversal field $ B $  splits  into two parts $B^{1,x} $ and $B^{2,x}$, which act exclusively on the respective part of the array, 
 \[ \begin{split}
 B^{1,x} |\pmb{\sigma} \rangle &\coloneqq \sum_{i=1}^{\lceil x N \rceil} b_i \pmb{s}_i^{(1)} |\pmb{\sigma} \rangle =  \big|  B_{\big| \mathcal{Q}_N^{(1)} } \pmb{\sigma}_1, \pmb{\sigma}_2 \rangle, \\
 B^{2,x} |\pmb{\sigma} \rangle &\coloneqq  \sum_{i=\lceil x N \rceil+1}^{N} b_i \pmb{s}_i^{(1)} |\pmb{\sigma} \rangle  =   \big|\pmb{\sigma}_1, B_{\big| \mathcal{Q}_N^{(2)} }\pmb{\sigma}_2 \rangle.
 \end{split}   \]
 If $x =1$, we simply set $B^{2,x} = 0$.
Subsequently, we assume the following on the transversal field $B$:

\begin{assumption}[Assumptions on $B$]\label{ass2}
  The random weights $(b_j)$ are independent of the potential $U$ and  satisfy almost surely 
  \begin{equation}\label{eq:ass2}
  \limsup_{N \to \infty} N^{-1} \sqrt{\sum_{j=1}^{N} |b_j|^2} = 0.
  \end{equation}
\end{assumption}

Let us discuss some sufficient conditions on the weights $(b_j)$ which ensure the validity of Assumption~\ref{ass2}:

\begin{itemize}
	\item[1.] Assumption~\ref{ass2} obviously covers the constant field case $b_j = \Gamma \geq 0$.
	\item[2.] If the weights are almost surely dominated by $\sqrt{N}$, that is,
	\begin{equation}
	\limsup_{N \to \infty} N^{-\frac12} \max_{1\leq j \leq N} |b_j| = 0,
	\end{equation}
	then \eqref{eq:ass2} holds true.
	\item[3.] In view of the framework in Section~\ref{sec:intro}, we are mostly interested in weights $(b_j)$ forming independent copies of an absolute integrable random variable $\mathfrak{b}$. Then, \eqref{eq:ass2} is satisfied and this result is presented as Lemma~\ref{lem:upb} in the appendix. If we additionally assume that $\ee[|\mathfrak{b}|^r]$ is finite for some $r >1$, Assumption~\ref{ass2} is easily verified. If $r \in [1,2]$, then
	\[ N^{-1} \sqrt{\sum_{i=1}^{N} |b_i|^2} \leq N^{-(1-1/r)} \left(N^{-1}\sum_{i=1}^N |b_i|^r\right)^{1/r}.  \]
	The term in the bracket converges almost surely to a constant by the strong law of large numbers. So \eqref{eq:ass2} is fulfilled .
\end{itemize}

If Assumption \ref{ass1} and \ref{ass2} holds true, our main results states that  the pressure 
\[ \Phi_N(\beta) \coloneqq \frac1N \ln  \tr\left[ e^{-\beta (U - B ) } \right] \]
asymptotically agrees with the maximum of the partial pressures
\[ \Phi_N^{(1)}(\beta) \coloneqq \frac1N \ln \tr\left[e^{-\beta(V_N - B)}   \right] \quad
\mbox{and} 
\quad \Phi_N^{(2)}(\beta)  \coloneqq  \frac1N \ln  \tr\left[ e^{-\beta (U - B^{2,x} ) } \right] \]
even if $\Phi_N^{(1)}(\beta)$ and $\Phi_N^{(2)}(\beta)$ do not converge:
 \begin{theorem}\label{prop:key}
 	Under Assumption~\ref{ass1} and \ref{ass2} for any $ x \in (0,1] $ we have the almost sure convergence
 	\begin{equation}\label{eq:prop}
 	\limsup_{N \to \infty} |\Phi_N(\beta) -  \max\{\Phi_N^{(1)}(\beta), \Phi_N^{(2)}(\beta)  \}| = 0.
 	\end{equation}
 	 
 \end{theorem}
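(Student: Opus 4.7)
The plan is to prove the two bounds $\Phi_N(\beta)\ge\max\{\Phi_N^{(1)}(\beta),\Phi_N^{(2)}(\beta)\}-o(1)$ and $\Phi_N(\beta)\le\max\{\Phi_N^{(1)}(\beta),\Phi_N^{(2)}(\beta)\}+o(1)$ almost surely, using Peierls--Bogoliubov for the lower bound and Golden--Thompson combined with a large-deviations cutoff for the upper bound.

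For the lower bound I would apply the Peierls--Bogoliubov (PB) inequality to two distinct splittings of $-\beta H$. Writing $-\beta H=-\beta(V_N-B)+(-\beta X)$, PB gives
\[ \Phi_N(\beta)\ge\Phi_N^{(1)}(\beta)-\frac{\beta}{N}\langle X\rangle_{V_N-B}=\Phi_N^{(1)}(\beta)-\frac{\beta}{N}\sum_{\pmb{\sigma}_1}w_{\pmb{\sigma}_1}X_{\pmb{\sigma}_1}, \]
where $w_{\pmb{\sigma}_1}\ge 0$, $\sum w_{\pmb{\sigma}_1}=1$ are the block-1 marginals of the Gibbs state of $V_N-B$. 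By Assumption~\ref{ass1}, $V_N$ and $B$ are independent of $X$, so the weights $w_{\pmb{\sigma}_1}$ are independent of $X$, and the generalised strong law \eqref{eq:gslln} of Assumption~\ref{ass1}.3 yields $N^{-1}\sum_{\pmb{\sigma}_1}w_{\pmb{\sigma}_1}X_{\pmb{\sigma}_1}\to 0$ almost surely. Writing instead $-\beta H=-\beta(U-B^{2,x})+\beta B^{1,x}$, PB gives $\Phi_N(\beta)\ge\Phi_N^{(2)}(\beta)+\frac{\beta}{N}\langle B^{1,x}\rangle_{U-B^{2,x}}$; since $U-B^{2,x}$ commutes with every projection $|\pmb{\sigma}_1\rangle\langle\pmb{\sigma}_1|\otimes I_{(2)}$, its Gibbs state is block-diagonal in $\pmb{\sigma}_1$, while $B^{1,x}$ is purely off-diagonal in $\pmb{\sigma}_1$, so the expectation vanishes identically.

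For the upper bound I would combine Golden--Thompson with a probabilistic cutoff on the favourable set. Fix $\epsilon>0$, set $G_\epsilon:=\{\pmb{\sigma}_1\in\mathcal{Q}_N^{(1)}:X_{\pmb{\sigma}_1}<-\epsilon N\}$, and let $P_G$ denote the corresponding block-1 projection. Assumption~\ref{ass1}.2 with $c_\epsilon:=\inf_{y\le-\epsilon}I(y)>0$ together with a Borel--Cantelli argument yield $|G_\epsilon|\le 2^{\lceil xN\rceil}\exp(-Nc_\epsilon+o(N))$ almost surely. Applying Golden--Thompson to $H=(V_N-B)+X$ and using the diagonal operator inequality $e^{-\beta X}\le e^{\beta\epsilon N}(I-P_G)+e^{-\beta X}P_G$, one obtains
\[ Z_N(\beta)\le e^{\beta\epsilon N}Z_N^{(1)}(\beta)+\tr\!\bigl[e^{-\beta(V_N-B)}\,P_G\,e^{-\beta X}\bigr]. \]
Granted the \emph{bad-part bound} $\tr[e^{-\beta(V_N-B)}P_G e^{-\beta X}]\le e^{o(N)}Z_N^{(2)}(\beta)$ a.s., the two terms combine into $Z_N\le 2\,e^{\beta\epsilon N+o(N)}\max\{Z_N^{(1)},Z_N^{(2)}\}$, and then $\epsilon\downarrow 0$ closes the argument.

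The main obstacle is precisely this bad-part bound; it is the operator-theoretic heart of hierarchical peeling. Using Lemma~\ref{lem:possemgr} I would first reduce to $b_j\ge 0$, so that $e^{-\beta(V_N-B)}$ has a non-negative Feynman--Kac/Trotter kernel in the $\pmb{\sigma}$-basis. Then $\sum_{\pmb{\sigma}_2}\langle\pmb{\sigma}_1,\pmb{\sigma}_2|e^{-\beta(V_N-B)}|\pmb{\sigma}_1,\pmb{\sigma}_2\rangle$ decomposes as a path integral over continuous-time block-1 trajectories with jump rates $b_j$; the zero-flip contribution equals exactly $\tr_{(2)}e^{-\beta(V_N(\pmb{\sigma}_1,\cdot)-B^{2,x})}$, which after multiplication by $e^{-\beta X_{\pmb{\sigma}_1}}\mathbbm{1}_{\pmb{\sigma}_1\in G_\epsilon}$ and summation over $\pmb{\sigma}_1$ is bounded by $Z_N^{(2)}$. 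The delicate step is to control the contributions of paths that do flip block-1 coordinates: they visit intermediate configurations $\pmb{\sigma}_1'$ whose $V_N(\pmb{\sigma}_1',\cdot)$-weights are in principle arbitrary, and one must resum their contributions across the full block-1 sum defining $Z_N^{(2)}$, exploiting the independence of $\{X_{\pmb{\sigma}_1'}\}$ from $V_N$ together with the LDP-sparsity of $G_\epsilon$, so that the total penalty is only $e^{o(N)}$. Assumption~\ref{ass2} enters here through its control of the combinatorial Dyson-series factor generated by $B^{1,x}$-spreading, which it forces to be subexponential in $N$. Carrying out this uniform-in-$\pmb{\sigma}_1$ combinatorial/probabilistic estimate for general random $V_N$ is the principal technical challenge.
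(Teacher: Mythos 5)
Your lower bound is fine and is essentially the paper's argument: the two Peierls--Bogoliubov applications coincide with the paper's twofold use of the Gibbs variational principle, with the same weights $w_{\pmb{\sigma}_1}$ fed into Assumption~\ref{ass1}.3, and the same observation that the Gibbs state of $U-B^{2,x}$ is block-diagonal in $\pmb{\sigma}_1$ so that $\tr B^{1,x}\rho_\beta^{(2)}=0$.

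The upper bound, however, has a genuine gap, and the step you ``grant'' is not merely unproven --- as formulated it is false. Test it on the simplest admissible case $V_N=0$, $x=1$, $b_j=\Gamma>0$, $X_{\pmb{\sigma}}=\sqrt{Na}\,Y_{\pmb{\sigma}}$ (the QREM). Then
\begin{equation*}
\tr\bigl[e^{-\beta(V_N-B)}P_G e^{-\beta X}\bigr]=\bigl(\cosh\beta\Gamma\bigr)^N\sum_{\pmb{\sigma}\in G_\epsilon}e^{-\beta X_{\pmb{\sigma}}},
\end{equation*}
and at low temperature the last sum is dominated by the deep configurations, which lie in $G_\epsilon$ for small $\epsilon$, so it carries the full classical rate $\beta\sqrt{2a\ln 2}+o(1)$ per spin. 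The bad term therefore has exponential rate $\ln\cosh(\beta\Gamma)+\beta\sqrt{2a\ln 2}$, which strictly exceeds $\max\{\ln 2\cosh(\beta\Gamma),\,\beta\sqrt{2a\ln 2}\}$ for $\Gamma>0$ and $\beta$ large; it is not $e^{o(N)}Z_N^{(2)}$, nor even $e^{o(N)}\max\{Z_N^{(1)},Z_N^{(2)}\}$. The structural reason is that your initial global Golden--Thompson step decouples $X$ from the block-1 field \emph{everywhere}, so that on the large-deviation set each single deep configuration still carries the full paramagnetic factor $\prod_{j\le\lceil xN\rceil}\cosh(\beta b_j)$: the LDP sparsity of $G_\epsilon$ cannot remove a factor attached to every term, and Assumption~\ref{ass2} does not make this factor subexponential (the constant field satisfies \eqref{eq:ass2} yet gives $e^{\Theta(N)}$). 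No resummation of the flipping paths can rescue an inequality that fails at the level of exponential rates.

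The paper's upper bound avoids exactly this trap: it never separates $X$ from $B$ on the bad set. Instead it decomposes $\ell^2(\mathcal{Q}_N)=\bigl(\ell^2(\mathcal{L}_\epsilon)\otimes\mathcal{H}^2\bigr)\oplus\bigl(\ell^2(\mathcal{L}_\epsilon^c)\otimes\mathcal{H}^2\bigr)$ and removes only the coupling operator $A_{\mathcal{L}_\epsilon}=B^{1,x}-B^{1,x}_{\mathcal{L}_\epsilon^c}$ at cost $e^{\beta\|A_{\mathcal{L}_\epsilon}\|}$, keeping $U-B^{2,x}$ intact on the bad block (which is then bounded by $Z_N^{(2)}$) and dropping $X$ only on the good block (cost $e^{\beta\epsilon N}$, giving $Z_N^{(1)}$ after a second Golden--Thompson). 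The whole burden then falls on showing $\|A_{\mathcal{L}_\epsilon}\|=o(N)$, which is Proposition~\ref{prop:oldpap}: the maximal gap-connected components of $\mathcal{L}_\epsilon$ have bounded cardinality (Lemma~\ref{lem:gapc}, via a union bound and Borel--Cantelli), so $A_{\mathcal{L}_\epsilon}$ splits into a direct sum of small blocks and the Frobenius bound $\|A_{\mathcal{L}_\epsilon}\|\le\sqrt{2\max_\alpha|\mathcal{C}_\epsilon^\alpha|\sum_j|b_j|^2}$ together with Assumption~\ref{ass2} gives $o(N)$. This cluster-size estimate and the Dirichlet-restriction of the field on $\mathcal{L}_\epsilon$ are the missing ideas in your proposal; with them your cutoff scheme becomes the paper's proof, without them the bad-part bound cannot be repaired.
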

Roughly speaking, the Gaussian variables $(X_{\pmb{\sigma}_1})$ and the partial magnetic term $B^{1,x}$ only contribute separately from each other to the free energy.
This result may be regarded as a generalization of the limit theorem for the QREM in \cite{MW19}. Let us remark that if 
	the almost-sure limits
	\[ \Phi^{(1)}(\beta) := \lim_{N \to \infty} \Phi_N^{(1)}(\beta), \quad \mbox{and}\quad  
	\Phi^{(2)}(\beta) := \lim_{N \to \infty} \Phi_N^{(2)}(\beta)  \]
	exist for any $\beta \geq 0$, we immediately obtain 
	\begin{equation}\label{eq:prop2}
	\lim_{N \to \infty} \Phi_N(\beta) = \max\{\Phi^{(1)}(\beta), \Phi^{(2)}(\beta)  \}.
	\end{equation}

For a proof of~Theorem~\ref{prop:key} the methods in \cite{MW19} are robust enough to be extended.
We briefly recall some notations and results necessary. For $ \epsilon > 0 $ we denote  the large deviation set of $  X_{\pmb{\sigma}_1} $ by
\begin{equation}
 \mathcal{L}_\epsilon \coloneqq \left\{\pmb{\sigma}_1 \in \mathcal{Q}_{N}^{(1)} \big| \  X_{\pmb{\sigma}_1} \leq - \epsilon N \right\}  . 
 \end{equation}
The difference between $B^{1,x}$ and its Dirichlet restrictions  to this large deviation set is
\[   A_{\mathcal{L}_\epsilon} := B^{1,x} - B^{1,x}_{ \mathcal{L}_\epsilon^c} , \] 
which acts non-trivially only on the first component  $\mathcal{H}^1 :=  \ell^2(\mathcal{Q}_{N}^{(1)} ) $ of the tensor-product Hilbert space 
$ \ell^2(\mathcal{Q}_N) = \mathcal{H}^1  \otimes \mathcal{H}^2 $ with $  \mathcal{H}^2 :=  \ell^2( \mathcal{Q}_{N}^{(1)} ) $.
We will need the following generalization of Lemma 2 and Lemma~3 in \cite{MW19}:
\begin{proposition}\label{prop:oldpap}
Under Assumption~\ref{ass1} and \ref{ass2}, for any $ \epsilon > 0 $ and $ x \in (0, 1] $ we have almost surely
\begin{equation} \label{eq:norm}  \limsup_{N \to \infty} N^{-1} \|A_{\mathcal{L}_\epsilon}\|  = 0.  \end{equation}
\end{proposition}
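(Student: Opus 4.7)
The plan is to adapt the norm estimates of Lemmas~2 and~3 in \cite{MW19}, which treated the QREM with a constant transversal field, to the present setting of non-constant random weights $(b_j)$ and general large-deviation rate functions. I split $A_{\mathcal{L}_\epsilon}$ into three pieces: the inside block $P_{\mathcal{L}_\epsilon} B^{1,x} P_{\mathcal{L}_\epsilon}$ and the two cross blocks with $P_{\mathcal{L}_\epsilon^c} = I - P_{\mathcal{L}_\epsilon}$, where $P_{\mathcal{L}_\epsilon}$ is the orthogonal projection on $\mathcal{H}^1$ onto the span of $\{\ket{\pmb{\sigma}_1} : \pmb{\sigma}_1 \in \mathcal{L}_\epsilon\}$, tensored with the identity on $\mathcal{H}^2$. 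For each block one applies Schur's test, reducing the norm bound to a maximum weighted row- and column-sum.

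The key observation is that for any $\pmb{\sigma}_1 \in \mathcal{Q}_N^{(1)}$ the count
\[ \tilde Y(\pmb{\sigma}_1) := \bigl|\{i \leq \lceil xN\rceil : \pmb{\sigma}_1^{(i)} \in \mathcal{L}_\epsilon\}\bigr| \]
of $\mathcal{L}_\epsilon$-neighbors of $\pmb{\sigma}_1$ follows a $\mathrm{Binomial}(\lceil xN\rceil, \delta_N)$ law, where $\delta_N := \mathbb{P}(X_{\pmb{\sigma}_1} \leq -\epsilon N) \leq e^{-\kappa N}$ for $N$ large, with $\kappa = \kappa(\epsilon) > 0$, by Assumption~\ref{ass1}.2. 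A standard Binomial tail estimate combined with a union bound over the $2^{\lceil xN\rceil}$ configurations yields $\max_{\pmb{\sigma}_1} \tilde Y(\pmb{\sigma}_1) \leq k_0$ almost surely for $N$ large, with a finite constant $k_0 = k_0(\epsilon, x)$ (roughly $\lceil x \ln 2/\kappa \rceil + 1$). Hence the weighted sums $\sum_{i : \pmb{\sigma}_1^{(i)} \in \mathcal{L}_\epsilon} |b_i|$ that appear as row or column sums in Schur's test are bounded uniformly by $k_0 \|b\|_\infty$; and by Assumption~\ref{ass2} one has $\|b\|_\infty \leq \|b\|_2 = o(N)$. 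Combining these bounds for each of the three blocks of $A_{\mathcal{L}_\epsilon}$ gives $\|A_{\mathcal{L}_\epsilon}\|/N \to 0$ almost surely; the generalized strong law of Assumption~\ref{ass1}.3 is invoked to pass from convergence in probability to the almost-sure statement for the auxiliary weighted sums that arise in the cross-block estimate.

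The main obstacle is that the naive Hilbert--Schmidt bound $\|A_{\mathcal{L}_\epsilon}\|_{\mathrm{HS}}^2 \leq 2\,|\mathcal{L}_\epsilon|\,\|b\|_2^2$ fails whenever $\kappa(\epsilon) < x \ln 2$, because $|\mathcal{L}_\epsilon|$ is then exponentially large in $N$. The decisive refinement is that even though the absolute size $|\mathcal{L}_\epsilon|$ may be exponentially large, the \emph{degree} of every single configuration into $\mathcal{L}_\epsilon$ stays bounded by a finite constant almost surely, since the expected value $\lceil xN\rceil\, \delta_N$ of $\tilde Y$ is exponentially small; this enables a Schur-type bound even in the regime where the Hilbert--Schmidt approach breaks down.
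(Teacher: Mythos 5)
Your degree bound is sound: the number of nearest neighbours of any $\pmb{\sigma}_1$ lying in $\mathcal{L}_\epsilon$ is indeed Binomial$(\lceil xN\rceil,\delta_N)$ with $\delta_N\le e^{-\kappa N}$, and the union bound plus Borel--Cantelli gives an almost sure constant bound $k_0$; this is the same probabilistic mechanism as in the paper. The gap is in the cross-block estimate. For $M=P_{\mathcal{L}_\epsilon}B^{1,x}P_{\mathcal{L}_\epsilon^c}$, Schur's test needs \emph{both} the maximal sum over the $\mathcal{L}_\epsilon^c$ side and the maximal sum over the $\mathcal{L}_\epsilon$ side. Only the former has the form $\sum_{i:\,\pmb{\sigma}_1^{(i)}\in\mathcal{L}_\epsilon}|b_i|\le k_0\|b\|_\infty$; the sums taken from a row indexed by $\pmb{\sigma}_1\in\mathcal{L}_\epsilon$ run over its neighbours \emph{in} $\mathcal{L}_\epsilon^c$, i.e.\ over essentially all $\lceil xN\rceil$ directions, and are only bounded by $\sum_i|b_i|$. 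So your claim that all row and column sums are $\le k_0\|b\|_\infty$ is false for the off-diagonal blocks, and Schur gives at best $\|M\|\le\sqrt{k_0\,\|b\|_\infty\sum_i|b_i|}$. Assumption~\ref{ass2} only controls $\|b\|_2$, and this product need not be $o(N^2)$: take for instance $b_1=N^{9/10}$ and $b_2=\dots=b_N=N^{1/5}$, which satisfies \eqref{eq:ass2} (indeed $\|b\|_2\sim N^{9/10}$) but makes your cross-block bound of order $N^{21/20}\gg N$. (For i.i.d.\ $L^1$ weights one could rescue the estimate by invoking the strong law for $\sum_i|b_i|$, but the proposition is stated, and used, under Assumption~\ref{ass2} alone.) Also, the appeal to Assumption~\ref{ass1}.3 does not repair this: the generalized strong law \eqref{eq:gslln} plays no role in this proposition — the paper needs it only for Theorem~\ref{prop:key}.

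The missing ingredient is control of $\mathcal{L}_\epsilon$ at \emph{distance two}, not just the nearest-neighbour degree, and a norm bound expressed through $\|b\|_2$ rather than through a full row sum times $\|b\|_\infty$. The paper decomposes $A_{\mathcal{L}_\epsilon}$ as a direct sum over the maximal gap-connected components of $\mathcal{L}_\epsilon$ (connected clusters in the supergraph where vertices at Hamming distance $2$ are also joined); distinct maximal components share no common neighbour, so the blocks genuinely decouple. Lemma~\ref{lem:gapc} bounds the maximal cluster size by a constant $K$ almost surely, by exactly the kind of union-bound argument you use, and then the Frobenius estimate $\|A_{\mathcal{C}_\epsilon^\alpha}\|\le\bigl(2K\sum_i|b_i|^2\bigr)^{1/2}=\sqrt{2K}\,\|b\|_2=o(N)$ finishes the proof using only \eqref{eq:ass2}. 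If you prefer to keep your block scheme, you can instead bound $\|M\|^2=\|MM^*\|$: the matrix $MM^*$ is supported on pairs of $\mathcal{L}_\epsilon$-points at Hamming distance $0$ or $2$, its diagonal entries are at most $\|b\|_2^2$, and once you upgrade your binomial argument to bound the number of $\mathcal{L}_\epsilon$-points within distance two of any fixed vertex by a constant, its row sums are $O(\|b\|_2^2)$, giving $\|M\|=O(\|b\|_2)=o(N)$. Either route ends up controlling the same distance-two geometry that the gap-connected components encode.
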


 The proof of Proposition~\ref{prop:oldpap} is based on an estimate for the maximal size of the so-called gap-connected components of $\mathcal{L}_\epsilon$, which are defined as follows:

\begin{definition}
	Let $\widetilde{\mathcal{Q}}^{(1)}_{N}$ be the supergraph of the Hamming cube $\mathcal{Q}_{N}^{(1)}$, which one obtains by adding the edges $\{ \pmb{\sigma}_1, \pmb{\sigma}_1'  \}$, where $\pmb{\sigma}_1,\pmb{\sigma}_1'$ are two vertices with $d(\pmb{\sigma}_1,\pmb{\sigma}_1') =2$. 
	We call $   \mathcal{C}_\varepsilon \subset \mathcal{L}_\varepsilon $  a gap-connected component, if $   \mathcal{C}_\varepsilon$  is connected as a subset of $\widetilde{\mathcal{Q}}_N^{(1)}$.
	A gap-connected component $  \mathcal{C}_\varepsilon $ is maximal if there is no other vertex $ \pmb{\sigma}_1\in  \mathcal{L}_\varepsilon \backslash  \mathcal{C}_\varepsilon $ such that $  \mathcal{C}_\varepsilon \cup \{ \pmb{\sigma}_1 \} $ 
	forms a gap-connected component. We denote by $\left( \mathcal{C}_\epsilon^{\alpha}\right)_\alpha $ the maximal gap-connected components of $\mathcal{L}_\varepsilon$.
\end{definition}

We claim that the maximum of the cardinality $\max_{\alpha} \left| \mathcal{C}_\epsilon^{\alpha} \right| $ is almost surely of order one:
\begin{lemma}\label{lem:gapc}
	Under Assumption~\ref{ass1}, 
	 for any $ \epsilon > 0 $ and $ x \in (0, 1] $ there is $K > 0$ such that 
	\begin{equation} \label{eq:gap}  
	\limsup_{N \to \infty} \max_{\alpha} \left| \mathcal{C}_\epsilon^{\alpha} \right| \leq K
	 \end{equation}
	holds almost surely.
\end{lemma}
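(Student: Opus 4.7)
The plan is a standard first-moment / Borel--Cantelli argument applied to the event that $\mathcal{L}_\varepsilon$ contains a gap-connected subset of a prescribed (sufficiently large) size. Fix $\varepsilon > 0$, and set $\delta := \inf_{(-\infty,-\varepsilon]} I > 0$, which is strictly positive by Assumption~\ref{ass1}(2). Pick any $\delta' \in (0,\delta)$. Applying the LDP upper bound in \eqref{eq:largedev1} to the closed set $(-\infty,-\varepsilon]$ gives $\mu_N((-\infty,-\varepsilon]) \leq e^{-N\delta'}$ for all $N$ large enough; since the $X_{\pmb{\sigma}_1}$ are i.i.d.\ by Assumption~\ref{ass1}(1), for any fixed subset $S \subset \mathcal{Q}_N^{(1)}$ of cardinality $m$ one has $\mathbb{P}(S \subset \mathcal{L}_\varepsilon) \leq e^{-Nm\delta'}$.

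Next, I would bound the number of candidate gap-connected subsets of a given cardinality $m$. Writing $M := \lceil x N \rceil$, the maximum degree of the supergraph $\widetilde{\mathcal{Q}}_N^{(1)}$ is $M + \binom{M}{2} \leq M^2$. A classical tree-counting inequality (the number of connected vertex subsets of cardinality $m$ of a graph of maximum degree $d$ containing a fixed vertex is at most $(e d)^{m-1}$) together with $|\mathcal{Q}_N^{(1)}| \leq 2^{xN+1}$ gives at most $2^{xN+1}(eM^2)^{m-1}$ gap-connected subsets of size $m$. Since any maximal gap-connected component $\mathcal{C}_\varepsilon^\alpha$ of size $\geq m$ contains such a subset, a union bound yields
\begin{equation*}
\mathbb{P}\!\left( \max_{\alpha} |\mathcal{C}_\varepsilon^\alpha| \geq m \right)
\leq 2^{xN+1}(eM^2)^{m-1} e^{-Nm\delta'}
\leq C_m \, N^{2(m-1)} \exp\!\big(N\,(x\ln 2 - m\,\delta')\big).
\end{equation*}

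Choosing $K$ to be any integer satisfying $(K+1)\delta' > x\ln 2$ -- for instance $K = \lceil x\ln 2 /\delta' \rceil$ -- and setting $m = K+1$ makes the $N$-exponent strictly negative, so the right-hand side is summable in $N$. The Borel--Cantelli lemma then ensures that almost surely, for all but finitely many $N$, there is no gap-connected subset of $\mathcal{L}_\varepsilon$ of cardinality exceeding $K$, which is exactly \eqref{eq:gap}.

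The only real obstacle is the combinatorial subset-counting step, and it is dispatched by the maximum-degree lemma recalled above; everything else is a routine LDP + Borel--Cantelli estimate in the spirit of Lemma~2 of~\cite{MW19}, with the hypothesis~\eqref{eq:largedev2} replacing the explicit Gaussian tail bound used there.
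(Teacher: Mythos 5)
Your proof is correct and follows essentially the same first-moment / Borel--Cantelli strategy as the paper: independence of the $X_{\pmb{\sigma}_1}$ plus the LDP tail bound $e^{-N(\delta_\epsilon+o(1))}$, beaten by choosing $K$ larger than $x\ln 2/\delta_\epsilon$. The only difference is cosmetic: you enumerate gap-connected subsets of size $m$ directly via the maximum-degree tree-counting bound $(ed)^{m-1}$, whereas the paper encloses any size-$K$ component in a Hamming ball of radius $4K$ and bounds the number of configurations with $\binom{|B_{4K}|}{K}$; both give the same $2^{xN}\,\mathrm{poly}(N)^{K}\,e^{-K\delta_\epsilon N}$ estimate.
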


\begin{proof}
	We fix $K \in \nn$ and introduce the event 
	\[ \Omega_{\epsilon,K,N} \coloneqq \bigcap_{\pmb{\sigma}_1 \in \mathcal{Q}_{N}^{(1)}} \{|B_{4K}(\pmb{\sigma}_1) \cap \mathcal{L}_{\epsilon}| < K \}.  \]
	We note that for $\omega \in \Omega_{\epsilon,K,N}$ we always 
	have $\max_{\alpha} |C_\epsilon^{\alpha}| < K$ as any gap-connected component with $K$ vertices is contained in some ball of radius $4K$. We estimate the probability of the complement $\Omega_{\epsilon,K,N}^{c}$ using the union bound:
	\[ \begin{split} 
	\pp(\Omega_{\epsilon,K,N}^{c}) \leq \sum_{\pmb{\sigma}_1 \in \mathcal{Q}_{N}^{(1)}} \pp(|B_{4K}(\pmb{\sigma}_1) \cap \mathcal{L}_{\epsilon}| \geq K) \leq 2^{\lceil xN \rceil }  \binom{| B_{4K} | }{K}
	\pp(X_{\pmb{\sigma}_1} < - \epsilon N)^K
	\end{split}    \]
	The second inequality follows from the independence of the random variables $X_{\pmb{\sigma}_1}$.  
	The rate function $I$ of $X_{\pmb{\sigma}_1}/N$ satisfies $ \inf_{- \infty < z \leq - \epsilon} I(z) = \delta_{\epsilon} > 0 $, from which we conclude
	\[ \begin{split}  
	\pp(\Omega_{\epsilon,K,N}^{c}) \leq 2^{\lceil xN \rceil }  \frac{| B_{4K} |^K }{K!} e^{-KN(\delta_\epsilon + o(1) )} \exp\left( | B_{4K} | e^{- N (\delta_\epsilon + o(1) )} \right) .
		\end{split}  \]
	Since $ | B_{4K} | \leq e N^{4K} $, we may choose $K = K(\epsilon)$ large enough such that this probability decays exponentially fast. A Borel-Cantelli argument then yields the almost-sure bound
	\[ \limsup_{N \to \infty} \max_{\alpha} |C_\epsilon^{\alpha}| \leq K.    \]
\end{proof}

Proposition~\ref{prop:oldpap} is now a simple consequence of Assumption~\ref{ass2} and Lemma~\ref{lem:gapc}:
\begin{proof}[Proof of Proposition \ref{prop:oldpap}]
	
The operator $A_{\mathcal{L}_\epsilon}$ exhibits a natural decomposition as direct sum 
\[ A_{\mathcal{L}_\epsilon} = \bigoplus_{\alpha} A_{\mathcal{C}_\epsilon^{\alpha}},  \]	
where $A_{\mathcal{C}_\epsilon^{\alpha}}$ denotes the restriction of $A_{\mathcal{L}_\epsilon}$ to the maximal gap-connected component $\mathcal{C}_\epsilon^{\alpha}$.
The Frobenius norm bound 
\[ \|A_{\mathcal{L}_\epsilon} \| \leq \max_{\alpha} \|A_{\mathcal{C}_\epsilon^{\alpha}}\| \leq \sqrt{2 \max_{\alpha} |C_\epsilon^{\alpha}| \sum_{i=1}^{N} |b_i|^2},  \]	
together with Assumption~\ref{ass2} and Lemma~\ref{lem:gapc} completes the proof.
\end{proof} 

 We finally spell out the proof of Theorem~\ref{prop:key}:
  \begin{proof}[Proof of Theorem~\ref{prop:key}]
 	We separately establish an asymptotically sharp lower and upper bound.
 	
 	\noindent
 	\textit{Lower bound:} The lower bound rests on a twofold application of Gibbs' variational principle.
 	Let first $ \rho_\beta^{(1)}$ be the Gibbs state of the Hamiltonian $H^{(1)} = V_N - B$. Then, an 
 	application of the Gibbs variational principle (with $\rho = \rho_\beta^{(1)} $ and $ H = H^{(1)} + X_{\pmb{\sigma}_1} $) yields
 	\[ \Phi_N(\beta)  = N^{-1} \sup_\rho  \left[\beta \, \tr \left(H \rho \right) +  \tr\left( \rho \ln \rho \right) \right]   \geq \Phi_N^{(1)}(\beta) +  N^{-1}
 	\sum_{\pmb{\sigma}_1} X_{\pmb{\sigma}_1} w_{\pmb{\sigma}_1} .  \]
 	The weights $w_{\pmb{\sigma}_1} \coloneqq \sum_{\pmb{\sigma}_2} \langle \pmb{\sigma}_1 \pmb{\sigma}_2| \rho^{(1)}_{\beta} | \pmb{\sigma}_1 \pmb{\sigma}_2 \rangle $ are nonnegative, add up to one, and are independent of $X_{\pmb{\sigma}_1}$. By Assumption~\ref{ass1} we conclude that almost surely
 	\[ \liminf_{N \to \infty} \left( \Phi_N(\beta,\Gamma) -  \Phi_N^{(1)}(\beta) \right) \geq 0.  \]
 	
 	Next, the eigenstates $|\psi \rangle \in \ell^2(\mathcal{Q}_N)$ of $H^{(2)} = U - B^{2,x} $ take the form of tensor products
 	$ |\psi \rangle = |\pmb{\sigma}_1 \rangle \otimes |\phi \rangle   $
 	with a certain $|\phi \rangle \in  \ell^2(\mathcal{Q}^{(2)}_{N} )$.  As the matrix elements $\langle \psi | B^{1,x} |\psi \rangle$ vanish for these eigenstate $|\psi \rangle$, 
 	the Gibbs state $ \rho_\beta^{(2)} = e^{-\beta H^{(2)}} / \tr  e^{-\beta H^{(2)}}  $ satisfies
 	\begin{equation}\label{eq:Gibbs1} \tr  B^{1,x} \rho_\beta^{(2)} = 0 .  
	\end{equation}
 	The Gibbs variational principle (for $\rho = \rho_\beta^{(2)} $ and $ H = H^{(2)} - B^{1,x} $) again yields
 	\begin{equation}\label{eq:Gibbs2} \Phi_N(\beta)  \geq \Phi_N^{(2)}(\beta) .  \end{equation}
 	Combining both lower bounds, we obtain almost surely
 	\[ \liminf_{N \to \infty} \left( \Phi_N(\beta,\Gamma) -  \max\{\Phi_N^{(1)}(\beta), \Phi_N^{(2)}(\beta) \} \right) \geq 0  \]
 	
 	\noindent
 	\textit{Upper bound:}  
 	Let $\epsilon > 0$ be arbitrary and consider the direct sum decomposition of the Hilbert space 
 	$ \ell^2(\mathcal{Q}_N) =  \left( \ell^2(\mathcal{L}_\epsilon)  \otimes \mathcal{H}^2 \right)  \oplus \left(\ell^2(\mathcal{L}_\epsilon^c)  \otimes \mathcal{H}^2 \right) $. 
 	The only term in $ H $ connecting the two subspaces is $ A_{\mathcal{L}_\epsilon} $. The Golden-Thompson inequality together with trivial norm estimates  thus yields
 	\[ \tr e^{-\beta(U - B)} \leq e^{\beta  \| A_{\mathcal{L}_\epsilon}\|} \left( \tr_{\big| \ell^2(\mathcal{L}_\epsilon) \otimes  \mathcal{H}^2} e^{-\beta(U -  B^{2,x})} +  e^{\beta \epsilon N} \tr_{\big| \ell^2(\mathcal{L}_\epsilon^c) \otimes   \mathcal{H}^2} e^{-\beta(V_N -  B_{\mathcal{L}_\epsilon^c}^{1,x} - B^{2,x})}   \right) .  \]
 	In the last term we additionally used the fact that $  X_{\pmb{\sigma}_1}  \geq -\epsilon N $ for all $ \pmb{\sigma}_1 \in \mathcal{L}_\epsilon $.
 	The first term is bounded by
 	\[ \tr_{\big|  \ell^2(\mathcal{L}_\epsilon) \otimes  \mathcal{H}^2} e^{-\beta(U - B^{2,x})} \leq \tr e^{-\beta(U- B^{2,x})}.  \]
 	The  second term  is estimated using the non-negativity of the diagonal matrix elements of the semigroups generated by $B$ and the Golden-Thompson inequality again 
 	\[ \tr_{\big| \ell^2(\mathcal{L}_\epsilon) \otimes  \mathcal{H}^2} e^{-\beta(V_N - B_{\mathcal{L}_\epsilon^c}^{1,x} - B^{2,x})} \leq \tr e^{-\beta(V_N - B_{\mathcal{L}_\epsilon^c}^{1,x} - B^{2,x})}  \leq e^{\beta \|A_{\mathcal{L}_\epsilon\|} }\  \tr e^{-\beta(V_N - B)}  \]
 	Since $\epsilon > 0$ was arbitrary and $\|A_{\mathcal{L}_\epsilon}\| = o(N)$ by Proposition~\ref{prop:oldpap}, we conclude the almost-sure inequality 
 	\[ \limsup_{N \to \infty} \left( \Phi_N(\beta)-  \max\{\Phi_N^{(1)}(\beta), \Phi_N^{(2)}(\beta) \} \right)   \leq 0.  \]
 \end{proof}

\subsection{Application to QGREM and QCREM} 
Since we are free in the choice of $V_N$ in Proposition \ref{prop:key}, we obtain the following corollary for GREM type potentials:
\begin{corollary}\label{cor:key}
Let $X = \sqrt{a_1} X_{\pmb{\sigma}_1} + \sqrt{a_2}X_{\pmb{\sigma}_1\pmb{\sigma}_2} + \cdots + \sqrt{a_n}X_{\pmb{\sigma}_1\pmb{\sigma}_2\cdots \pmb{\sigma}_n}$ be a Gaussian vector as in \eqref{eq:ugrem} and $V_N$ an independent potential.  Then, we have the almost sure convergence
\begin{equation}\label{eq:keycor}
\limsup_{N \to \infty} \left| \frac1N \ln \tr e^{-\beta(\sqrt{N} X +V_N -B)} - \max_{0 \leq k \leq n} \frac1N \ln \tr e^{-\beta(\sqrt{N} (\sqrt{a_1} X_{\pmb{\sigma}_1} \cdots + \sqrt{a_k}X_{\pmb{\sigma}_1\pmb{\sigma}_2\cdots \pmb{\sigma}_k}) + V_N - B^{2,x_k})} \right| = 0.
\end{equation}	
\end{corollary}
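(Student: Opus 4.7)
The plan is to prove \eqref{eq:keycor} by iterating Theorem~\ref{prop:key} exactly $n$ times, peeling off one level of the GREM hierarchy at each application, starting from the deepest Gaussian $X_{\pmb{\sigma}_1\cdots\pmb{\sigma}_n}$ and proceeding outward to $X_{\pmb{\sigma}_1}$. For bookkeeping, introduce for $0 \le k \le n$ the partial potentials and the corresponding pressures
\[ \tilde U_N^{(k)} := V_N + \sqrt{N}\sum_{j=1}^{k} \sqrt{a_j}\, X_{\pmb{\sigma}_1 \cdots \pmb{\sigma}_j}, \qquad F_N^{(k)} := \frac{1}{N}\ln\tr e^{-\beta(\tilde U_N^{(k)} - B^{2,x_k})} , \]
so that $\tilde U_N^{(n)} = V_N + \sqrt{N}X$, while $B^{2,x_0} = B$ (as $x_0 = 0$) and $B^{2,x_n} = 0$ (as $x_n = 1$); the right-hand side of \eqref{eq:keycor} is then precisely $\max_{0\le k \le n} F_N^{(k)}$.

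At step $j \in \{1,\dots,n\}$ I apply Theorem~\ref{prop:key} with splitting parameter $x = x_{n-j+1}$, taking the theorem's $V_N$ to be $\tilde U_N^{(n-j)}$ and the theorem's $X_{\pmb{\sigma}_1}$ to be $\sqrt{N a_{n-j+1}}\, X_{\pmb{\sigma}_1\cdots\pmb{\sigma}_{n-j+1}}$. These rescaled Gaussians depend only on the first $\lceil x_{n-j+1} N\rceil$ spin coordinates, which matches the block decomposition used by the theorem. The conclusion is the almost-sure asymptotic identity
\[ \frac{1}{N}\ln\tr e^{-\beta(\tilde U_N^{(n-j+1)} - B)} = \max\left\{\frac{1}{N}\ln\tr e^{-\beta(\tilde U_N^{(n-j)} - B)},\; F_N^{(n-j+1)} \right\} + o(1) . \]
Chaining these $n$ approximate identities---legitimate because $\max$ is $1$-Lipschitz in each argument so finitely many $o(1)$ errors stay $o(1)$---and noting that the final residual term $\tfrac{1}{N}\ln\tr e^{-\beta(\tilde U_N^{(0)} - B)}$ is precisely $F_N^{(0)}$, I obtain \eqref{eq:keycor}.

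Before each application I would verify the hypotheses of Theorem~\ref{prop:key}. The rescaled Gaussians $\sqrt{N a_{n-j+1}}\, X_{\pmb{\sigma}_1\cdots\pmb{\sigma}_{n-j+1}}$ are i.i.d.\ centered normal for fixed $N$, independent of $\tilde U_N^{(n-j)}$ and of $B$ by construction, have a quadratic negative-part rate function that is strictly positive on $(-\infty,0)$ and hence satisfies \eqref{eq:largedev1}--\eqref{eq:largedev2}, and fulfill the generalized strong law \eqref{eq:gslln} as in the Gaussian example immediately following Assumption~\ref{ass1} (via Lemma~\ref{lem:slln}). Assumption~\ref{ass2} on the transversal field is given for $B$ itself and is preserved because at every step the residual free energy continues to carry the full field $B$ rather than a restricted version of it. The main obstacle, beyond Theorem~\ref{prop:key} itself, is therefore purely bookkeeping: matching the splitting parameter $x_{n-j+1}$ to the correct GREM layer at every iteration and tracking which Gaussian is peeled; no new analytic input is required.
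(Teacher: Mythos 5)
Your proposal is correct and follows essentially the same route as the paper: both iterate Theorem~\ref{prop:key} exactly $n$ times, peeling off the deepest Gaussian layer $\sqrt{Na_{n-j+1}}\,X_{\pmb{\sigma}_1\cdots\pmb{\sigma}_{n-j+1}}$ at step $j$ with splitting parameter $x_{n-j+1}$, keeping the full field $B$ on the residual term and chaining the resulting $o(1)$ identities. The only differences are notational bookkeeping and that you spell out the verification of Assumptions~\ref{ass1} and~\ref{ass2} and the error accumulation, which the paper leaves implicit with ``proceeding like this.''
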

\begin{proof}
We first apply Theorem~\ref{prop:key} to $\sqrt{a_n}X_{\pmb{\sigma}_1\pmb{\sigma}_2\cdots \pmb{\sigma}_n}$ and $V_N^{(n)} := V_N + \sqrt{Na_1 } X_{\pmb{\sigma}_1} + \sqrt{Na_2 }X_{\pmb{\sigma}_1\pmb{\sigma}_2} + \cdots + \sqrt{Na_{n-1} }X_{\pmb{\sigma}_1\pmb{\sigma}_2\cdots \pmb{\sigma}_{n-1}}$, which yields 
\[ \limsup_{N \to \infty} \frac1N \left|  \ln \tr e^{-\beta(\sqrt{N} X +V_N -B)} -  \max \{  \ln \tr e^{-\beta(\sqrt{Na_n} X_{\pmb{\sigma}_1\pmb{\sigma}_2\cdots \pmb{\sigma}_n} + V_N^{(n)})}, \ln \tr e^{-\beta( V_N^{(n)} -B )}   \} \right| =0.    \]
Writing $V_N^{(n)} =: V_N^{(n-1)} + \sqrt{N a_{n-1}}X_{\pmb{\sigma}_1\pmb{\sigma}_2\cdots \pmb{\sigma}_{n-1}}$ and using again Theorem~\ref{prop:key}, we similarly have 
\[ \limsup_{N \to \infty} \frac1N \left|\ln \tr e^{-\beta( V_N^{(n)} -B )} - \max \{ \ln \tr e^{-\beta( V_N^{(n)} -B^{2,x_{n-1}} )}, \ln \tr e^{-\beta( V_N^{(n-1)} -B)}    \}   \right| = 0.    \]
Proceeding like this, we arrive after $n$ steps at \eqref{eq:keycor}.
 
\end{proof}
Theorem \ref{thm:qgrem}
and Corollary \ref{cor:key} in case $V_N =0$ look alike. However, in Theorem \ref{thm:qgrem} we further evaluate the trace and claim that the maximum in \eqref{eq:keycor} is attained at some endpoint $y_l$ of the concave hull $\bar{A}$. We postpone the remaining part of the proof of Theorem \ref{thm:qgrem} to Section~\ref{sec:proofGREM}.

Instead, we will extend Corollary \ref{cor:key} to CREM type potentials. To this end, we introduce a useful shorthand notation. If $X$ is a centered Gaussian vector with hierarchical distribution function $A$, we define for $0 \leq z \leq 1$  the centered Gaussian vector $X^{(z)}$ with hierarchical distribution function given by
\[ A^{(z)}(x) \coloneqq \begin{cases}
 A(x) & \text{ if } x \leq z, \\
 A(z) & \text{ else.}
\end{cases} \]
We are now ready to formulate 
\begin{theorem}\label{thm:general}
Let $X$ be a centered Gaussian vector of CREM-type with distribution function $A$. Then, we have almost sure convergence
\begin{equation}\label{eq:genthm}
\limsup_{N \to \infty}   \left|\frac1N \ln \tr e^{-\beta(\sqrt{N}X + V_N - B)} -  \sup_{0 \leq z \leq 1}\frac1N \ln \tr e^{-\beta(\sqrt{N}X^{(z)} + V_N - B^{2,z})} \right| = 0. 
\end{equation}
\end{theorem}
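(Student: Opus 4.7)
The plan is to reduce Theorem~\ref{thm:general} to its step-function counterpart Corollary~\ref{cor:key} by approximating the CREM distribution function $A$ by finely-spaced GREM step functions, combined with a Gaussian interpolation estimate that controls how free energies change under small perturbations of the covariance kernel.

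First, I would fix a sequence of step-function approximations $A_n$ of $A$, with jumps at a net $\{y_k^{(n)}\}_{k=0}^n \subset [0,1]$ that becomes uniformly dense as $n \to \infty$, satisfying $\|A - A_n\|_\infty \leq 1/n$. Let $X_n$ be a centered Gaussian vector on $\mathcal{Q}_N$ with covariance $A_n(q(\cdot,\cdot))$, coupled to $X$ via interpolation. A quantum Slepian/Guerra-Toninelli-type interpolation (differentiating $s \mapsto \tfrac1N \ln\tr e^{-\beta(\sqrt{N}(\sqrt{s} X + \sqrt{1-s}\, X_n') + V_N - B)}$ along an independent copy) yields the deterministic-in-expectation bound
\begin{equation}\label{eq:interpbd}
\Big| \tfrac1N \ee \ln\tr e^{-\beta(\sqrt{N}X + V_N - B)} - \tfrac1N \ee \ln\tr e^{-\beta(\sqrt{N}X_n + V_N - B)} \Big| \leq C\beta^2 \|A - A_n\|_\infty .
\end{equation}
The same bound holds uniformly in $z \in [0,1]$ when $X, X_n$ are replaced by the truncated vectors $X^{(z)}, X_n^{(z)}$, because $\|A^{(z)} - A_n^{(z)}\|_\infty \leq \|A - A_n\|_\infty$. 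Combined with the Gaussian concentration of Proposition~\ref{prop:conc}, this upgrades \eqref{eq:interpbd} to an almost-sure statement up to an error of order $N^{-1/2}$.

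Second, I apply Corollary~\ref{cor:key} directly to the GREM potential $X_n$, whose hierarchical block structure is determined by $\{y_k^{(n)}\}$, to obtain
\begin{equation}
\tfrac1N \ln\tr e^{-\beta(\sqrt{N}X_n + V_N - B)} = \max_{0 \leq k \leq n} \tfrac1N \ln\tr e^{-\beta(\sqrt{N}X_n^{(y_k^{(n)})} + V_N - B^{2, y_k^{(n)}})} + o_N(1).
\end{equation}
Using \eqref{eq:interpbd} one more time to swap every $X_n^{(y_k^{(n)})}$ for $X^{(y_k^{(n)})}$ at a uniform cost $O(\beta^2/n)$, the right-hand side is within $o_N(1) + O(\beta^2/n)$ of the finite maximum over $k$ of $\tfrac1N \ln\tr e^{-\beta(\sqrt{N}X^{(y_k^{(n)})} + V_N - B^{2, y_k^{(n)}})}$. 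The lower bound of this finite max by $\sup_{0 \leq z \leq 1}$ is trivial; for the matching upper bound, I would apply the interpolation estimate to the pair $(X^{(z)}, X^{(z')})$ to deduce a $\beta$-dependent, $N$-independent Lipschitz bound in $z$ (up to $o_N(1)$), so that as the grid $\{y_k^{(n)}\}$ becomes dense in $[0,1]$ the discrete max approximates the supremum to within $O(1/n)$. Finally, choosing $n=n(N) \to \infty$ slowly enough to absorb all $o_N(1)$ terms yields the almost-sure convergence claim.

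The main obstacle is making the approximation uniform in $z \in [0,1]$ while preserving the almost-sure nature of the limit. Gaussian concentration controls each fixed $z$, but the supremum ranges over an uncountable set; I would handle this by discretizing $[0,1]$ on the $\{y_k^{(n)}\}$-grid, using Proposition~\ref{prop:conc} plus a union bound and Borel-Cantelli to get simultaneous concentration at all grid points, and then invoking the Lipschitz-in-$z$ bound from interpolation to bridge gaps between grid points. A secondary technical point is to couple the Gaussian vectors $\{X^{(z)}\}_{z \in [0,1]}$ consistently on one probability space, which is done naturally by realizing $X$ through the Karhunen-Loève-type construction \eqref{eq:ugrem} of its step-function approximations and passing to a limit.
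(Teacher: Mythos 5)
Your overall scheme --- approximate $A$ uniformly by step functions, transfer free energies by a quantum Guerra--Toninelli interpolation together with Gaussian concentration (Proposition~\ref{prop:conc2}), apply Corollary~\ref{cor:key} to the GREM approximant, and then pass from a finite grid maximum to $\sup_{0\le z\le 1}$ --- is the same as the paper's. The genuine gap is in your last step, the bridge from the grid to the continuum supremum for the original CREM $X$, which you base on an ``$N$-independent Lipschitz bound in $z$'' obtained by interpolating the pair $(X^{(z)},X^{(z')})$. Two problems. First, Gaussian interpolation only compares Hamiltonians with the \emph{same} non-Gaussian part: it yields an error $\beta^2\|A^{(z)}-A^{(z')}\|_\infty=\beta^2|A(z)-A(z')|$, but it says nothing about replacing $B^{2,z}$ by $B^{2,z'}$, and the omitted transversal-field block between $z$ and $z'$ has operator norm of order $\beta\sum_j |b_j|\approx \beta N |z-z'|\,\ee[|\mathfrak{b}|]$; controlling this change of field requires a separate argument (either an SLLN norm bound on each of the finitely many blocks, or the positivity/Gibbs-variational step as in \eqref{eq:Gibbs1}--\eqref{eq:Gibbs2}), which your proposal never supplies. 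Second, $|A(z)-A(z')|$ is not Lipschitz in $|z-z'|$ and is not even small across a jump of $A$ --- CREM distribution functions are merely non-decreasing and right-continuous --- so ``uniformly dense'' grids do not give an $O(1/n)$ approximation of the supremum; what saves you is only that on the cells of a grid given by the jump points of $A_n$ the increment of $A$ is at most $2\|A-A_n\|_\infty$, and this is the statement you would actually have to prove and use in place of a Lipschitz bound.

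The paper avoids needing any modulus of continuity in $z$ altogether: it compares $X^{(z)}$ with the GREM approximant $Y^{(z)}$ at the \emph{same} $z$ (same field $B^{2,z}$, covariances differing by at most $\epsilon$) uniformly in $z$ via \eqref{eq:interpol2} and Proposition~\ref{prop:conc2}, and then performs the reduction of $\sup_z$ to the finite maximum over the grid points for $Y$, where it is exact: on each cell $[x_k,x_{k+1})$ one has $Y^{(z)}=Y^{(x_k)}$, and the Gibbs variational principle, exploiting that the eigenstates of $\sqrt{N}Y^{(x_k)}+V_N-B^{2,z}$ are products in the spins between $x_k$ and $z$ (the argument of \eqref{eq:Gibbs1}--\eqref{eq:Gibbs2}), shows that the cell supremum is attained at $x_k$. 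If you adapt your grid to $A_n$, replace the Lipschitz claim by the covariance estimate $|A(z)-A(y_k)|\le 2/n$, and add the Gibbs (or block-SLLN) control of the field change, your argument closes and essentially coincides with the paper's proof; as written, the grid-to-supremum step fails for discontinuous $A$ and, even for smooth $A$, the change of $B^{2,z}$ is simply not seen by the interpolation estimate. Your treatment of the uncountable supremum in the concentration step (discretization, union bound, Borel--Cantelli) is sound and is indeed also needed to justify the paper's own uniform-in-$z$ almost-sure bound, which the paper states rather tersely.
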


Our proof of Theorem \ref{thm:general} relies on an interpolation argument. We first adapt the classical arguments to our setting with a transversal magnetic field.
We fix some inverse temperature $\beta$ and random field $B$.
Let $X,Y$ be two independent centered Gaussian vectors on $\mathcal{Q}_N$, which are independent of $V_N$ as well. For $t \in [0,1]$ we set
the interpolated pressure $\Psi$,
\[ \Psi(t) = \frac1N \ln \left[\tr e^{-\beta(\sqrt{tN}X + \sqrt{(1-t)N}Y + V_N -B} 
\right] , 
\]
where by Lemma~\ref{lem:possemgr} we may assume without loss of generality that $ b_j \geq 0 $ for all $ j $. 
By standard Gaussian interpolation (see e.g. Lemma 1.3.1 in~\cite{Tal11}), we obtain
\[ \ee_{X,Y}\left[\Psi(1) - \Psi(0)  \right] 
= \frac12 \sum_{\pmb{\sigma},\pmb{\sigma^\prime}} \int_{0}^{1} (\ee[Y(\pmb{\sigma})Y(\pmb{\sigma^\prime})]-\ee[X(\pmb{\sigma})X(\pmb{\sigma^\prime})]  )  \ee_{X,Y}\left[ \frac{\partial^2 \Psi(t)}{\partial X_{\pmb{\sigma}} \partial X_{\pmb{\sigma^\prime}} }
+ \frac{\partial^2 \Psi(t)}{\partial Y_{\pmb{\sigma}} \partial Y_{\pmb{\sigma^\prime}} }
\right] dt ,
\]
where $\ee_{X,Y}$ denotes the expectation with respect to $X$ and $Y$. In general, $ \ee_{X,Y}[\Psi(t)]$ is still a random variable due to the randomness of $V_N$ and $B$.
The second partial derivatives of $\Psi(t)$ can be computed explicitly:
\[\frac{\partial^2 \Psi(t)}{\partial X_{\pmb{\sigma}} \partial X_{\pmb{\sigma}^\prime} }
+ \frac{\partial^2 \Psi(t)}{\partial Y_{\pmb{\sigma}} \partial Y_{\pmb{\sigma}^\prime} }  = - \beta^2 \frac{\langle \pmb{\sigma}| e^{H_t} | \pmb{\sigma} \rangle\langle \pmb{\sigma}^\prime| e^{H_t} | \pmb{\sigma}^\prime \rangle  }{(\tr e^{H_t})^2} + \beta^2 \int_{0}^{1} \frac{\langle \pmb{\sigma}| e^{sH_t} | \pmb{\sigma}^\prime \rangle\langle \pmb{\sigma}^\prime| e^{(1-s)H_t} | \pmb{\sigma} \rangle  }{\tr e^{H_t}} ds \]
with the abbreviation $H_t \coloneqq -\beta(\sqrt{tN}X + \sqrt{(1-t)N}Y + V_N - B)$. Since we assumed without loss of generality that $b_j \geq 0 $, the matrix elements $\langle \pmb{\sigma}| e^{H_t} | \pmb{\sigma}^\prime \rangle$ are nonnegative for any $\pmb{\sigma},\pmb{\sigma}^\prime$. Moreover, we know that
\[ \sum_{\pmb{\sigma},\pmb{\sigma}^\prime}\frac{\langle \pmb{\sigma}| e^{H_t} | \pmb{\sigma} \rangle\langle \pmb{\sigma}^\prime| e^{H_t} | \pmb{\sigma}^\prime \rangle  }{(\tr e^{H_t})^2} = \sum_{\pmb{\sigma},\pmb{\sigma}^\prime} \int_{0}^{1} \frac{\langle \pmb{\sigma}| e^{sH_t} | \pmb{\sigma}^\prime \rangle\langle \pmb{\sigma}^\prime| e^{(1-s)H_t} | \pmb{\sigma} \rangle  }{\tr e^{H_t}} ds = 1.  \]
We consequently arrive at the bound 
\[|\ee_{X,Y}\left[\Psi(1) - \Psi(0)  \right]| \leq \beta^2 \max_{\pmb{\sigma},\pmb{\sigma}^\prime}|\ee[X(\pmb{\sigma})X(\pmb{\sigma}^\prime)] -\ee[Y(\pmb{\sigma})Y(\pmb{\sigma}^\prime)]  |. \]
In case  $X$ and $Y$ are of CREM-type with distribution functions $A_X$ and $A_Y$, respectively, we conclude 
\begin{equation}\label{eq:interpol}
\left|\ee_{X,Y}\left[\Psi(1) - \Psi(0)  \right]\right| \leq \beta^2 \|A _X- A_Y\|_\infty.
\end{equation}
Analogously, we get  
\begin{equation}\label{eq:interpol2}
 \frac1N \left| \ee_{X,Y} \left[\ln \tr e^{-\beta(\sqrt{N}X^{(z)} + V_N - B^{2,z})} - \ln \tr e^{-\beta(\sqrt{N}Y^{(z)} + V_N - B^{2,z})} \right] \right| \leq \beta^2 \|A_X - A_Y\|_\infty
\end{equation}
for any $z \in [0,1]$. 
The bounds \eqref{eq:interpol} and \eqref{eq:interpol2} are our first main ingredients for the proof of Theorem \ref{thm:general}. We observe, however, that an interpolation argument only controls the expectation value with respect to the Gaussian variables. The following Gaussian concentration inequality is a convenient method to lift the convergence of expectation values to almost sure statements and vice versa.

\begin{proposition}\label{prop:conc2}
	Let $X$ be a Gaussian vector of CREM-type, $V_N$ a random potential, and $B$ a random transversal field, all  independent from each other. The corresponding pressure
	\[ \Phi_N(\beta) = \frac1N \ln \tr e^{-\beta(\sqrt{N}X+V_N + B)} . \]
	exhibits a  Gaussian concentration estimate, i.e., for any $t > 0$ and $N \in \nn$
	\begin{equation}
	\mathbb{P}_X\left(\left| \Phi_N(\beta)  - \ee_{X}\left[\Phi_N(\beta) \right]  \right|  > \frac{t \beta }{\sqrt{N} } \right) \leq  2 \,  \exp\left(- \frac{t^2}{4}\right).
	\end{equation}	
 The same bounds holds true for $ \Phi_N^{(z)}(\beta) = \frac1N \ln \tr e^{-\beta(\sqrt{N}X^{(z)}+V_N - B^{2,z})}$.
\end{proposition}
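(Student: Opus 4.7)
The plan is to reduce the bound to the classical Borell--Tsirelson--Ibragimov--Sudakov (BTIS) Gaussian concentration inequality for Lipschitz functions, by conditioning on the independent variables $V_N$ and $B$ and viewing $\Phi_N$ as a smooth function of the standard Gaussian coordinates that generate $X$.

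Concretely, I would first decompose $X(\pmb{\sigma}) = \sum_i c_{i,\pmb{\sigma}} Z_i$ with $(Z_i)$ i.i.d.\ standard Gaussian and coefficients satisfying $\sum_i c_{i,\pmb{\sigma}} c_{i,\pmb{\sigma}^\prime} = A(q(\pmb{\sigma},\pmb{\sigma}^\prime))$; in particular $\sum_i c_{i,\pmb{\sigma}}^2 = A(1) = 1$ for every $\pmb{\sigma}$. With $V_N$ and $B$ frozen, I would compute the partial derivatives of $\Phi_N$ in $Z_i$ using Duhamel's formula combined with the cyclicity of the trace, which collapses the Duhamel integral to a single trace despite the non-commutativity of $X$ and $B$. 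Since $\partial_{Z_i} X = \sum_{\pmb{\sigma}} c_{i,\pmb{\sigma}} \ketbra{\pmb{\sigma}}{\pmb{\sigma}}$ is diagonal, this yields
\[
 \frac{\partial \Phi_N}{\partial Z_i} \; = \; -\frac{\beta}{\sqrt{N}} \sum_{\pmb{\sigma}} c_{i,\pmb{\sigma}}\, p_{\pmb{\sigma}} ,
\]
where $p_{\pmb{\sigma}} := \langle \pmb{\sigma}| e^{-\beta(\sqrt{N}X + V_N + B)} |\pmb{\sigma}\rangle / \tr e^{-\beta(\sqrt{N}X + V_N + B)}$ are non-negative diagonal Gibbs weights summing to one.

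Next, I would estimate the gradient via Cauchy--Schwarz: $\bigl(\sum_{\pmb{\sigma}} c_{i,\pmb{\sigma}} p_{\pmb{\sigma}}\bigr)^2 \leq \sum_{\pmb{\sigma}} c_{i,\pmb{\sigma}}^2\, p_{\pmb{\sigma}}$ (using $\sum_{\pmb{\sigma}} p_{\pmb{\sigma}} = 1$), and then sum over $i$, invoking $\sum_i c_{i,\pmb{\sigma}}^2 = 1$, to obtain
\[
 \|\nabla_Z \Phi_N\|^2 \; \leq \; \frac{\beta^2}{N} \sum_{\pmb{\sigma}} p_{\pmb{\sigma}} \; = \; \frac{\beta^2}{N}.
\]
Thus $\Phi_N$ is $(\beta/\sqrt{N})$-Lipschitz in $Z$ uniformly in the frozen variables $V_N$ and $B$, and BTIS applied conditionally on $(V_N, B)$ yields $\pp_X(|\Phi_N - \eesub{X} \Phi_N| > s) \leq 2 \exp(-s^2 N / (2\beta^2))$. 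Taking $s = t\beta/\sqrt{N}$ gives precisely the stated form, in fact with the sharper constant $1/2$ in the exponent instead of $1/4$. The extension to $\Phi_N^{(z)}$ is immediate, since $X^{(z)}$ has the same CREM structure with $A^{(z)}(1) = A(z) \leq 1$, so the identical gradient bound applies.

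The only genuine subtlety I anticipate is the gradient computation: because $X$ and $B$ do not commute, differentiating $\tr e^{-\beta(\sqrt{N}X + V_N + B)}$ in $Z_i$ is \emph{a priori} a full Duhamel integral, and one must verify that cyclicity of the trace collapses it to the naive expression $-\beta\sqrt{N}\, \tr\bigl[(\partial_{Z_i} X)\, e^{-\beta(\sqrt{N}X + V_N + B)}\bigr]$. Once this point is settled, the remainder is a standard Cauchy--Schwarz manipulation plus an off-the-shelf invocation of BTIS.
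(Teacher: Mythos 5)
Your proposal is correct and follows essentially the same route as the paper: conditioning on $V_N$ and $B$, computing the gradient of $\Phi_N$ with respect to the underlying i.i.d.\ Gaussian coordinates (where Duhamel plus cyclicity of the trace indeed collapses the derivative to nonnegative diagonal Gibbs weights), bounding the squared Lipschitz constant by $\beta^2/N$ via the normalization $A(1)=1$ (and $A(z)\leq 1$ for the truncated case), and invoking Gaussian concentration for Lipschitz functions. The only cosmetic difference is that the paper represents the CREM vector in the explicit hierarchical GREM form $\sqrt{a_1}X_{\pmb{\sigma}_1}+\cdots+\sqrt{a_n}X_{\pmb{\sigma}_1\cdots\pmb{\sigma}_n}$ rather than your generic linear representation $X=\sum_i c_{i,\pmb{\sigma}}Z_i$, and your sharper exponent $t^2/2$ of course implies the stated bound with $t^2/4$.
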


\begin{proof}
	Since the lexicographic overlap~\eqref{eq:overlap}, can only take values $ k/N $ with $ k = 0, 1, \dots, N $ for every fixed $ N \in \mathbb{N} $, 
	the CREM-type Gaussian vector $ X $ may be represented as a GREM-type distribution.
	\[  X(\pmb{\sigma}) = \sqrt{a_1} X_{\pmb{\sigma}_1} + \sqrt{a_2}X_{\pmb{\sigma}_1\pmb{\sigma}_2} + \cdots + \sqrt{a_n}X_{\pmb{\sigma}_1\pmb{\sigma}_2\cdots \pmb{\sigma}_n}  \]
	with independent standard Gaussian variables 
	$X_{\pmb{\sigma}_1},\ldots,X_{\pmb{\sigma}_1\pmb{\sigma}_2\cdots \pmb{\sigma}_n}$ and some $ n = n(N) $. 
	 We  calculate the the free energy's variation with respect to the i.i.d Gaussian variable $X_{\pmb{\sigma}_1,\ldots,\pmb{\sigma}_k}$
	$$\displaystyle - \frac{\partial \Phi_N(\beta)}{\partial X_{\pmb{\sigma}_1,\ldots,\pmb{\sigma}_k}}   = \frac{\beta \sqrt{a_k}}{\sqrt{N} \tr e^{-\beta(X+V_N - B) } } \sum_{\hat{\pmb{\sigma}}_k}   \langle \pmb{\sigma}_1\cdots\pmb{\sigma}_k\hat{\pmb{\sigma}}_k  | e^{-\beta(X+V_N - B)} | \pmb{\sigma}_1\cdots\pmb{\sigma}_k\hat{\pmb{\sigma}}_k  \rangle . $$
	Here, $\hat{\pmb{\sigma}}_k$ is an abbreviation for the remaining entries of the element $\pmb{\sigma} \in \mathcal{Q}_N$.
	Consequently, the square of the pressure's Lipschitz constant is bounded by 
	$$\displaystyle \sum_{k} \sum_{\pmb{\sigma}_1\cdots\pmb{\sigma}_k}  \left( \frac{\partial \Phi_N(\beta)}{\partial X_{\pmb{\sigma}_1,\ldots,\pmb{\sigma}_k}}\right)^2  \leq  \frac{\beta^2}{N},  $$
	where we used that the weights $a_k$ add up to one. 
	If we condition on $V_N$ and $ B $, the Gaussian concentration inequality for Lipschitz functions
	(see \cite[Thm.~1.3.4]{Tal11}) yields 
	\[	\mathbb{P}_X\left(\left| \Phi_N(\beta, B)  - \ee_{X}\left[\Phi_N(\beta) \right]  \right|  > \frac{t \, \beta }{\sqrt{N} } \bigg| \, V \right) \leq  2 \,  \exp\left(- \frac{t^2}{4}\right).  \]
	A similar argument,using the fact that the sum of the weights $a_k^{(z)}$ add up to at most one,  shows that we have the same concentration inequality for $\Phi^{(z)}_N(\beta)$.
\end{proof}

Let us remark that a Gaussian concentration estimate still holds true if the weights $(a_k)$ do not add up to one. Only the multiplicative constant in front of the exponential term changes. 
We move on to the proof of Theorem \ref{thm:general}:
\begin{proof}[Proof of Theorem \ref{thm:general}]
	We pick some $\epsilon > 0$ and an independent  Gaussian vector $Y$ of GREM-type with distribution (step-)function $\tilde{A}$ such that 
	\[ \|A - \tilde{A} \|_{\infty} \leq \epsilon.  \]
	We can always find such a Gaussian vector as $A$ is a increasing, right-continuous function and, therefore a uniform limit of increasing step functions. We further note that this implies $\|A^{(z)} - \tilde{A}^{(z)} \|_{\infty} \leq \epsilon$. We denote by $0 = x_0 <x_1 < x_2 < \cdots x_n = 1$ the points, supporting $\tilde{A}$.
	
	We first exploit the estimates in \eqref{eq:interpol},\eqref{eq:interpol2} and Proposition \ref{prop:conc2} in order to obtain the almost sure bounds 
	\[ \limsup_{N \to \infty} \frac1N| \ln \tr e^{-\beta(\sqrt{N}X + V_N - B)} - \ln \tr e^{-\beta(\sqrt{N}Y + V_N - B)}| \leq \beta^2 \epsilon  \]
	and 
	\[ \limsup_{N \to \infty}\sup_{z \in [0,1] }  \frac1N| \ln \tr e^{-\beta(\sqrt{N}X^{(z)} + V_N - B^{2,z})} - \ln \tr e^{-\beta(\sqrt{N}Y^{(z)} + V_N - B^{2,z})}| \leq \beta^2 \epsilon . \]
	The expressions depending on $Y$ do not necessarily converge. Nevertheless, we have almost surely
	\[  \begin{split} & \hspace{0.38cm}  \limsup_{N \to \infty} \frac1N |\ln \tr e^{-\beta(\sqrt{N}Y + V_N - B)} - \sup_{0 \leq z \leq 1}\ln \tr e^{-\beta(\sqrt{N}Y^{(z)} + V_N - B^{2,z})}| \\ &= \limsup_{N \to \infty} \frac1N |\ln \tr e^{-\beta(\sqrt{N}Y + V_N - B)} - \max_{k = 0,1\ldots,n} \ln \tr e^{-\beta(\sqrt{N}Y^{(x_k)} + V_N - B^{2,x_k})}| = 0. \end{split}\]
	For the first equality we recall that for any $x_k \leq z < x_{k+1}$ the processes $Y^{(z)} = Y^{(x_k)}$. Consequently, the Gibbs' variational principle (with $ H = H' - (B^{2,x_k} -B^{2,z} ) $ and $ H' = \sqrt{N}Y^{(x_k)} + V_N -  B^{2,z} $ and an argument similar to \eqref{eq:Gibbs1}--\eqref{eq:Gibbs2}) shows that the maximum is attained at some $x_k$. The second equality follows from  Corollary~\ref{cor:key}. 
	Combining all these estimates, we arrive at
	\[ \limsup_{N \to \infty} \frac1N | \ln \tr e^{-\beta(\sqrt{N}X + V_N- B)} - \sup_{0 \leq z \leq 1} \ln \tr e^{-\beta(\sqrt{N}X^{(z)} + V_N -  B^{2,z})}| \leq 2 \beta^2 \epsilon. \]
	As $\epsilon >0$ is arbitrary, the proof of~\eqref{eq:genthm} is completed.
\end{proof}

 \section{Proofs of the main results}\label{sec:proofmain}
 \subsection{The Quantum GREM and CREM}\label{sec:proofGREM}
 We first aim to prove Theorem \ref{thm:qgrem}, i.e.
 \[\lim_{N \to \infty} \frac1N \ln \tr e^{-\beta(\sqrt{N} X - B)} = \max_{1 \leq l \leq m}\left[ \sum_{j =1}^{l} \varphi^{(j)}(\beta) + (1-y_l) \ee[\ln 2 \cosh(\beta \mathfrak{b})] \right]\]
 for a GREM type variable $X$ and transversal field $B$ consisting of independent weights $(b_j)$ with the same distribution as $\mathfrak{b}$. We recall that $x_1,\ldots,x_n$ denote the jump points of the distribution function $A$, the points $y_1,\ldots,y_m$, over which the above maximum is taken, are the endpoints of the concave hull's $\bar{A}$ linear segments and $\varphi^{(j)}(\beta)$ are the partial free energies from \eqref{eq:partfren}.
 For the remainder of this subsection and since we are interested in the limit $ N \to \infty $, we also assume without loss of generality that $ x_k N \in \mathbb{N} $ for all $ k \in \{ 0,\ldots, n \} $.

  Our starting point is Corollary \ref{cor:key} which for any GREM-type vector $X$ and $ V_N = 0 $ yields, 
     \begin{equation}\label{eq:pregrem} \frac1N \ln \tr e^{-\beta(\sqrt{N} X + B)}  =  \max_{k = 0,\ldots, n} \left[ \frac1N \ln \tr_{| \mathcal{Q}_{x_k N}} e^{-\beta \sqrt{N} X^{(x_k)}} + \frac{(1-x_k)}{N}  \sum_{i=1}^{N}\ln 2\cosh(\beta b_i) \right] + o(1) .
  \end{equation}
The limit $ N \to \infty $ of the bracket on the right side exists for any $ k \in \{ 0,\ldots, n \}$. More precisely, the strong law of large numbers implies that second term almost surely tends to $(1-x_k)\ee[\ln 2 \cosh(\beta \mathfrak{b})]$. Moreover, the first term converges since $X^{(x_k)}$ is still a GREM-type Gaussian vector on $\mathcal{Q}_{x_k N}$. The only difference is that the weights $a_1,\ldots a_k$ do not add up to $1$. This minor obstacle can be easily done away with rescaling the inverse temperature $\beta$. In particular, if $x_k$ coincides with an endpoint $y_l$ of the concave hull's segments, 
 \[ \lim_{N \to \infty} \frac1N \ln \tr_{| \mathcal{Q}_{x_k N}} e^{-\beta \sqrt{N} X^{(x_k)}} = \sum_{j = 1}^{l} \varphi^{(j)}(\beta), \]
 where the partial free energies $\varphi^{(j)}(\beta)$ remain unchanged, i.e., they are still given by \eqref{eq:partfren}. This follows from the observation that 
 $X$ and $X^{(y_l)}$ have the same concave hull up to the point $y_l$. 
  
Since the limit $ N\to \infty $ exits for each $ k $, we may exchange the limit with the maximum.  In order to prove Theorem \ref{thm:qgrem}
  it therefore suffices to check that in 
   \begin{equation}\label{eq:pgrem2} 
    \lim_{N \to \infty} \frac1N \ln \tr e^{-\beta(\sqrt{N} X + B)} = \max_{k = 0,\ldots, n}\left[ \lim_{N \to \infty} \frac1N \ln \tr_{| \mathcal{Q}_{x_k N}} e^{-\beta \sqrt{N} X^{(x_k)}} + (1-x_k)\ee[\ln 2 \cosh(\beta \mathfrak{b})] \right]
   \end{equation} 
   the maximum of the limit is always attained at some $y_l$. This is the content of the next Lemma:
  \begin{lemma}\label{lem:x=y}
  	If $X$ is a Gaussian vector of GREM type, we have 
  	\begin{equation}
  	\begin{split}
  	& \max_{k = 0,\ldots, n} \left[ \lim_{N \to \infty} \frac1N \ln \tr_{| \mathcal{Q}_{x_k N}} e^{-\beta \sqrt{N} X^{(x_k)}} + (1-x_k) \ee[\ln 2 \cosh(\beta \mathfrak{b})] \right] \\ & = \max_{l = 0,\ldots,m} \left[\lim_{N \to \infty} \frac1N \ln \tr_{| \mathcal{Q}_{y_l N}} e^{-\beta \sqrt{N} X^{(y_l)}} + (1-y_l)\ee[\ln 2 \cosh(\beta \mathfrak{b})]\right] .
  	 \end{split}
  	\end{equation}
  \end{lemma}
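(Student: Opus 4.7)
The plan is to show that for every $x_k$ lying strictly between two consecutive concave-hull endpoints $y_{l-1}$ and $y_l$,
\[
g(x_k)\ \le\ \max\bigl\{g(y_{l-1}),\,g(y_l)\bigr\},\qquad g(z):=T(z)+(1-z)p,
\]
where $p:=\ee[\ln 2\cosh(\beta\mathfrak{b})]$ and $T(z):=\lim_{N\to\infty}N^{-1}\ln\tr_{|\mathcal{Q}_{zN}}e^{-\beta\sqrt N X^{(z)}}$. Summing $e^{-\beta\sqrt N X^{(z)}}$ over the $(1-z)N$ free coordinates gives $T(z)=\Phi^{\mathrm{cl}}(\beta;A^{(z)})-(1-z)\ln 2$, where $\Phi^{\mathrm{cl}}(\beta;A^{(z)})$ is the classical GREM pressure for the distribution $A^{(z)}$. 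The concave hull of $A^{(z)}$ coincides with $\bar A$ on $[0,y_{l-1}]$, with the concave hull of $A|_{[y_{l-1},z]}$ on $[y_{l-1},z]$, and is flat on $[z,1]$; the flat piece contributes exactly $(1-z)\ln 2$ to the partial-pressure sum in \eqref{eq:partfren}, cancelling the shift. I therefore obtain
\[
g(z)=\sum_{j=1}^{l-1}\varphi^{(j)}(\beta)+\Phi_{[y_{l-1},z]}(\beta)+(1-z)p,
\]
where $\Phi_{[y_{l-1},z]}(\beta)$ collects the classical partial pressures from the middle sub-hull. Since $g(y_{l-1})=\sum_{j=1}^{l-1}\varphi^{(j)}+(1-y_{l-1})p$ and $g(y_l)=\sum_{j=1}^{l}\varphi^{(j)}+(1-y_l)p$, the desired estimate is equivalent to
\[
\Phi_{[y_{l-1},z]}(\beta)-(z-y_{l-1})p\ \le\ \max\bigl\{0,\ \varphi^{(l)}(\beta)-L_l p\bigr\}.
\]

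The second step is the single-segment upper bound
\[
\Phi_{[y_{l-1},z]}(\beta)\ \le\ \varphi\bigl(\beta;\tilde a,\tilde L\bigr),\qquad \tilde a:=A(z)-A(y_{l-1}),\quad \tilde L:=z-y_{l-1},
\]
with $\varphi(\beta;a,L)$ the single-segment formula from \eqref{eq:partfren}. Proceeding by induction on the number of linear pieces of the sub-hull, it is enough to verify that replacing two adjacent segments of slopes $\gamma_1>\gamma_2$ by their secant does not decrease the pressure. Three subcases cover everything: both segments in the high-$T$ regime (trivial equality); both in the low-$T$ regime (Cauchy-Schwarz $L_1\sqrt{\gamma_1}+L_2\sqrt{\gamma_2}\le\sqrt{(L_1+L_2)(L_1\gamma_1+L_2\gamma_2)}$); and the mixed case where the first segment is low-$T$ and the second is high-$T$ (AM-GM $\tfrac{\beta^2}{2}a_2+L_2\ln 2\ge\beta\sqrt{2\ln 2\,a_2L_2}$ reduces this to the low-$T$ subcase).

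The pointwise bound $A\le\bar A$ on $[y_{l-1},y_l]$ constrains $(\tilde a,\tilde L)$ to lie in the triangle $\mathcal T:=\{(a,L):0\le a\le\gamma_l L,\ 0\le L\le L_l\}$. Setting $\psi(a,L):=\varphi(\beta;a,L)-L p$, so that $\psi(0,0)=0$ and $\psi(\bar a_l,L_l)=\varphi^{(l)}(\beta)-L_l p$, the remaining goal is $\sup_{\mathcal T}\psi=\max\{\psi(0,0),\psi(\bar a_l,L_l)\}$. In the regime $\beta\le\beta_l$ the whole of $\mathcal T$ sits in the high-$T$ branch of $\varphi$, where $\psi$ is affine in $(a,L)$; its maximum on the triangle is attained at a vertex, and the third vertex is ruled out by $\psi(0,L_l)=L_l(\ln 2-p)\le 0$, using $p\ge\ln 2$. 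In the regime $\beta>\beta_l$, the threshold line $a=\gamma^* L$ with $\gamma^*:=(2\ln 2)/\beta^2<\gamma_l$ splits $\mathcal T$ into an affine high-$T$ piece and a low-$T$ piece on which $\psi$ is concave (as $\sqrt{aL}$ is concave in $(a,L)$). A short check of the vertices of both pieces, together with the identity $\beta\sqrt{2\ln 2\,\gamma_l}>2\ln 2$ which rules out $(\gamma^* L_l,L_l)$, again leaves only $(0,0)$ and $(\bar a_l,L_l)$ as maximisers. The main technical obstacle is the two-segment monotonicity in the second step; the rest is a bookkeeping case analysis.
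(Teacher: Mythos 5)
Your architecture (compute the concave hull of $A^{(z)}$, reduce to a single merged segment, then optimize $\psi(a,L)=\varphi(\beta;a,L)-Lp$ over the triangle $\mathcal T$) is a genuinely different and in principle viable route; the paper instead compares $X^{(x_k)}$ via Slepian's lemma with an auxiliary GREM process whose distribution function follows the secant of $\bar A$ on the segment, which delivers your merging bound in one stroke. The genuine gap is in your second step, in the mixed case of the two-segment merging inequality. You need an \emph{upper} bound: $\beta\sqrt{2\ln 2\,a_1L_1}+\tfrac{\beta^2}{2}a_2+L_2\ln 2\le\varphi(\beta;a_1+a_2,L_1+L_2)$. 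The AM-GM you quote, $\tfrac{\beta^2}{2}a_2+L_2\ln 2\ge\beta\sqrt{2\ln 2\,a_2L_2}$, bounds the high-temperature expression from \emph{below} by the low-temperature one, so replacing $\varphi^{(2)}$ by $\beta\sqrt{2\ln2\,a_2L_2}$ only decreases the left-hand side and proves nothing about it; the claimed ``reduction to the low-$T$ subcase'' is backwards. The inequality itself is true, but in the sub-case where the merged segment is frozen (i.e. $\beta\ge\beta_m$, $\beta_1<\beta\le\beta_2$) it needs a real argument: for instance, the difference $f(\beta)=\beta\sqrt{2\ln2(a_1+a_2)(L_1+L_2)}-\beta\sqrt{2\ln2\,a_1L_1}-\tfrac{\beta^2}{2}a_2-L_2\ln 2$ is concave in $\beta$ and nonnegative at $\beta=\beta_m$ (AM-GM applied to segment $1$) and at $\beta=\beta_2$ (Cauchy--Schwarz), hence nonnegative in between; when the merged segment is unfrozen, AM-GM applied to segment $1$ (not $2$) reduces to the affine case. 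Alternatively, the whole induction can be replaced by one Gaussian comparison: merging lowers the distribution function pointwise with fixed endpoints, so Slepian gives the bound directly, exactly as in the paper's proof.

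A second, smaller slip: on the low-temperature piece of $\mathcal T$ the function $\psi(a,L)=\beta\sqrt{2\ln2\,aL}-Lp$ is concave, and a concave function on a polygon need not attain its maximum at a vertex, so ``a short check of the vertices'' is not a valid justification there. It is easily repaired: $\partial_a\psi>0$, so the maximum on that piece lies on the edge $a=\gamma_lL$, where $\psi(\gamma_lL,L)=L\big(\beta\sqrt{2\ln2\,\gamma_l}-p\big)$ is linear in $L$ and hence maximized at $L=0$ or $L=L_l$; combined with your vertex check on the affine piece and $p\ge\ln 2$, this yields $\sup_{\mathcal T}\psi=\max\{0,\varphi^{(l)}(\beta)-L_lp\}$ as you intended. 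With these two repairs your proof goes through; as written, however, the mixed-case step fails.
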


\begin{proof}
If $\{ x_0,\ldots, x_n \} = \{ y_0,\ldots, y_m \}$, the statement is trivial. So, let $y_l < x_k < y_{l+1}$. We recall that distribution function $A^{(x_k)}$ of $X^{(x_k)}$ is given by
\[ A^{(x_k)} = \begin{cases}
A(x) & \text{ if } x \leq x_k, \\
A(x_k) & \text{ else.}
\end{cases}.  \]
We introduce the Gaussian processes $Y$ and $Z$ of GREM type with the distribution functions 
\[ A_Y(x) \coloneqq \begin{cases}
A(x) & \text{ if } x \leq y_l, \\
A(y_l) & \text{ if } y_l < x < x_k \\
A(x_k) & \text{ if } x \geq x_k
\end{cases}  \]  
and, respectively, 
\[ A_Z(x) \coloneqq \begin{cases}
A(x) & \text{ if } x \leq y_l, \\
A(y_l) & \text{ if } y_l < x < x_k \\
A(y_l) + \frac{x_k - y_l}{y_{l+1} - y_l} (A(y_{l+1}) - A(y_l) & \text{ if } x \geq x_k.
\end{cases}  \] 
We note that
\[ \lim_{N \to \infty} \frac1N \ln \tr_{| \mathcal{Q}_{x_k N}} e^{-\beta \sqrt{N} X^{(x_k)}} \leq \lim_{N \to \infty} \frac1N \ln \tr_{| \mathcal{Q}_{x_k N}} e^{-\beta \sqrt{N} Y} \leq \lim_{N \to \infty} \frac1N \ln \tr_{| \mathcal{Q}_{x_k N}} e^{-\beta \sqrt{N} Z}. \] 
Here, the first inequality follows from Slepian's lemma (the less correlated a classical system is the higher is the pressure). For the second inequality we recall that $A$ is majorized by its concave hull $\bar{A}$ and agrees with $\bar{A}$ at $y_l$ and $y_{l+1}$, i.e., 
\[ A(x_k) \leq A(y_l) + \frac{x_k - y_l}{y_{l+1} - y_l} \left(A(y_{l+1}) - A(y_l) \right).  \]
Since the pressure is an increasing function of the jump heights (cf.~\eqref{eq:partfren}), we arrive at the second bound. The free energy of $Z$, is computed easily in terms of the partial pressure \eqref{eq:partfren} corresponding to $ A $:
\[ \lim_{N \to \infty} \frac1N \ln \tr_{| \mathcal{Q}_{x_k N}} e^{-\beta \sqrt{N} Z} = \sum_{j = 1}^{l} \varphi^{(j)}(\beta) + \frac{x_k-y_l}{y_{l+1} - y_l} \varphi^{(j+1)}(\beta). \]
We thus obtain 
\[\begin{split} & \lim_{N \to \infty} \frac1N \ln \tr_{| \mathcal{Q}_{x_k N}} e^{-\beta \sqrt{N} X^{(x_k)}} + (1-x_k) \ee[\ln 2 \cosh(\beta \mathfrak{b})] \\ &\leq \sum_{j = 1}^{l} \varphi^{(j)}(\beta) + (1-y_{l}) \ee[\ln 2 \cosh(\beta \mathfrak{b})] + \frac{x_k-y_l}{y_{l+1} - y_l} \left( \varphi^{(l+1)}(\beta) - (y_{l+1} - y_l) \ee[\ln 2 \cosh(\beta \mathfrak{b})] \right). \end{split} \]
Depending on the sign of the term in the last bracket we have 
\[  \lim_{N \to \infty} \frac1N \ln \tr_{| \mathcal{Q}_{x_k N}} e^{-\beta \sqrt{N} X^{(x_k)}} + (1-x_k) \ee[\ln 2 \cosh(\beta \mathfrak{b})] \leq \sum_{j = 1}^{l} \varphi^{(j)}(\beta) + (1-y_{l}) \ee[\ln 2 \cosh(\beta \mathfrak{b})]  \]
or 
\[  \lim_{N \to \infty} \frac1N \ln \tr_{| \mathcal{Q}_{x_k N}} e^{-\beta \sqrt{N} X^{(x_k)}} + (1-x_k) \ee[\ln 2 \cosh(\beta \mathfrak{b})]\leq \sum_{j = 1}^{l+1} \varphi^{(j)}(\beta) + (1-y_{l+1})\ee[\ln 2 \cosh(\beta \mathfrak{b})].  \]
Consequently, the maximal pressure is indeed attained at some $y_l$.
\end{proof}
The following observation is useful for the proof of Corollary 
\ref{cor:qgrem}:
\begin{lemma}\label{lem:concav}
	Let $\varphi^{(j)}(\beta)$ be partial pressure~\eqref{eq:partfren} and $L_j \coloneqq y_j - y_{j-1}$ the interval lengths. Then, the discrete concavity estimate
	\begin{equation}\label{eq:concav}
	\frac{\varphi^{(1)}(\beta)}{L_1} > \frac{\varphi^{(2)}(\beta)}{L_2} > \cdots > 	\frac{\varphi^{(m)}(\beta)}{L_m}
	\end{equation}
	holds for any inverse temperature $\beta > 0$.
\end{lemma}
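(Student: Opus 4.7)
The plan is to unify the two branches of the definition of $\varphi^{(l)}(\beta)$ into a single function of the slope $\gamma_l$ and then invoke strict monotonicity.

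First I would record the basic geometric fact that is really driving the lemma: because $\bar A$ is the concave hull of $A$ and the $y_l$ are by definition the abscissae where its slope changes, the slopes $\gamma_l$ are strictly decreasing,
\[
\gamma_1 > \gamma_2 > \cdots > \gamma_m,
\]
or equivalently $\beta_1 < \beta_2 < \cdots < \beta_m$. So the lemma will follow if I can show that the quantity $\varphi^{(l)}(\beta)/L_l$ depends on $l$ only through $\gamma_l$, and is a \emph{strictly increasing} function of that slope.

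Next I would fix $\beta>0$, set $\gamma^\ast := (2\ln 2)/\beta^2$ (so that $\beta = \beta(\gamma^\ast)$), and introduce
\[
G(\gamma) := \tfrac12 \beta^2 \gamma + \ln 2, \qquad F(\gamma) := \beta\sqrt{(2\ln 2)\gamma}.
\]
Dividing \eqref{eq:partfren} by $L_l$ gives $\varphi^{(l)}(\beta)/L_l = G(\gamma_l)$ when $\gamma_l \le \gamma^\ast$ (i.e.\ $\beta \le \beta_l$) and $\varphi^{(l)}(\beta)/L_l = F(\gamma_l)$ when $\gamma_l > \gamma^\ast$. Call this piecewise function $\psi(\gamma)$. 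The elementary identity
\[
G(\gamma) - F(\gamma) = \bigl(\beta\sqrt{\gamma/2} - \sqrt{\ln 2}\bigr)^2 \ge 0
\]
shows $G \ge F$ with equality precisely at $\gamma = \gamma^\ast$, where both equal $2\ln 2$. In particular $\psi$ is continuous at $\gamma^\ast$.

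The heart of the argument is then showing $\psi$ is strictly increasing on $(0,\infty)$. Both $F$ and $G$ are manifestly strictly increasing, and they agree at the splice point $\gamma^\ast$, so for any $\gamma_a < \gamma^\ast < \gamma_b$ one has
\[
\psi(\gamma_a) = G(\gamma_a) < G(\gamma^\ast) = F(\gamma^\ast) < F(\gamma_b) = \psi(\gamma_b),
\]
which handles the only case that is not immediate from monotonicity of a single branch. Combined with $\gamma_1 > \cdots > \gamma_m$, this gives $\psi(\gamma_1) > \cdots > \psi(\gamma_m)$, which is exactly \eqref{eq:concav}. The only mild subtlety—and the step I would call the main "obstacle," though it is hardly one—is ensuring the strict inequality across the regime change (some $\gamma_l$ above $\gamma^\ast$, some below); the unified framing through $\psi$ removes this as an issue because continuity plus branchwise strict monotonicity is enough.
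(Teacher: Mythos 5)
Your proof is correct and takes essentially the same route as the paper: both arguments rest on the strict decrease of the slopes $\gamma_1 > \cdots > \gamma_m$ together with the observation that, per unit length, the frozen and unfrozen branches are increasing functions of $\gamma_l$ which meet at the common value $2\ln 2$ at the threshold $\gamma^\ast = (2\ln2)/\beta^2$. The paper's explicit transition estimate $\frac{\varphi^{(j)}(\beta)}{L_j} = \frac{\beta}{\beta_j}\,2\ln 2 > 2\ln 2 \geq \ln 2 + \frac{\beta^2}{2}\gamma_{j+1} = \frac{\varphi^{(j+1)}(\beta)}{L_{j+1}}$ is precisely your chain $F(\gamma_j) > F(\gamma^\ast) = G(\gamma^\ast) \geq G(\gamma_{j+1})$, so your spliced function $\psi$ is a compact repackaging of the paper's frozen/unfrozen case analysis.
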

\begin{proof}
We call $\varphi^{(j)}(\beta)$ "frozen" if $\beta > \beta_j$, i.e., $\varphi^{(j)}	(\beta)$ is given by the linear expression in~\eqref{eq:partfren}. 
Otherwise we say $\varphi^{(j)}(\beta)$ is "unfrozen".
By construction of the concave hull $\bar{A}$ we know that the slopes $\gamma_j = \bar{a}_j/L_j$ are strictly decreasing in $j$. The inequalities in \eqref{eq:concav}, where two consecutive partial free energies are either both frozen or both unfrozen, are thus obvious. It remains to consider the case where $\varphi^{(j)}(\beta)$ is frozen, but  $\varphi^{(j+1)}(\beta)$ is unfrozen. By \eqref{eq:partfren} we then have
\[ \frac{\varphi^{(j)}(\beta)}{L_j}  = \beta \sqrt{(2 \ln 2) \gamma_j}   \quad \mbox{and}\quad   \frac{\varphi^{(j+1)}(\beta)}{L_{j+1}}  = \frac{\beta^2}{2}  \gamma_{j+1} + \ln 2.  \]
Moreover, as $\varphi^{(j)}(\beta)$ is frozen and   $\varphi^{(j+1)}(\beta)$ is unfrozen, the inverse temperature satisfies
\[\beta_j=  \sqrt{(2 \ln 2) \gamma_j^{-1}} < \beta \leq \sqrt{(2 \ln2 )\gamma_{j+1}^{-1}} =\beta_{j+1}.  \]
We conclude that 
\[ \frac{\varphi^{(j)}(\beta)}{L_j} =  \frac{\beta}{\beta_j} 2 \ln 2 >  2 \ln 2 \geq \ln 2 + \frac{\beta^2}{\beta_{j+1}^2} \ln2 =  \ln 2 + \frac{\beta^2}{2} \gamma_{j+1} =\frac{\varphi^{(j+1)}(\beta)}{L_{j+1}}.    \]
\end{proof}

\begin{proof}[Proof of Theorem \ref{thm:qgrem} and Corollary \ref{cor:qgrem}] Theorem \ref{thm:qgrem} is an immediate consequence of~\eqref{eq:pgrem2}  and Lemma \ref{lem:x=y}. 

It remains to show Corollary \ref{cor:qgrem}. To this end, let us introduce the energy differences 
\[ \Delta^{(j)}(\beta,\Gamma) \coloneqq (y_j - y_{j-1}) \ln 2 \cosh(\beta \Gamma) - \varphi^{(j)}(\beta). \]
In view of Lemma \ref{lem:concav}, we conclude:
\begin{itemize}
	\item if $\Delta^{(j)}(\beta,\Gamma) < 0$ for some $j \geq 1$, then $\Delta^{(i)}(\beta,\Gamma) < 0$ for all $0 < i \leq j$
		\item if $\Delta^{(j)}(\beta,\Gamma) \geq 0$ for some $j \geq 1$, then $\Delta^{(i)}(\beta,\Gamma) \geq 0$ for all $j \leq i \leq m$
\end{itemize}
Consequently, the maximum in \eqref{eq:qgrem} is attained at $m$ if all energy differences $\Delta^{(j)}$ are negative for $0< j \leq m $ and, otherwise at the minimal integer  $K< m $ such that $\Delta^{(K+1)} \geq 0$. We may thus rewrite the pressure as 
\[ \Phi(\beta,\Gamma) = \sum_{l=1}^{m} \left( \varphi^{(l)}(\beta) \mathbbm{1}_{\Delta^{(l)} \leq 0} + L_l \ln 2 \cosh(\beta \Gamma)   \mathbbm{1}_{\Delta^{(l)} > 0}\right). \]
We note that the condition $\Delta^{(l)} > 0$ 
is equivalent to $\Gamma > \Gamma^{(l)}_c(\beta)$. This concludes the proof of~\eqref{eq:grem2}.	
\end{proof}

Our next goal is to prove Theorem \ref{thm:qcrem} and Corollary \ref{cor:qcrem}. It is convenient to use Theorem \ref{thm:qgrem} and the interpolation estimate \eqref{eq:interpol} rather than the general Theorem \ref{thm:general}. To do so, we first establish some continuity properties of the functions
\[ \Phi(\beta,A,z) := \sqrt{2 \ln 2} \beta \int_{0}^{\min\{x(\beta),z\}}
\sqrt{\bar{a}(x)}\, dx  + \mathbbm{1}_{z > x(\beta)}\left( \frac{\beta^2}{2}(\bar{A}(z)-\bar{A}(x(\beta))) + \ln 2(z- x(\beta)\right) \]
with respect to the distribution function $A$. Therefore, we emphasize here the dependence on $A$ in notation.

\begin{lemma}\label{lem:formcont}
	Let $A$ and $(A_n )_{n \in \nn}$ be distribution functions on $[0,1]$ such that $A_n$ converges uniformly to $A$ as $n \to \infty$. Then:
	\begin{itemize}
		\item[(i)] The concave hulls $\bar{A}_n$ converge uniformly to $\bar{A}$ as $n \to \infty$, i.e.,
		\[ \lim_{n \to \infty} \|\bar{A} - \bar{A}_n  \|_{\infty} = 0. \]
		\item[(ii)] The right derivatives $\bar{a}_n(x)$ converge to $\bar{a}(x)$ at any $x$, where $\bar{a}$ is continuous.
		\item[(iii)] For any $\beta \geq 0$ the functions $\Phi(\beta, A_n,z)$ converge uniformly to $\Phi(\beta, A,z)$ as a function of $z$, i.e.,
			\[ \lim_{n \to \infty} \|\Phi(\beta, A_n,\cdot) - \Phi(\beta, A,\cdot) \|_{\infty} = 0. \]
	\end{itemize} 
\end{lemma}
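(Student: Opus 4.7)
\textbf{Proof plan for Lemma \ref{lem:formcont}.}

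\emph{Part (i).} The plan is to exploit the defining minimality of the concave hull. With $\varepsilon_n := \|A - A_n\|_\infty \to 0$, one has $A_n \leq A + \varepsilon_n \leq \bar{A} + \varepsilon_n$, and since $\bar{A} + \varepsilon_n$ is concave, the smallest concave majorant of $A_n$ satisfies $\bar{A}_n \leq \bar{A} + \varepsilon_n$. Swapping the roles of $A$ and $A_n$ gives the reverse inequality, hence $\|\bar{A}_n - \bar{A}\|_\infty \leq \varepsilon_n$. This step is essentially immediate.

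\emph{Part (ii).} I would use the slope-comparison characterization of one-sided derivatives for concave $f$ on $[0,1]$: $(f(y)-f(x))/(y-x) \leq f'_+(x)$ for $y > x$, with the slopes increasing to $f'_+(x)$ as $y \downarrow x$; symmetrically, $f'_+(x) \leq (f(x)-f(y))/(x-y)$ for $y < x$, with these slopes decreasing to $f'_-(x) \geq f'_+(x)$ as $y \uparrow x$. Applying both inequalities to $\bar{A}_n$, passing to the limit $n \to \infty$ via (i), and then letting $y \to x$ yields
\[ \bar{a}(x) \leq \liminf_n \bar{a}_n(x) \leq \limsup_n \bar{a}_n(x) \leq \bar{a}^-(x). \]
Continuity of $\bar{a}$ at $x$ forces $\bar{a}^-(x) = \bar{a}(x)$, closing the sandwich.

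\emph{Part (iii).} The key step is to rewrite $\Phi(\beta, A, z)$ as
\[ \Phi(\beta, A, z) = \int_0^z g_\beta(\bar{a}(x))\, dx, \qquad g_\beta(a) := \begin{cases} \beta\sqrt{(2\ln 2)\, a} & \text{if } a \geq (2\ln 2)/\beta^2, \\ \tfrac{\beta^2}{2}\, a + \ln 2 & \text{otherwise}, \end{cases} \]
which follows by direct inspection of \eqref{eq:fz}; the function $g_\beta$ is continuous since both branches equal $2\ln 2$ at $a = (2\ln 2)/\beta^2$. From (ii) together with the fact that the nonincreasing function $\bar{a}$ has at most countably many discontinuities, one gets $\bar{a}_n \to \bar{a}$ Lebesgue-a.e., hence $g_\beta(\bar{a}_n) \to g_\beta(\bar{a})$ a.e. by continuity of $g_\beta$. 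The concavity of $\bar{A}_n$ with $\bar{A}_n(0) = A_n(0) \geq 0$ and $\bar{A}_n(1) = 1$ yields the uniform bound $\bar{a}_n(x) \leq 1/x$, so $g_\beta(\bar{a}_n(x))$ is dominated by $\max\{\beta\sqrt{(2\ln 2)/x},\,2\ln 2\} \in L^1([0,1])$. Dominated convergence delivers pointwise convergence $\Phi(\beta, A_n, z) \to \Phi(\beta, A, z)$ for each $z \in [0,1]$. Since each $z \mapsto \Phi(\beta, A_n, z)$ is nondecreasing (as $g_\beta \geq 0$) and the limit is continuous in $z$, a standard partition argument promotes pointwise to uniform convergence.

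\emph{Main obstacle.} I expect part (iii) to require the most care. A naive term-by-term attack on \eqref{eq:fz} would need $x_n(\beta) \to x(\beta)$, but this can fail if $\bar{a}$ has a plateau at height $(2\ln 2)/\beta^2$. Rewriting through the continuous $g_\beta$ is the trick that bypasses this issue; securing the $x$-integrable dominant and upgrading pointwise to uniform convergence are then routine.
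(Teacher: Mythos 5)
Your proof is correct, and parts (i) and (ii) are exactly the paper's argument (the same sandwich $\bar{A}_n \leq \bar{A} + \|A-A_n\|_\infty$ for (i), and the monotone-slope characterization that the paper delegates to standard convex analysis for (ii)). In part (iii) the overall strategy also coincides — write $\Phi(\beta,A,z)$ as $\int_0^z$ of an integrand determined by $\bar a$, get a.e.\ convergence of the integrand from (ii) and the countability of discontinuities of the monotone function $\bar a$, and control the possible blow-up near $x=0$ — but your technical route differs in a useful way. The paper keeps the integrand in the form $\varphi(\beta,A_n,x)$ with the indicator $\mathbbm{1}_{x< x(\beta,A_n)}$ and asserts that $x(\beta,A_n)\to x(\beta,A)$; as you observe, this convergence can in fact fail when $\bar a$ has a plateau exactly at height $(2\ln 2)/\beta^2$ (harmlessly so, since both branches of the integrand take the value $2\ln 2$ there, which is implicitly what rescues the paper's a.e.\ convergence claim). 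Your composition with the continuous function $g_\beta$ makes this point explicit and removes any need to discuss $x(\beta,A_n)$ at all, which is a cleaner formulation of the same mechanism. Likewise, where the paper splits off $[0,\delta]$, uses monotonicity of $\varphi$ in $x$ for uniform boundedness on $[\delta,1]$, and lets $\delta\to 0$ via $\bar A(0)=0$, you instead use the explicit dominant $\max\{\beta\sqrt{(2\ln 2)/x},\,2\ln 2\}$ coming from $\bar a_n(x)\leq 1/x$; both are valid, and your $L^1$ bound $\sup_z|\Phi(\beta,A_n,z)-\Phi(\beta,A,z)|\leq \int_0^1|g_\beta(\bar a_n)-g_\beta(\bar a)|\,dx$ even gives uniformity directly, so the final monotonicity/partition step is not needed. (One cosmetic slip: $\bar A_n(0)=A_n(0)$ need not hold, but you only use $\bar A_n(0)\geq 0$ and $\bar A_n\leq 1$, so the bound $\bar a_n(x)\leq 1/x$ is unaffected.)
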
 

\begin{proof}
	\begin{itemize}
		\item[1.] The function $\bar{A} + \|\bar{A} - \bar{A}_n  \|_{\infty}$ is a concave function which majorizes $A_n$, i.e.
		\[ \bar{A}_n \leq \bar{A} + \|\bar{A} - \bar{A}_n  \|_{\infty}.  \]
		Similarly, one shows that 
		\[\bar{A} \leq \bar{A}_n + \|\bar{A} - \bar{A}_n  \|_{\infty}.  \]
		The first assertion is a direct consequence of these bounds.
		\item[2.] Since $\bar{A}_n$ is a sequence of concave functions converging uniformly to $\bar{A}$, the second claim follows from standard convex analysis (see e.g.~\cite{Sim11}).
		\item[3.] We first recall that $x(\beta,A)$ is given by
		\[ x(\beta,A) = \sup \{x|\bar{a}(x) > 2 \ln2/ \beta^2 \}. \]
		Since $\bar{a}$ is a decreasing function,  $\bar{a}$ is continuous except for an at most countable set. The second statement implies then that $x(\beta,A_n)$ converges to $x(\beta,A)$. Next, we rewrite 
		\[ \Phi(\beta, A,z) = \int_{0}^{z} \varphi(\beta, A,x) \, dx \]
		with the function 
		\[ \varphi(\beta, A,x) \coloneqq \beta \sqrt{(2 \ln2 )\bar{a}(x)}  \mathbbm{1}_{x < x(\beta,A)} + 
		\left(\frac{\beta^2}{2} \bar{a}(x) + \ln2\right)  \mathbbm{1}_{x \geq x(\beta)}.  \]
		Therefore, it suffices to show 
		\[ \lim_{n \to \infty}  \int_{0}^{1} |\varphi(\beta, A,x) - \varphi(\beta,A_n,x)| \, dx = 0. \]
		Due to our previous considerations, we know that $\varphi(\beta,A_n,x)$ converges almost everywhere (with respect to the Lebesgue measure  and $x$) to $\varphi(\beta,A_n,x)$. Moreover, the functions $\varphi(\beta,A_n,x)$ are uniformly bounded at $[\delta, 1]$ for any $\delta > 0$, since $\varphi(\beta,A_n,x)$ is  decreasing in $x$. By dominated convergence we obtain 
		\[ \lim_{n \to \infty}  \int_{\delta}^{1} |\varphi(\beta, A,x) - \varphi(\beta,A_n,x)| \, dx = 0  \]
		for any $\delta > 0$. On the other hand,
		\[  \int_{0}^{\delta} |\varphi(\beta, A,x) - \varphi(\beta,A_n,x)| \, dx \leq \int_{0}^{\delta} \varphi(\beta, A,x) + \varphi(\beta,A_n,x) \, dx \leq \frac{\beta^2}{2}\left(\bar{A}(\delta) + \bar{A}_n(\delta) \right) + 2 \delta \ln2. \]
		As $\bar{A}$ is continuous, $\bar{A}(0) = 0$ and the sequence $\bar{A}_n$ converges uniformly, the third assertion follows as $\delta \to 0$.
	\end{itemize}
\end{proof}

We are now ready to show Theorem \ref{thm:qcrem} and Corollary 
\ref{cor:qcrem}.
\begin{proof}[Proof of Theorem \ref{thm:qcrem} and Corollary~\ref{cor:qcrem} ]
We pick a sequence of step functions $A_n$, which are also distribution functions and converge uniformly to the distribution function $A$. By Theorem \ref{thm:general} the expression for $\Phi(\beta,\mathfrak{b},A_n)$ may be rewritten as 
	\[\Phi(\beta,\mathfrak{b},A_n) = \sup_{0 \leq z \leq 1}\left[ \Phi(\beta,A_n,z) + (1-z) \ee[ \ln 2 \cosh(\beta \mathfrak{b})] \right].    \]
	By the interpolation estimate \eqref{eq:interpol} the left side converges to  the corresponding limit of the quantum CREM's pressure $\Phi(\beta,\mathfrak{b},A)$, whereas the right side converges to
	\[ \lim_{n \to \infty} \sup_{0 \leq z \leq 1} \Phi(\beta,A_n,z) + (1-z)\ee[ \ln 2 \cosh(\beta \mathfrak{b})] = \sup_{0 \leq z \leq 1}\left[  \Phi(\beta,A,z) + (1-z) \ee[ \ln 2 \cosh(\beta \mathfrak{b})]  \right] \] 
	by Lemma \ref{lem:formcont}. This completes the proof of Theorem \ref{thm:qcrem}.
	
	In case $\bar{A}$ is continuously differentiable and $\mathfrak{b} = \Gamma$, the convex  function $[0,1] \ni z \mapsto \Phi(\beta,A,z) + (1-z)  \ln\left( 2 \cosh(\beta \Gamma) \right)$ possesses a maximum in the interior of its domain if and only if there exists a solution $z \in (0,1)$ of 
	\[ \frac{ \partial \Phi(\beta,A,z)}{\partial z} - \ln 2 \cosh(\beta \Gamma) = 0. \]
	Otherwise the maximum is attained at $z = 0$ or $z = 1$. A straightforward calculation then leads to the formula in Corollary \ref{cor:qcrem}.	
\end{proof}

\subsection{The nonhierarchical GREM in a transversal field}\label{sec:nonhier}
We start with the proof of Theorem~\ref{thm:nonhgrem}. In the following we will use the notation introduced in Section 1.3.
\begin{proof}[Proof of Theorem~\ref{thm:nonhgrem} ] 
	Our strategy is to adapt the proof of Corollary~\ref{cor:key}.
	To be more precise, we introduce for any subset 
	 $J \in \mathcal{P}$ 
the restriction $B^{J}$  of $B$ to the subgraph spanned by the spins $\pmb{\sigma}_J$, that is, 
	\[ B^{J} \coloneqq \sum_{k \in J} B^{(k)},  \quad  B^{(k)} \coloneqq B^{1,x_k} - B^{1,x_{k-1}},  \] 
	and $B^{\emptyset} = 0$.
	We claim that for any two subsets $I,J \in \mathcal{P}$
	and any potential $V_N$ independent of $X_{\pmb{\sigma}_I}^{I}$, we have
	\begin{equation}\label{eq:keynonh}
	\limsup_{N \to \infty} \frac1N \left| \ln \tr e^{-\beta(\sqrt{Na_I} X_{\pmb{\sigma}_I}^{I} + V_N-B^J)} - \max\{\ln \tr e^{-\beta(V_N-B^{J})}, \ln \tr e^{-\beta(\sqrt{Na_I} X_{\pmb{\sigma}_I}^{I} + V_N-B^{J\setminus I}}   \}\right| = 0.   
	\end{equation}
	We note that $B^{J}$ can be represented as a transversal magnetic field whose weights corresponding to the complement $J^c$ are set zero. Thus, \eqref{eq:keynonh} follows from Theorem~\ref{prop:key} after possibly rearranging the spin components. 
	Using  \eqref{eq:keynonh} successively for each subset $J \in \mathcal{P}$ (where the remaining potential  $V_N$ might change from step to step), we finally arrive at
    \[ \limsup_{N \to \infty} \frac1N \left| \ln \tr e^{-\beta(V-B)} - \max_{\mathcal{F} \subset \mathcal{P}, D \in \mathcal{P}, D^c \cap \mathcal{F} = \emptyset }  \ln \tr e^{-\beta(\sum_{F \in \mathcal{F}} \sqrt{a_F N} X_{\pmb{\sigma}_F}^{F} - B^{D^c} )} \right| = 0     \]
   	where $ D^c \cap \mathcal{F} = \emptyset$ is understood elementwise, that is, $ D^c \cap F = \emptyset$ for any $F \in \mathcal{F}$. We note that the convexity of the exponential and the variables $X_{\pmb{\sigma}_J}^J$ being centered Gaussians, implies  (e.g. by \eqref{eq:keynonh}) for any $D^c \cap \mathcal{F} = \emptyset,$
    	\[ \liminf_{N \to \infty}  \frac1N \left(  \ln \tr e^{-\beta(\sum_{F \in \mathcal{P}(D)} \sqrt{a_F N} X_{\pmb{\sigma}_F}^{F} - B^{D^c} )} - \ln \tr e^{-\beta(\sum_{F \in \mathcal{F}} \sqrt{a_F N} X_{\pmb{\sigma}_F}^{F} - B^{D^c} )} \right) \geq 0,  \]
    	where $\mathcal{P}(D)$ is the power set of $D$.
    	On the other hand, the limit of the left term exists almost surely and is given by 
    	\[ \lim_{N \to \infty} \frac1N \ln \tr e^{-\beta(\sum_{F \in \mathcal{P}(D)} \sqrt{a_F N} X_{\pmb{\sigma}_F}^{F} - B^{D^c} )} = \min_{S \in \mathcal{C}^{D}} \Phi_D(\beta,S)   + \sum_{k \in D^c} L_k \ \ee[\ln 2\cosh(\beta \mathfrak{b})],    \]
    	where we used the strong law of large numbers for the expression involving $B^{D^c}$ and the known convergence \cite{BK06} of the classical nonhierarchical GREM. We in fact need a slightly more generalized version of  \cite{BK06} which is also applicable to the reduced model on the subgraph generated by $\pmb{\sigma}_{D^c}$. However, this easily follows by a scaling argument. 
    Combining our findings, we arrive at
	\[ \lim_{N \to \infty} \Phi_N(\beta,\Gamma) = \max_{D \in \mathcal{P}} \min_{S \in \mathcal{C}^{D}} \left[ \Phi_D(\beta,S)   + \sum_{k \in D^c} L_k\  \ee[\ln 2\cosh(\beta \mathfrak{b})] \right].  \] 
\end{proof}

It remains to show Corollary~\ref{cor:nonhgrem}. To this end we need the following Lemma for the classical non-hierarchical GREM:

\begin{lemma}\label{lem:nonhgrem}
Let $X$ be a Gaussian vector of non-hierarchical GREM type. Then, there exists a chain $S_0$ such that for any chain $S$ the pointwise estimate 
\begin{equation}\label{eq:lemnhg1}
\bar{A}_{S}(x) \leq \bar{A}_{S_0}(x)
\end{equation}
holds true, where $\bar{A}_{S}$ and $\bar{A}_{S_0}$ are the concave hulls of the ordinary GREM vectors assigned to $S$ and $S_0$ respectively. Moreover, we have for any $\beta \geq 0$
\begin{equation}\label{eq:lemnhg2}
\Phi(\beta) = \min_{S \in \mathcal{C}} \Phi(\beta,S) =  \Phi(\beta,S_0).
\end{equation}
\end{lemma}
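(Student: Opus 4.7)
The plan is to read off $S_0$ from the geometry of the point cloud $\mathcal{G} \coloneqq \{(g(D), f(D)) : D \in \mathcal{P}\} \subset [0,1]^2$, where
\[ g(D) \coloneqq \sum_{j \in D} L_j, \qquad f(D) \coloneqq \sum_{D' \subseteq D} a_{D'}. \]
A direct count shows that for any chain $S = \{A_0, \dots, A_n\} \in \mathcal{C}$ the step function $A_S$ satisfies $A_S(y_k^S) = f(A_k)$ at $y_k^S = g(A_k)$, so its concave hull $\bar{A}_S$ is the upper concave envelope of the polygon $\{(g(A_k), f(A_k))\}_{k=0}^n \subset \mathcal{G}$. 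Writing $\bar{A}^*$ for the upper concave envelope of the full point cloud $\mathcal{G}$, we automatically have $\bar{A}_S \leq \bar{A}^*$ for every chain, and the problem reduces to exhibiting one chain whose polygon visits every vertex of $\bar{A}^*$.

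The key algebraic input is that $g$ is modular while $f$ is supermodular:
\[ g(D_1)+g(D_2) = g(D_1 \cap D_2)+g(D_1 \cup D_2), \qquad f(D_1)+f(D_2) \leq f(D_1 \cap D_2)+f(D_1 \cup D_2), \]
the second following from $a_{D'} \geq 0$ and inclusion--exclusion over the subsets of $D_1$, $D_2$. Combined with the majorization inequality for concave functions --- namely $h(c)+h(d) \geq h(a)+h(b)$ whenever $a \leq c \leq d \leq b$ and $a+b=c+d$ --- a short argument shows that if $D_1, D_2 \in \mathcal{P}$ both lie on $\bar{A}^*$ at different $g$-values and are incomparable under $\subseteq$, then the chain
\[ \bar{A}^*(g(D_1))+\bar{A}^*(g(D_2)) \geq \bar{A}^*(g(D_1 \cap D_2))+\bar{A}^*(g(D_1 \cup D_2)) \geq f(D_1 \cap D_2)+f(D_1 \cup D_2) \geq f(D_1)+f(D_2) \]
collapses to equalities, forcing $\bar{A}^*$ to be affine on $[g(D_1 \cap D_2), g(D_1 \cup D_2)]$ and placing $g(D_1), g(D_2)$ in the interior of a single linear piece. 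Hence neither is a vertex, so any two vertices of $\bar{A}^*$ are comparable, and picking one representative $D_i^*$ per vertex yields a chain $\emptyset = D_0^* \subsetneq D_1^* \subsetneq \cdots \subsetneq D_r^* = \{1, \dots, n\}$ in $\mathcal{P}$.

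I would then extend to a full chain $S_0 \in \mathcal{C}$ by inserting, between each consecutive pair $D_i^* \subsetneq D_{i+1}^*$, the elements of $D_{i+1}^* \setminus D_i^*$ one at a time in any order. The polygon of $S_0$ now contains every vertex of $\bar{A}^*$; its remaining polygon points belong to $\mathcal{G}$ and so lie below $\bar{A}^*$. Concavity therefore forces $\bar{A}_{S_0} = \bar{A}^*$, and $\bar{A}_S \leq \bar{A}^* = \bar{A}_{S_0}$ for every $S \in \mathcal{C}$, establishing~\eqref{eq:lemnhg1}.

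The main obstacle is precisely the chain property of hull vertices above: naive lattice surgery --- replacing an incomparable pair $D_1, D_2$ by its meet and join --- moves off the desired $g$-levels, so one really must exploit the \emph{vertex} characterization. The second identity~\eqref{eq:lemnhg2} then follows quickly: $\Phi(\beta, S)$ depends on $S$ only through $\bar{A}_S$ by~\eqref{eq:partfren}, and a classical Gaussian (Kahane--Slepian) interpolation, in the spirit of~\eqref{eq:interpol} but with the transversal field switched off, yields the signed monotonicity that more correlation (pointwise larger concave hull) gives smaller pressure. Hence $\bar{A}_S \leq \bar{A}_{S_0}$ implies $\Phi(\beta, S) \geq \Phi(\beta, S_0)$, and the Bolthausen--Kistler identity $\Phi(\beta) = \min_{S \in \mathcal{C}} \Phi(\beta, S)$ recalled above pins the minimum at $S_0$.
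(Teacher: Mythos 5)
Your argument is correct, and part (i) follows a genuinely different route from the paper. The paper constructs $S_0$ greedily: it defines slopes $\gamma_J = \bigl(\sum_{I \subset J} a_I\bigr)/\bigl(\sum_{k \in J} L_k\bigr)$, picks $J_1$ of maximal slope, then iterates over sets not contained in the previous one (checking $\gamma_{J_2} \leq \gamma_{J_1 \cup J_2}$ so the selected sets can be taken nested), completes the resulting nested family to a chain, and asserts that the piecewise linear function through the points $(L_{J_k}, \tilde a_{J_k})$ is the pointwise maximal hull "by construction". You instead work with the full point cloud $\{(g(D), f(D)) : D \in \mathcal{P}\}$, take its upper concave envelope $\bar A^*$, and prove structurally that the envelope's breakpoints admit representatives that are totally ordered by inclusion, using modularity of $g$, supermodularity of $f$ (which plays the role of the paper's $\gamma_{J_2} \leq \gamma_{J_1\cup J_2}$ check), and the equality case of the concave majorization inequality; completing the vertex chain arbitrarily then gives $\bar A_{S_0} = \bar A^* \geq \bar A_S$. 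Your version buys a cleaner maximality statement -- pointwise optimality of $\bar A_{S_0}$ is automatic because $\bar A^*$ dominates every chain polygon by definition, whereas the paper's greedy construction leaves that verification implicit -- at the cost of a slightly longer combinatorial argument (and you should make explicit that "vertex" includes the endpoints $0,1$, represented by $\emptyset$ and $\{1,\dots,n\}$, and that the strict inclusions $D_1\cap D_2 \subsetneq D_i \subsetneq D_1 \cup D_2$ give strict abscissa gaps because all $L_j>0$). For part (ii) the two proofs coincide: the pressure depends on a chain only through its concave hull, a signed Gaussian interpolation/Slepian comparison of the hulled processes (equal variance since both hulls end at $1$) gives $\Phi(\beta,S_0)\leq\Phi(\beta,S)$, and the Bolthausen--Kistler identity $\Phi(\beta)=\min_{S}\Phi(\beta,S)$ locates the minimum at $S_0$.
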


We note that the second assertion in Lemma~\ref{lem:nonhgrem} states that Corollary~\ref{cor:nonhgrem} holds true in the case $\mathfrak{b} = 0$. The statements of  Lemma~\ref{lem:nonhgrem} are a simple consequence of the derivation in \cite{BK06} (cf.~Remark~7 in that paper). For  completeness, we will spell out a proof in the appendix.

\begin{proof}[Proof of Corollary~\ref{cor:nonhgrem}]
Let $S_0$ be a chain as in Lemma~\ref{lem:nonhgrem}. After relabeling the components of $\pmb{\sigma}$, we may assume that 
\begin{equation}\label{eq:S_0} S_0 = \{\emptyset,\{1\},\{1,2\},\ldots,\{1,\ldots,n\}    \}     \end{equation}
We want to show that 
\[ \max_{D \in \mathcal{P}}\left[  \min_{S \in \mathcal{C}^{D}} \Phi(\beta,S)   + \sum_{k \in D^c} L_k \ \ee[\ln 2\cosh(\beta \mathfrak{b})] \right]= \Phi(\beta,\mathfrak{b},S_0)  \]	
by establishing two inequalities. First, abbreviating $ D_k:= \{1,\ldots,k \}  $ with $D_0:= \emptyset $, we have 

\[ \begin{split} &\max_{D \in \mathcal{P}} \left[   \min_{S \in \mathcal{C}^{D}} \Phi(\beta,S)   + \sum_{k \in D^c} L_k \ee[\ln 2\cosh(\beta \mathfrak{b})] \right]\\ & \geq
\max_{0 \leq k \leq n }  \left[ 
\min_{S \in \mathcal{C}^{D_k}} \Phi(\beta,S)   + \sum_{k \in D_k^c}
 L_k\ \ee[\ln 2\cosh(\beta \mathfrak{b})] \right] \\
 & =  \max_{0 \leq k \leq n}   \left[  \Phi(\beta,S_0^{D_k})   + \sum_{k \in D_k^c} L_k \ \ee[\ln 2\cosh(\beta \mathfrak{b})] \right] = 
 	\Phi(\beta,\mathfrak{b},S_0)
  \end{split}. \]
Here $S_0^D$ denotes the chain which coincides with $S_0$ but ends at $D$. The last line follows from Lemma~\ref{lem:nonhgrem} as it implies that even in the constrained setting the cut versions of $S_0$ are indeed minimizing chains.

For the reverse inequality, let $I_1,\ldots,I_m$ be the sets associated to the concave hull $\bar{A}_{S_0}$ and $\varphi_{S_0}^{(l)}(\beta)$ be the partial pressure corresponding to the GREM assigned to the chain $S_0$, cf.~\eqref{eq:partfren}. Moreover, for any $D \in \mathcal{P}$ we define the ordered-restriction chain $S_0^D$,
\[ S_0^D \coloneqq \{ \{\emptyset \},\{j_1\},\{j_1,j_2  \},\ldots \{j_1,\ldots,j_{k_D}\} \},   \]  
where $j_1 < j_2 < \ldots j_{k_D} \in D$ and $\{j_1,\ldots,j_{k_D}\} = D$. Then for any $D \in \mathcal{P}$ ,
\[ \begin{split} &\min_{S \in \mathcal{C}^{D}} \Phi(\beta,S)   + \sum_{j \in D^{c}} L_j \ \ee[\ln 2\cosh(\beta \mathfrak{b})] 
\leq  \Phi(\beta,S_0^D)   + \sum_{j\in D^{c}} L_j \ \ee[\ln 2\cosh(\beta \mathfrak{b})] \\
&= \sum_{l=1}^{m_D}  \varphi^{(l)}_{S_0^D}(\beta)  +
\sum_{j \in D^{c}} L_j \ee[\ln 2\cosh(\beta \mathfrak{b})] \\
&\leq \sum_{l=1}^{m}  \frac{\sum_{k \in I_l \cap D} L_k}{\sum_{k \in I_l} L_k} \varphi^{(l)}_{S_0}(\beta)  +
\sum_{j \in D^{c}} L_j \ee[\ln 2\cosh(\beta \mathfrak{b})].
  \end{split}   \]
  The last inequality, follows from three observations. First, we recall that the weights $a_{l}^{S_0^D}$ assigned to the chain $S_0^D$ are less or equal to the weights  $a_{j_l}^{S_0}$ of the chain~\eqref{eq:S_0}. Secondly, we note that the increments $\Delta_l \bar{A}_{S_0^D}$ on the segments $D \cap I_l \neq \emptyset $ can be bounded,
  \[\frac{ \Delta_l \bar{A}_{S_0^D}}{\sum_{k \in I_l \cap D} L_k}  \leq \frac{\Delta_l \bar{A}_{S_0}}{\sum_{k \in I_l} L_k} ,  \]
  since otherwise if this  does not hold  we construct a chain $S^\prime$ violating Lemma~\ref{lem:nonhgrem} using the first observation. Thirdly, an application of Slepian's lemma as in the proof of Lemma~\ref{lem:x=y} extends the summation to $ m $ and yields the claimed inequality.

We thus obtain 
  \[ \begin{split} & \sum_{l=1}^{m} \left[ \frac{\sum_{k \in I_l \cap D} L_k}{\sum_{k \in I_l} L_k} \varphi^{(l)}_{S_0}(\beta)  +
\sum_{j \in D^{c}\cap I_l} L_j \ee[\ln 2\cosh(\beta \mathfrak{b})] \right] \\
& \leq \sum_{l=1}^{m} \sum_{k \in I_l } L_k \max\left\{\frac{\varphi^{(l)}_{S_0}(\beta)}{\sum_{k \in I_l} L_k}    ,  \ee[\ln 2\cosh(\beta \mathfrak{b})]  \right\}  = \Phi(\beta,\mathfrak{b},S_0),  
  \end{split}   \]
 where the last equality is based on the concavity Lemma~\ref{lem:concav} and the explicit expression~\eqref{eq:qgrem} for the pressure of the quantum GREM. This completes the proof as $D$ was chosen arbitrarily.
 
\end{proof}

 \appendix
 \section{Supplementary results}

\subsection{Sufficient condition for Assumption~\ref{ass1}}

We want to present a quite general condition on the distribution of $X_{\pmb{\sigma_1}}$ which implies the third point in
Assumption~\ref{ass1}:

\begin{lemma}\label{lem:slln}
	Let $X_{\pmb{\sigma}_1}$ be independent and identically distributed centered random variables which satisfy an LDP with good rate function $I$, i.e., the sets $\{x| I(x) \leq a\}$ are compact for any $a \geq 0$. Moreover, the rate function shall satisfy 
	\[ \inf_{|x| > \varepsilon} I(x) > 0  \] 
	for any $\epsilon > 0$.
	Then, $(X_{\pmb{\sigma_1}})$ fulfills the conditions 1.,2. and 3. in Assumption~\ref{ass1}.
\end{lemma}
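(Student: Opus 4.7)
The plan is to verify the three conditions in Assumption~\ref{ass1} in turn. Condition 1 is identical to the hypothesis. For condition 2, the LDP for the pushforward of $X_{\pmb{\sigma}_1}^-/N$ follows from the assumed LDP for $X_{\pmb{\sigma}_1}/N$ by an application of the contraction principle to the continuous map $y \mapsto \min\{y,0\}$; the resulting rate function $\tilde I$ coincides with $I$ on $(-\infty,0)$ and vanishes on $[0,\infty)$, so the hypothesis $\inf_{|y|>\epsilon} I(y) > 0$ immediately gives $\inf_{y \leq -\epsilon} \tilde I(y) > 0$, which is~\eqref{eq:largedev2}.

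Condition 3 is the substantive content, and I would prove it by a two-level truncation combined with a Borel--Cantelli argument that crucially exploits the independence of the weights $w_{\pmb{\sigma}_1}$ from the $X_{\pmb{\sigma}_1}$. Fix $\epsilon > 0$ and set $\delta_t := \inf_{|y|\geq t} I(y)$, which is positive for every $t > 0$ by hypothesis and tends to $+\infty$ as $t \to \infty$ by goodness of $I$. Choose $M$ with $\delta_M > \ln 2$. Applying the LDP upper bound and a union bound over $|\mathcal{Q}_N^{(1)}| \leq 2^N$ vertices,
\[ \pp\bigl( \max_{\pmb{\sigma}_1} |X_{\pmb{\sigma}_1}|/N > M \bigr) \leq 2^N \, e^{-N(\delta_M + o(1))} , \]
which is summable in $N$, so Borel--Cantelli yields $\max_{\pmb{\sigma}_1} |X_{\pmb{\sigma}_1}| \leq NM$ almost surely eventually. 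On that event, since the weights sum to $1$,
\[ \Bigl| \frac{1}{N} \sum_{\pmb{\sigma}_1} w_{\pmb{\sigma}_1} X_{\pmb{\sigma}_1} \Bigr| \leq \epsilon + M \sum_{\pmb{\sigma}_1} w_{\pmb{\sigma}_1} \mathbbm{1}_{|X_{\pmb{\sigma}_1}|>N\epsilon}. \]
Conditioning on $w$ and again using the LDP gives $\ee\bigl[\sum_{\pmb{\sigma}_1} w_{\pmb{\sigma}_1} \mathbbm{1}_{|X_{\pmb{\sigma}_1}|>N\epsilon}\bigr] = \pp(|X_{\pmb{\sigma}_1}|/N > \epsilon) \leq e^{-N(\delta_\epsilon + o(1))}$, so Markov's inequality and a second Borel--Cantelli drive this remainder to zero almost surely. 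Thus $\limsup_N |N^{-1}\sum_{\pmb{\sigma}_1} w_{\pmb{\sigma}_1} X_{\pmb{\sigma}_1}| \leq \epsilon$ a.s., and letting $\epsilon \downarrow 0$ along a countable sequence concludes the proof.

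The main obstacle I anticipate is that an arbitrary weighted convex combination is not a sample mean, so no classical SLLN applies directly --- a priori the weights might concentrate on the atypically large realizations of $X_{\pmb{\sigma}_1}$. The crux is that the LDP renders those atypical indices exponentially rare, while the independence of $w$ from $X$ lets one integrate out $X$ conditionally on $w$ and recover the ordinary one-variable tail estimate. The goodness of $I$ (compactness of sublevel sets) is exactly what is needed to truncate $|X_{\pmb{\sigma}_1}|/N$ uniformly over all $2^N$ indices in the first step; without it, $\max_{\pmb{\sigma}_1}|X_{\pmb{\sigma}_1}|/N$ could a priori fail to be bounded.
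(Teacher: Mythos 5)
Your proof is correct, and for the substantive part (condition 3) it takes a genuinely different route from the paper. The paper splits the index set according to the \emph{weights}: on $A_N=\{w_{\pmb{\sigma}_1}\ge N^{-2}\}$ (at most $N^2$ indices) it uses a union bound on $\sup_{A_N}|X_{\pmb{\sigma}_1}|$, while on $A_N^c$ it combines the global truncation $|X_{\pmb{\sigma}_1}|\le CN$ (obtained, as in your first step, from goodness of $I$ plus a union bound over $2^N$ sites) with a conditional second-moment estimate that crucially uses $\ee X_{\pmb{\sigma}_1}=0$ and $w_{\pmb{\sigma}_1}\le N^{-2}$ to get a variance bound of order $N^{-2}$. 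You instead split according to the size of $X_{\pmb{\sigma}_1}$ itself ($|X_{\pmb{\sigma}_1}|\le N\epsilon$ versus larger), and control the total weight $R_N=\sum w_{\pmb{\sigma}_1}\mathbbm{1}_{|X_{\pmb{\sigma}_1}|>N\epsilon}$ by a first-moment bound, using independence to get $\ee R_N=\pp(|X_{\pmb{\sigma}_1}|>N\epsilon)$ and then Markov plus Borel--Cantelli. Your version is leaner: it needs no weight threshold and no variance/centering computation (the centering enters only implicitly through the hypothesis that $I$ vanishes only at the origin), at the price of relying entirely on the exponential tail from the LDP; the paper's second-moment step would survive with only centering and a second-moment-type control on the small-weight block, so it is marginally more robust in that direction. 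One small inaccuracy in your condition-2 argument: the contraction principle for $y\mapsto\min\{y,0\}$ gives a rate function equal to $I$ on $(-\infty,0)$, equal to $\inf_{y\ge 0}I(y)=0$ at $0$, and equal to $+\infty$ (not $0$) on $(0,\infty)$; this does not affect your verification of \eqref{eq:largedev2}, which only involves the half-line $(-\infty,-\epsilon]$.
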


\begin{proof}
	The points 1. and 2. are clear and it remains to check 3.
	Let $w_{\pmb{\sigma}_1}$ be random weights which are independent from  $X_{\pmb{\sigma}_1}$ and satisfy almost surely $w_{\pmb{\sigma}_1} \geq 0$ and $\sum_{\pmb{\sigma}} w_{\pmb{\sigma}_1} =1$. We introduce the sets 
	\[ A_N \coloneqq \{\pmb{\sigma}_1 \in \mathcal{Q}_N^{(1)}|  w_{\pmb{\sigma}_1} \geq 1/N^2  \}   \]
	and show separately 
	\begin{equation}\label{eq:slln1}
	\lim_{N \to \infty} \frac1N \sum_{\pmb{\sigma}_1 \in A_N } w_{\pmb{\sigma}_1} X_{\pmb{\sigma_1}} = 0
	\end{equation}
	and 
	\begin{equation}\label{eq:slln2}
	\lim_{N \to \infty} \frac1N \sum_{\pmb{\sigma} \in A_N^c } w_{\pmb{\sigma}_1} X_{\pmb{\sigma_1}} = 0.
	\end{equation}
	\noindent
	\textit{Proof of \eqref{eq:slln1}}: We have the trivial bound
	\[ \left|\frac1N \sum_{\pmb{\sigma}_1 \in A_N } w_{\pmb{\sigma}_1} X_{\pmb{\sigma_1}} \right| \leq \frac1N \sup_{\pmb{\sigma}_1 \in A_N } |X_{\pmb{\sigma}_1}|.    \] 
	We note that the cardinality of $A_N$ is bounded by $N^2$
	so that the independence of $w_{\pmb{\sigma}_1}$ and $X_{\pmb{\sigma_1}}$ implies for any $\delta > 0$
	\[ \begin{split} \pp(\sup_{\pmb{\sigma}_1 \in A_N } |X_{\pmb{\sigma}_1}| &\geq \delta N) \leq N^2 \pp( |X_{\pmb{\sigma}_1}| \geq \delta N) \leq N^2 e^{-(c_\delta + o(1))N}, \\ c_\delta &= \inf_{|x| \geq \delta} I(x)  > 0.  \end{split}  \]
	Therefore, the bound on the probability is summable in $N$ for any $\delta > 0$ and a Borel-Cantelli argument finishes the proof of \eqref{eq:slln1}. \\
	
	\noindent
	\textit{Proof of \eqref{eq:slln2}}: As $I$ is a good rate function, we find  $C > 0$ such that 
	\[ \inf_{|x| \geq C} I(x) \geq 2 \ln 2,    \]
	and hence
	\[ \pp(\sup_{\pmb{\sigma}_1 \in \mathcal{Q}_N^{(1)} } |X_{\pmb{\sigma}_1}| \geq CN) \leq 2^N e^{-(2 \ln 2 + o(1))N} = (2+o(1))^{-N}.\]
	By a Borel-Cantelli argument 
	we may assume without loss of generality that $|X_{\pmb{\sigma}_1}| \leq CN$ holds true for all sufficiently large $N$ with probability one.  
	Then, by independence   we have
	\[ \begin{split}
	 \ee\left[ \left( \frac1N \sum_{\pmb{\sigma}_1 \in A_N^c } w_{\pmb{\sigma}_1} X_{\pmb{\sigma_1}} \right)^2\right] &\leq \frac{1}{N^2} \ee\left[  \sum_{\pmb{\sigma}_1 \in A_N^c } w_{\pmb{\sigma}_1}^2 X_{\pmb{\sigma_1}}^2 \right]
	\leq C^2 \ \ee\left[  \sum_{\pmb{\sigma}_1 \in A_N^c } w_{\pmb{\sigma}_1}^2 \right] \\ &\leq 
	\frac{C^2}{N^2} \ \ee\left[  \sum_{\pmb{\sigma}_1 \in A_N^c } w_{\pmb{\sigma}_1} \right]  \leq \frac{C^2}{N^2}. \end{split}   \]
	Here, the first bound follows from the independence of $ X_{\pmb{\sigma_1}}$ and $\ee[X_{\pmb{\sigma_1}}] =0 $ after conditioning on $w_{\pmb{\sigma}_1}$. Then, we use  $|X_{\pmb{\sigma}_1}| \leq CN$ and $w_{\pmb{\sigma}_1} \leq N^{-2}$ for $\pmb{\sigma}_1 \in A_N^c$. The Borel-Cantelli lemma again completes the proof.
	
\end{proof}

\subsection{Assumption~\ref{ass2} for independent $L^1$ weights}
The aim of this section is to verify that Assumption~\ref{ass2}
is satisfied for independent copies $(b_j)$ of an absolutely integrable variable $\mathfrak{b}$:

\begin{lemma}\label{lem:upb}
	If the weights  $b_i$ are independent copies of an absolutely integrable variable $\mathfrak{b}$, we almost surely have 
	\begin{equation}
	\limsup_{N \to \infty} N^{-1} \sqrt{\sum_{i=1}^{N} |b_i|^2} = 0.
	\end{equation}
\end{lemma}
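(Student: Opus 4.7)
The plan is to show the equivalent statement $N^{-2}\sum_{i=1}^{N}|b_i|^2 \to 0$ almost surely. Since we only assume $\mathfrak{b} \in L^1$, the strong law cannot be applied directly to $|b_i|^2$, so I will instead use the elementary inequality
\[
\sum_{i=1}^{N} |b_i|^2 \;\leq\; \Bigl(\max_{1 \leq i \leq N} |b_i|\Bigr)\, \sum_{i=1}^{N} |b_i|,
\]
which reduces the problem to handling $N^{-1}\max_i |b_i|$ and $N^{-1}\sum_i |b_i|$ individually. The second factor converges almost surely to $\mathbb{E}|\mathfrak{b}|$ by Kolmogorov's strong law of large numbers, so everything hinges on the maximum.

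To bound the maximum, I will first establish $|b_n|/n \to 0$ almost surely. This is a textbook consequence of Borel--Cantelli: for any $\varepsilon > 0$,
\[
\sum_{n=1}^{\infty} \pp\bigl(|b_n| > \varepsilon n\bigr) \;=\; \sum_{n=1}^{\infty} \pp\bigl(|\mathfrak{b}|/\varepsilon > n\bigr) \;\leq\; \varepsilon^{-1}\, \ee[|\mathfrak{b}|] \;<\; \infty,
\]
using the standard identity $\sum_{n \geq 1} \pp(X \geq n) \leq \ee[X]$ for nonnegative $X$. Hence $|b_n| \leq \varepsilon n$ eventually almost surely, and since $\varepsilon$ is arbitrary, $|b_n|/n \to 0$ a.s.

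From this pointwise decay I pass to the maximum by a short direct argument. Fix $\varepsilon > 0$ and, almost surely, pick (random) $M$ such that $|b_n|/n < \varepsilon$ for all $n > M$. Then for $N > M$,
\[
\frac{1}{N} \max_{1 \leq i \leq N} |b_i| \;\leq\; \frac{1}{N}\max_{1 \leq i \leq M} |b_i| \;+\; \max_{M < i \leq N} \frac{|b_i|}{i} \;\leq\; \frac{1}{N}\max_{1 \leq i \leq M} |b_i| \;+\; \varepsilon.
\]
Letting $N \to \infty$ and then $\varepsilon \downarrow 0$ yields $N^{-1} \max_{i \leq N} |b_i| \to 0$ almost surely. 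Combining the three ingredients,
\[
\frac{1}{N^2} \sum_{i=1}^{N} |b_i|^2 \;\leq\; \Bigl(\frac{1}{N}\max_{1 \leq i \leq N} |b_i|\Bigr)\, \Bigl(\frac{1}{N}\sum_{i=1}^{N} |b_i|\Bigr) \;\longrightarrow\; 0 \cdot \ee[|\mathfrak{b}|] \;=\; 0
\]
almost surely, which is the claim.

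The only real subtlety is that $L^1$ integrability does not suffice to apply the strong law to $|b_i|^2$, so one must exploit the $L^1$ tail condition in a more delicate way; the $\max \cdot \text{sum}$ factorization together with the Borel--Cantelli step for $|b_n|/n$ is exactly what bridges this gap.
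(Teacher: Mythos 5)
Your proof is correct, but it takes a genuinely different route from the paper's. The paper argues directly on the squared variables, essentially reproving an $L^1$-type strong law: it passes to the dyadic subsequence $N_m=2^m$ (using monotonicity of $S_N=\sum_{i\le N}|b_i|^2$), truncates the summands at level $K_m=N_m$, bounds the truncated part by a Markov-type estimate and the untruncated tail by a union bound, and concludes with Borel--Cantelli. You instead factor $\sum_{i\le N}|b_i|^2\le\bigl(\max_{i\le N}|b_i|\bigr)\sum_{i\le N}|b_i|$, handle the second factor by the classical Kolmogorov strong law (applicable since $\mathfrak{b}\in L^1$), and reduce the first factor to the standard fact $|b_n|/n\to0$ a.s.\ for i.i.d.\ integrable variables, proved via Borel--Cantelli from $\sum_n\mathbb{P}(|\mathfrak{b}|>\varepsilon n)\le\varepsilon^{-1}\mathbb{E}[|\mathfrak{b}|]$, followed by the elementary splitting of the maximum into the first $M$ terms and the rest. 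Both arguments are sound: yours is shorter and uses the SLLN as a black box, whereas the paper's is self-contained (it needs only tail bounds on $|\mathfrak{b}|$, not the a.s.\ convergence of $N^{-1}\sum_i|b_i|$) and follows the familiar thinning-and-truncation template. The only small points in your write-up---intersecting the almost-sure events over a countable set of $\varepsilon$, and bounding the maximum of two terms by their sum in the splitting step---are standard and handled correctly.
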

\begin{proof}
	Our proof relies on a thinning and truncation argument and is similar to the proof of the strong law of large numbers in the $L^1$-case.
	
	Let us abbreviate the  partial sums $ S_N \coloneqq \sum_{i=1}^{N} |b_i|^2 $.
	We pick some $\epsilon > 0$ and introduce the sequence $N_m \coloneqq 2^m$. 
	Suppose we have already shown the almost sure convergence 
	\begin{equation}\label{eq:thin} \lim_{m \to \infty} (N_m)^{-2} S_{N_m} = 0.
	\end{equation} 
	Since $S_N$ is an increasing sequence we conclude that 
	\[ \limsup_{N \to \infty}  \frac{S_N}{N^2} \leq \limsup_{m \to \infty} \frac{S_{N_m}}{N_{m-1}^2} = 4  \limsup_{m \to \infty}  \frac{S_{N_m}}{N_{m}^2} = 0.  \]
	So it suffices to show \eqref{eq:thin}. To this end, let $K_m$ be a nonnegative  sequence which we will fix later and $S_{N_m}^{<}$,
	$S_{N_m}^{>}$ the truncated sums given by 
	\[ S_{N_m}^{<} \coloneqq \sum_{i =1}^{N_m} |b_i|^2 \mathbbm{1}_{|b_i| \leq K_m}  \quad \mbox{and} \quad S_{N_m}^{>} \coloneqq S_{N_m}  - S_{N_m}^{>}. \]
	For any $\epsilon > 0$ a Markov-type estimate yields 
	\[ \pp(S_{N_m}^{<} > \epsilon N_m^2) \leq \frac{\ee[|\mathfrak{b}|^2 \mathbbm{1}_{|\mathfrak{b}| \leq K_m} ]}{\epsilon N_m}. \]	
	We also have 
	\[ \pp(S_{N_m}^{>} \neq 0) \leq N_m \pp(|\mathfrak{b}| > K_m)  \]
	So, by a Borel-Cantelli argument the assertion follows if we can choose $K_m$ such that 
	\[ \sum_{m=1}^{\infty} \frac{\ee[|\mathfrak{b}|^2 \mathbbm{1}_{|\mathfrak{b}| \leq K_m} ]}{ N_m} + \sum_{m=1}^{\infty} N_m \pp(|\mathfrak{b}| > K_m) < \infty.    \]
	We claim that this can be accomplished by setting $N_m = K_m$. We note that the second sum is finite as $\mathfrak{b}$ is absolutely integrable.
	On the other hand,
	\[ \sum_{m=1}^{\infty} \frac{\ee[|\mathfrak{b}|^2 \mathbbm{1}_{|\mathfrak{b}| \leq N_m} ]}{ N_m} \leq 2 \sum_{m = 1}^{\infty} N_m^2 \pp(|\mathfrak{b}| \geq N_m) \sum_{k \geq m} N_m^{-1} \leq 4 \sum_{m = 1}^{\infty} N_m \pp(|\mathfrak{b}| \geq N_m) < \infty,  \]
	where the first inequality is a consequence of the layer-cake representation and the last bound is again a consequence of $\mathfrak{b}$ being absolutely integrable.
\end{proof}

\subsection{Proof of Lemma~\ref{lem:nonhgrem}}

Let us first recall Lemma~\ref{lem:nonhgrem}:
\begin{lemma}[= Lemma~\ref{lem:nonhgrem} ]
	Let $X$ be a Gaussian vector of nonhierarchical GREM type. Then, there exists a chain $S_0$ such that for any chain $S$ the pointwise estimate 
	\begin{equation}
	\bar{A}_{S}(x) \leq \bar{A}_{S_0}(x)
	\end{equation}
	holds true, where $\bar{A}_{S}$ and $\bar{A}_{S_0}$ are the concave hulls of the ordinary GREM vectors assigned to $S$ and $S_0$ respectively. Moreover, we have for any $\beta \geq 0$
	\begin{equation}
	\Phi(\beta) = \Phi(\beta,S).
	\end{equation}
\end{lemma}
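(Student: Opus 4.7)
The plan is to construct a chain $S_0$ whose concave hull $\bar{A}_{S_0}$ pointwise dominates every other $\bar{A}_S$, then deduce the pressure identity from a Slepian-type Gaussian comparison.

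\textbf{Construction of $S_0$ and dominance of the concave hull.} I would build $S_0$ by a greedy algorithm: setting $D_0^\star := \emptyset$, I recursively choose $D_i^\star \supsetneq D_{i-1}^\star$ as a maximizer (with ties broken by maximizing $|D \setminus D_{i-1}^\star|$) of the incremental slope
\[
 \gamma_i(D) := \frac{\sum_{E \subseteq D,\, E \not\subseteq D_{i-1}^\star} a_E}{\sum_{j \in D \setminus D_{i-1}^\star} L_j},
\]
terminating at $D_m^\star = \{1,\ldots,n\}$ and completing $S_0$ to a chain of length $n$ by arbitrary single-element extensions between successive pivots. Greedy optimality forces $\gamma_1 > \gamma_2 > \cdots > \gamma_m$, so $\bar{A}_{S_0}$ is piecewise linear with corners $(y_{D_i^\star}, w_{D_i^\star})$, where $y_D := \sum_{j \in D} L_j$ and $w_D := \sum_{E \subseteq D} a_E$. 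The crucial combinatorial claim is that $w_D \leq \bar{A}_{S_0}(y_D)$ for every $D \in \mathcal{P}$. Granted this, the jumps of any chain's step function $A_S$ occur precisely at points $(y_{A_k^S}, w_{A_k^S})$ that lie below the concave function $\bar{A}_{S_0}$, so $\bar{A}_S \leq \bar{A}_{S_0}$. I would prove the claim by strong induction: if $D \supseteq D_{k-1}^\star$ (with $k$ chosen so $y_D \in [y_{D_{k-1}^\star}, y_{D_k^\star}]$), the stage-$k$ greedy condition gives the bound directly; otherwise, the supermodular inequality $w_D \leq w_{D \cap D_{k-1}^\star} + (w_{D \cup D_{k-1}^\star} - w_{D_{k-1}^\star})$ reduces the problem to the nested set $D \cup D_{k-1}^\star$ and the smaller set $D \cap D_{k-1}^\star$, upon which the inductive hypothesis combined with monotonicity of $\bar{A}_{S_0}(x) - \gamma_k x$ on $[0, y_{D_{k-1}^\star}]$ (itself a consequence of $\gamma_1 > \cdots > \gamma_k$) yields the desired inequality.

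\textbf{Slepian comparison and conclusion.} Since the classical pressure $\Phi(\beta, S)$ of a chain-GREM depends only on the concave hull of its distribution function via~\eqref{eq:partfren} (or~\eqref{eq:crem}), I may pass to the concave hulls on both sides without altering the pressures. The pointwise bound $\bar{A}_S \leq \bar{A}_{S_0}$ then translates into a pointwise comparison of the induced Gaussian covariances $\mathbb{E}\,[X^S(\pmb{\sigma}) X^S(\pmb{\sigma}')] \leq \mathbb{E}\,[X^{S_0}(\pmb{\sigma}) X^{S_0}(\pmb{\sigma}')]$, with identical variances on the diagonal because $\bar{A}_S(1) = \bar{A}_{S_0}(1) = 1$. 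A standard Slepian/Guerra--Toninelli Gaussian interpolation---of the type already invoked in the proof of Lemma~\ref{lem:x=y}---then yields $\Phi(\beta, S_0) \leq \Phi(\beta, S)$ for every chain $S$ and every $\beta \geq 0$. Combined with the Bolthausen--Kistler identity $\Phi(\beta) = \min_{S \in \mathcal{C}} \Phi(\beta, S)$, this forces $\Phi(\beta) = \Phi(\beta, S_0)$, as required.

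\textbf{Main obstacle.} The chief difficulty is the supermodular induction establishing $w_D \leq \bar{A}_{S_0}(y_D)$ for subsets $D$ that fail to contain the current pivot $D_{k-1}^\star$. In particular, one must control the case in which $D \cup D_{k-1}^\star$ overshoots the next pivot $y_{D_k^\star}$, which requires either a secondary induction on the pivot index or a telescoping chain of greedy bounds across multiple stages. The tie-breaking rule in the greedy construction is essential to exclude degenerate maximizers and preserve the strict decrease of the slopes throughout the argument.
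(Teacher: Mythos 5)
Your proposal is correct and follows essentially the same route as the paper: a greedy maximal-slope construction of the chain $S_0$, pointwise dominance $\bar{A}_S \leq \bar{A}_{S_0}$ of the concave hulls, and then Slepian's comparison combined with the Bolthausen--Kistler identity $\Phi(\beta)=\min_{S}\Phi(\beta,S)$. The only differences are cosmetic: you maximize incremental (conditional) slopes where the paper maximizes absolute slopes and then passes to unions, and you spell out via supermodularity the pointwise-maximality step that the paper asserts ``by construction'' with reference to Remark~7 of \cite{BK06}; your flagged obstacle is indeed closed by the concavity inequality $f(u)+f(v)-f(t)\leq f(u+v-t)$ for $u\leq t\leq v$ applied to the piecewise linear hull.
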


\begin{proof}
	For any $\emptyset \neq J \subset \{1,\ldots,n   \}$ we define the corresponding slope $\gamma_J$,
	\[ \gamma_J \coloneqq \frac{\tilde{a}_J}{L_J} \coloneqq \frac{\sum_{I \subset J} a_I}{\sum_{k \in J} L_k}. \]
	We now construct a (possibly incomplete) chain $ J_1 \subset J_2 \subset \cdots J_m = \{1,\ldots,n\}  $
	as follows. We first pick a subset $J_1$ with maximal slope $\gamma_{J_1}$. If $J_1 = \{1,\ldots,n\}$, we are done. Otherwise we pick a subset $J_2$ such that 
	\[ \gamma_{J_2} = \max_{I \subset \{1,\ldots,n\}; I \not\subset J_1} a_I.   \]  
	One easily checks that $\gamma_{J_2} \leq \gamma_{J_1 \cup J_2}$, so we may assume that $J_1 \subset J_2$. We stop if $J_2 =  \{1,\ldots,n\}$ and continue the procedure otherwise. After at most $n$ steps we arrive at a (possibly incomplete) chain as claimed. We set $S_0$ to be a completion of $J_1,\ldots,J_m$, that is, $S_0$ is a chain which contains $J_1,\ldots,J_m$. Clearly, $S_0$ does not depend on $\beta$. 
	
	Both assertions follow now easily. We see that the concave hull  $\bar{A}_{S_0}$ assigned to $ S_0 $ is the unique piecewise linear function satisfying $\bar{A}_{S_0}(L_{J_k}) = \tilde{a}_{J_k}$ for any $k$. By construction, $\bar{A}_{S_0}$ is pointwise maximal as we iteratively pick the subset $J_k$ leading to the maximal mean slope. On the other hand, the bound $\bar{A}_{S_0} \geq \bar{A}_{S}$ for any chain $S$ 
	yields by Slepian's lemma
	$ \Phi(\beta,S_0) \leq \Phi(\beta,S)  $
	from which the second statement follows.
\end{proof}

 \minisec{Acknowledgments}
This work was supported by the DFG under EXC-2111 -- 390814868.

	\bigskip
	\bigskip
	\begin{minipage}{0.5\linewidth}
		\noindent Chokri Manai and Simone Warzel\\
		MCQST \& Zentrum Mathematik \\
		Technische Universit\"{a}t M\"{u}nchen
	\end{minipage}

\end{document}